\newcommand{\kP}{$k$-class}
\newcommand{\kPs}{$k$-classes}
\newcommand{\cM}{\mathcal{M}}
\newcommand{\psize}{N^*}
\newcommand{\R}{\mathbb{R}}
\newcommand{\cP}{\mathcal{P}}
\newcommand{\Gap}{\operatorname{Gap}}
\newcommand{\Par}[1]{#1^{\|}}
\renewcommand{\a}{x_{12}}
\renewcommand{\b}{x_{11}}
\renewcommand{\c}{x_{22}}
\renewcommand{\d}{x_{21}}
\newcommand{\A}{A}
\newcommand{\C}{C}
\newcommand{\transmat}{P}
\newcommand{\Phat}{\hat{\transmat}}
\newcommand{\Ptilde}{\tilde{\transmat}}
\newcommand{\Pbar}{\bar{\transmat}}
\newcommand{\Ahat}{\hat{\A}}
\newcommand{\Atilde}{\tilde{\A}}
\newcommand{\Abar}{\bar{\A}}
\newcommand{\pihat}{\hat{\pi}}
\newcommand{\pitilde}{\tilde{\pi}}
\newcommand{\pibar}{\bar{\pi}}
\newcommand{\zhat}{\hat{Z}}
\newcommand{\ztilde}{\tilde{Z}}
\newcommand{\xhat}{\hat{x}}
\newcommand{\xtilde}{\tilde{x}}
\newcommand{\lambdatilde}{\mu}
\newcommand{\lambdabar}{\bar{\lambda}}
\newcommand{\deltahpeh}{\hat{\delta}_1}
\newcommand{\deltahpet}{\hat{\delta}_2}%
\newcommand{\deltahpah}{\hat{\delta}_3}%
\newcommand{\deltahpat}{\hat{\delta}_4}%
\newcommand{\deltatpet}{\tilde{\delta}_1}%
\newcommand{\gammahat}{\hat{\gamma}}
\newcommand{\gammatilde}{\tilde{\gamma}}
\newcommand{\gammabar}{\bar{\gamma}}
\newcommand{\eigenvec}{v}
\newcommand{\vtilde}{\eigenvec}
\newcommand{\numrests}{r}
\newcommand{\rhat}{\hat{\numrests}}
\newcommand{\rtilde}{\tilde{\numrests}}
\newcommand{\Omegahat}{\hat{\Omega}}
\newcommand{\Omegatilde}{\tilde{\Omega}}
\newcommand{\elemone}{\sigma}
\newcommand{\elemtwo}{\tau}
\newcommand{\bigx}{z}
\newcommand{\m}{\cM}
\newcommand{\gmin}{\gminhat}
\newcommand{\gminhat}{\gammahat_{\min}}
\newcommand{\gmintilde}{\gammatilde_{\min}}
\newcommand{\Perp}[1]{#1^{\perp}}
\newcommand{\p}{\mathcal{P}}
\newcommand{\tk}{\tau_{\text{T}^*}} 
\newcommand{\ttk}{\tau_{\text{T}}}
\newcommand{\mn}{\cM_{\text{n}}}
\newcommand{\mk}{\cM_{\text{p}}} 
\newcommand{\mtk}{\cM_{\text{T}}}
\newcommand{\me}{\cM_{\text{ex}}}
\newcommand{\comment}[1]{}
\newcommand{\condcomment}[2]{\ifthenelse{#1}{#2}{}}
\theoremstyle{plain}
\newtheorem{theorem}{Theorem}
\newtheorem{Proposition}{Proposition}
\newtheorem{Definition}{Definition}
\newtheorem{Lemma}[theorem]{Lemma}
\newtheorem{Corollary}[theorem]{Corollary}
\newtheorem{Remark}[theorem]{Remark}
\newenvironment{customthm}[1]{\innercustomthm}{\endinnercustomthm}
\title{Iterated Decomposition of Biased Permutations Via New Bounds on the Spectral Gap of Markov chains}
\author{
  Sarah Miracle
    \thanks{Computer and Information Sciences, University of St.\ Thomas, St.\ Paul, MN 55105; {\tt sarah.miracle@stthomas.edu}.}
 \and Amanda Pascoe Streib
   \thanks{Center for Computing Sciences, Bowie, MD 20715-4300; {\tt
       ampasco@super.org}.}
   \and Noah Streib
      \thanks{Center for Computing Sciences, Bowie, MD 20715-4300; {\tt
       nsstrei@super.org}.}
}
\begin{document}
\date{}
\maketitle

{\color{red}


}
\thispagestyle{empty}
\begin{abstract}

The spectral gap of a Markov chain can be bounded by the spectral gaps of constituent ``restriction" chains and a ``projection" chain, and the strength of such a bound is the content of various decomposition theorems.  In this paper, we introduce a new parameter that allows us to improve upon these bounds.  We further define a notion of orthogonality between the restriction chains and ``complementary'' restriction chains.  This leads to a new Complementary Decomposition theorem, which does not require analyzing the projection chain.  For $\epsilon$-orthogonal chains, this theorem may be iterated $O(1/\epsilon)$ times while only giving away a constant multiplicative factor on the overall spectral gap.
As an application, we provide a $1/n$-orthogonal decomposition of the nearest neighbor Markov chain over $k$-class biased monotone permutations on $[n]$, as long as the number of particles in each class is at least $C \log n$.   This allows us to apply the Complementary Decomposition theorem iteratively $n$ times to prove the first polynomial bound on the spectral gap when $k$ is as large as $\Theta(n/\log n)$. The previous best known bound assumed $k$ was at most a constant.


\end{abstract}

\newpage
\setcounter{page}{1}







\section{Introduction} 

The decomposition method for Markov chains allows one to bound the 
spectral gap of a Markov chain in terms of the spectral gaps of 
constituent (hopefully simpler) Markov chains.  The method was first 
introduced by Madras and Randall~\cite{madr}, and has been subsequently 
used and modified to produce the first polynomial time bounds on the 
spectral gaps of many interesting Markov chains~\cite{CDF00,Dest03,Ding2010, EMT18, GHS08,HW16, 
jstv04,mh15,MR00,mrs,PS17, R01}. In 
this paper, we consider a disjoint decomposition in the style 
of~\cite{MR00}.  We assume throughout the paper that $\m$ is a finite, ergodic Markov chain that is reversible with respect to the distribution $\pi$.
Suppose $\Omega=\cup_{i=1}^{\rhat}\Omegahat_i$ is a 
partition of the state space of a Markov chain $\m$, and let 
$\gammahat_i$ be the spectral gap of $\m$ restricted to $\Omegahat_i$.  
The decomposition theorem of~\cite{MR00} states that the spectral gap 
$\gamma$ of $\m$ satisfies $\gamma \geq \frac{1}{2}\gmin\gammabar$, 
where $\gmin=\min_i \gammahat_i$ and $\gammabar$ is the spectral gap of 
a certain \emph{projection} chain over states $\{1,2,\ldots, \rhat\}$. 


There has been 
significant effort towards improving the decomposition technique by 
providing stronger bounds in special 
cases~\cite{Dest03,EMT18,GHS08,jstv04,mh15,MR00,PS17,R01}.
While $\gamma$ may indeed be on the order of 
$\gmin\gammabar$---one example is the random walk on the 
path graph of length $n$, decomposed into two smaller path graphs---there are instances in which it may instead satisfy the much 
larger bound 
$\gamma \geq c \min\{\gmin, \gammabar\}, $
for some constant $c$. The simplest such example is
the direct product of two 
independent Markov chains~\cite{bmrs,EMT18}; in this case, $c=1$. 
Jerrum, Son, Tetali, and Vigoda~\cite{jstv04} considered two
related quantities: the Poincar\'{e} and log-Sobolev constants.  They 
introduced 
a parameter $T=\max_i\max_{\elemone\in\Omegahat_i}\sum_{\elemtwo\in 
\Omega\setminus\Omegahat_i}\transmat(\elemone,\elemtwo)$, which can be 
seen as the maximum probability of escape from one part of the partition in a 
single step of $\transmat$. 
They produced a bound on the order of the minimum gap 
when $T$ is on the order of $\gammabar$.  They also provided improved
bounds when another parameter $\eta$ is close to zero; this requires a pointwise regularity condition.
Destainville~\cite{Dest03} introduced a ``multi-decomposition" scheme that uses $m$ different partitions of $\Omega$. 
The bound obtained depends on the norm of a ``multi-projection" operator $\Pi$.  
More recently, Pillai and Smith~\cite{PS17} introduced other
conditions in order to directly bound the mixing time by a constant times the maximum of the mixing times of the projection and the restrictions.


Tight bounds are especially important when applying the decomposition method 
iteratively.  For example, we consider as an application a certain transposition chain $\mk$ over biased permutations in $S_n$, where there is a natural way 
to decompose the space iteratively $n$ times.  At each level of the induction, $\gammabar=\Theta(n^{-2})$, so the 
original bound of~\cite{MR00} yields $\gamma=\Omega(n^{-2n})$ for the final iteration.  
Even a bound of the form $\gamma \geq c \min\{\gmin, \gammabar\}$ such as the one in~\cite{PS17} introduces a factor of $c$ for each application, and  
yields a bound that is an inverse exponential in $n$.
The bounds in~\cite{jstv04} are iterable in some cases, but $\mk$ does not satisfy those conditions; in fact, the parameter $\eta$ is exponential in $n$ in this case.
Destainville's bounds~\cite{Dest03} are iterable as well, but can be challenging to apply. 

In this paper, we 
produce a set of techniques that allow us to get iterable bounds for a
more general class of decomposable Markov chains.    
Our first decomposition theorem requires a new parameter $\deltahpet$
(defined in Section~\ref{sec:decomp}); the function below is minimized when $\deltahpet$ is minimized.


\begin{theorem}\label{thm:main}
  Let $\rho=\sqrt{(1-\deltahpet)/\gammabar}.$ Then   $\Gap(\m)\geq \displaystyle\min_{p^2+q^2=1}
\gmin q^2+\gammabar\left(q\rho - p\right)^2.$
\end{theorem}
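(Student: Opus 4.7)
The plan is to apply the variational characterization $\Gap(\m) = \inf_f \mathcal{E}(f,f)/\mathrm{Var}_\pi(f)$ over non-constant $f : \Omega \to \R$, and to lower bound the ratio by $\gmin q^2 + \gammabar(q\rho - p)^2$ for a natural parametrization of $f$ by $(p,q)$ with $p^2 + q^2 = 1$. Normalize $\mathrm{Var}_\pi(f) = 1$ and decompose $f = \bar f + g$, where $\bar f$ is the piecewise-constant projection of $f$ (equal to the $\pi$-average of $f$ on each block $\Omegahat_i$) and $g := f - \bar f$ has mean zero within every block. The standard orthogonal decomposition of variance then gives $\mathrm{Var}_\pi(\bar f) + \sum_i \pi(\Omegahat_i)\,\mathrm{Var}_{\pi_i}(f|_{\Omegahat_i}) = 1$, and setting $p^2 := \mathrm{Var}_\pi(\bar f)$ and $q^2 := \sum_i \pi(\Omegahat_i)\,\mathrm{Var}_{\pi_i}(f|_{\Omegahat_i})$ produces $p^2 + q^2 = 1$, matching the parametrization in the statement.

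Next I split $\mathcal{E}(f,f) = \mathcal{E}^{\text{in}}(f,f) + \mathcal{E}^{\text{out}}(f,f)$ according to whether a transition $x \to y$ remains within a single block or crosses between distinct blocks. The within-block piece is straightforward: since each restriction chain has spectral gap at least $\gmin$ and the within-block increments of $f$ coincide with those of $g$, summing over blocks yields $\mathcal{E}^{\text{in}}(f,f) \geq \gmin q^2$.

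For the cross-block piece I expand $(f(x) - f(y))^2 = (\bar f(x) - \bar f(y))^2 + 2(\bar f(x) - \bar f(y))(g(x) - g(y)) + (g(x) - g(y))^2$ on each cross-block edge. Summing weighted by $\pi(x)\transmat(x,y)$, the first term equals (up to the standard lumping identity) the Dirichlet form of the projection chain applied to $\bar f$, which is at least $\gammabar\,\mathrm{Var}_\pi(\bar f) = \gammabar\, p^2$. The last term is nonnegative. The signed cross term is controlled by Cauchy--Schwarz; the parameter $\deltahpet$ of Section~\ref{sec:decomp} is designed precisely so that the aggregate $g$-contribution across block boundaries is at most $(1 - \deltahpet)\, q^2$. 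Cauchy--Schwarz on the cross term then produces a lower bound of $\gammabar p^2 - 2\sqrt{\gammabar(1-\deltahpet)}\,pq + (1-\deltahpet)q^2 = \gammabar(q\rho - p)^2$ for $\mathcal{E}^{\text{out}}(f,f)$, with $\rho = \sqrt{(1-\deltahpet)/\gammabar}$ appearing exactly as in the theorem.

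Adding the two pieces gives $\mathcal{E}(f,f) \geq \gmin q^2 + \gammabar(q\rho - p)^2$; infimizing over $f$ is bounded below by minimizing the right side over all $(p, q)$ with $p^2 + q^2 = 1$, which is the claimed bound. The main obstacle will be the cross-term Cauchy--Schwarz step: the definition of $\deltahpet$ must supply exactly the aggregate bound on the $g$-component across block boundaries needed to collapse the three expanded pieces into the single square $\gammabar(q\rho - p)^2$, rather than a weaker ``triangle inequality'' estimate that would throw away the sign and lose the very improvement we are trying to extract. Once that estimate is in place, the rest is routine linear algebra.
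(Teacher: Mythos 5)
Your proposal is essentially the paper's proof, recast in the dual (Dirichlet form) language: the paper works with $x\perp\sqrt{\pi}$, the block-projection $\Par{\xhat}$, and the symmetrized matrix $\A$, while you work with $f$, the block-averaged $\bar f$, and $\mathcal E(f,f)$; the two viewpoints are isometric via $x(\sigma)=f(\sigma)\sqrt{\pi(\sigma)}$. Your $\bar f,\ g$ match $\Par{\xhat},\ \Perp{\xhat}$, your $p^2,q^2$ match $\|\Par{\xhat}\|^2,\|\Perp{\xhat}\|^2$, your split into within-block and cross-block Dirichlet forms matches the split $\A=\Ahat+\Atilde-I$, and your $\gmin q^2$ bound on the within-block piece is Lemma~\ref{lem:eta}(1). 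The identity (not inequality) $\mathcal E^{\text{out}}(g,g)=(1-\deltahpet)q^2$ is exactly the definition of $\deltahpet$ transported to Dirichlet forms, and your Cauchy--Schwarz on the cross term is Lemma~\ref{lem:mixed}.

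There is, however, a gap in the last step. What the expansion plus Cauchy--Schwarz actually delivers is
\[
\mathcal E^{\text{out}}(f,f)\ \ge\ \Bigl(\sqrt{\mathcal E^{\text{out}}(\bar f,\bar f)}\ -\ q\sqrt{1-\deltahpet}\Bigr)^2,
\]
and in the paper's notation $\mathcal E^{\text{out}}(\bar f,\bar f)=(1-\deltahpat)p^2$. You then want to substitute the weaker bound $\mathcal E^{\text{out}}(\bar f,\bar f)\ge \gammabar p^2$ to obtain $\bigl(p\sqrt{\gammabar}-q\sqrt{1-\deltahpet}\bigr)^2$, but $t\mapsto (t-c)^2$ is not monotone in $t$, so you cannot replace $\sqrt{\mathcal E^{\text{out}}(\bar f,\bar f)}$ by the smaller $p\sqrt{\gammabar}$ inside the square without justification; when $\sqrt{\mathcal E^{\text{out}}(\bar f,\bar f)}$ lies between $p\sqrt{\gammabar}$ and $q\sqrt{1-\deltahpet}$ the inequality can go the wrong way. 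This is precisely what the paper handles separately in Section~\ref{sec:minimized}: it first establishes the intermediate bound
\[
\Gap(\m)\ \ge\ \min_{p^2+q^2=1}\ (1-\deltahpeh)q^2+\bigl(\sqrt{1-\deltahpet}\,q-\sqrt{1-\deltahpat}\,p\bigr)^2
\]
(this is Equation~\ref{eq:main}), and then proves by a short calculus argument that the minimum over $p^2+q^2=1$ of $\gamma q^2+(sq-rp)^2$ is \emph{increasing} in $r$ (and in $\gamma$, decreasing in $s$), because at the optimizing $(p^*,q^*)$ one always has $p^*/q^*\ge s/r$, i.e.\ the signed quantity $rp^*-sq^*$ is nonnegative. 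Only with this monotonicity in hand may one legally replace $1-\deltahpeh$ by $\gmin$ and $1-\deltahpat$ by $\gammabar$ to get the stated bound. Your proposal identifies the cross-term step as the obstacle but misdiagnoses it: the aggregate bound $(1-\deltahpet)q^2$ is an exact identity and poses no difficulty; the real work is this final minimization/monotonicity argument, which your proof as written does not supply.
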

\noindent
In many cases this bound already improves upon what is known.
 When $\deltahpet=1$, we get
$\Gap(\m)=\min\{\gmin, \gammabar\}$. 
If 
$\m$ is lazy then $\deltahpet\geq 0$ and we get $\gamma\geq \gmin\gammabar/3$ (see
Section~\ref{sec:compare}).  
Moreover, we will show $\deltahpet \geq 1-2T$, so
Theorem~\ref{thm:main} can be seen as a generalization of Theorem~1
of~\cite{jstv04}, except that it instead bounds the spectral gap.  In fact, in Corollary~\ref{cor:compareT} we reprove that result. 
In particular, if $T/\gammabar$ is a constant, then
we get within a constant of the minimum gap as well.

On the way toward proving Theorem~\ref{thm:main}, we derive Theorem~\ref{thm:primaldual}, which achieves a tight bound on $\gamma$ 
without appealing to a bound on the projection chain.  Let $\transmat$ 
be the transition matrix of $\m$, and define the \emph{complementary} 
restrictions $ \Ptilde_1, \Ptilde_2, \ldots,\Ptilde_{\rtilde}$
to contain the transitions of $\transmat$ that are not in any restriction 
$\Phat_{i}$, and further define 
$\gmintilde$ analogously to $\gminhat$\footnote{If some restriction or complementary restriction has a single element, its spectral gap is taken to be $1$.}.  
We will define $\Perp{\xhat}$ and $\Perp{\xtilde}$ to be orthogonal projections of a vector $x$ onto the complement of the eigenspace of the top eigenvectors of certain matrices containing the $P_i$'s and $\Ptilde_j$'s, respectively (see Section~\ref{sec:tight}). Then we prove:
\begin{theorem}\label{thm:primaldual}
$\Gap(\m)\geq \min_{x\perp \sqrt{\pi}, \|x\|=1}  
\gminhat\|\Perp{\xhat}\|^2+\gmintilde\|\Perp{\xtilde}\|^2.$
\end{theorem}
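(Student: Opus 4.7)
The plan is to prove Theorem~\ref{thm:primaldual} through the Dirichlet form characterization of the spectral gap. Substituting $f = D_\pi^{-1/2} x$, where $D_\pi = \diag(\pi)$, is an isometry that converts the variational problem on the right-hand side into the following Poincar\'e-type inequality: for every $f \perp \mathbf{1}$ in $L^2(\pi)$,
$$\mathcal{E}_P(f,f) \geq \gminhat \, \|(I - \hat\Pi) f\|_\pi^2 + \gmintilde \, \|(I - \tilde\Pi) f\|_\pi^2,$$
where $\hat\Pi$ (resp.\ $\tilde\Pi$) is the $L^2(\pi)$-orthogonal projection onto functions constant on each $\Omegahat_i$ (resp.\ on each state set $\Omegatilde_j$ of a complementary restriction). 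Under $D_\pi^{1/2}$, these projections pull back to the Euclidean orthogonal projections onto the top (eigenvalue $1$) eigenspaces of the block-diagonal symmetrized matrices built from the $\Phat_i$'s and the $\Ptilde_j$'s, spanned by the vectors $\sqrt{\pi(\cdot)} \mathbb{1}_{\Omegahat_i}/\sqrt{\pi(\Omegahat_i)}$ (resp.\ with $\Omegatilde_j$); this is the dictionary that identifies $\|(I-\hat\Pi)f\|_\pi^2$ with $\|\Perp{\xhat}\|^2$ and likewise for $\xtilde$.

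Given the reformulation, the crux of the proof is a clean decomposition of the Dirichlet form. Every off-diagonal transition $(u,v)$ of $P$ either has both endpoints inside a single $\Omegahat_i$, in which case it is carried by $\Phat_i$, or both endpoints inside a single $\Omegatilde_j$, in which case it is carried by $\Ptilde_j$. Starting from $\mathcal{E}_P(f,f) = \tfrac12 \sum_{u,v} \pi(u)P(u,v)(f(u)-f(v))^2$ and using $\pi(u) = \pi(\Omegahat_i)\,\pihat_i(u)$ on $\Omegahat_i$ (and the analogous identity for the complementary partition) yields the additive decomposition
$$\mathcal{E}_P(f,f) = \sum_i \pi(\Omegahat_i)\, \mathcal{E}_{\Phat_i}(f|_{\Omegahat_i}, f|_{\Omegahat_i}) + \sum_j \pi(\Omegatilde_j)\, \mathcal{E}_{\Ptilde_j}(f|_{\Omegatilde_j}, f|_{\Omegatilde_j}).$$
Next I would invoke the Poincar\'e inequality $\mathcal{E}_{\Phat_i}(g,g) \geq \gammahat_i\,\mathrm{Var}_{\pihat_i}(g)$ on each restriction and replace $\gammahat_i$ by the smaller $\gminhat$, doing the same for the complementary restrictions with $\gmintilde$. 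Since $\hat\Pi f$ equals the $\pihat_{i(u)}$-mean of $f|_{\Omegahat_{i(u)}}$ at each point $u$, the weighted sum $\sum_i \pi(\Omegahat_i)\,\mathrm{Var}_{\pihat_i}(f|_{\Omegahat_i})$ is exactly $\|f - \hat\Pi f\|_\pi^2$ (the total squared distance to the part-wise means), and likewise for $\tilde\Pi$; this gives the desired inequality.

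The main subtlety I anticipate is bookkeeping around the diagonal: one must define the hold probabilities of each $\Phat_i$ and each $\Ptilde_j$ so that each is a stochastic chain reversible with respect to $\pihat_i$ or $\pitilde_j$, while ensuring that every off-diagonal transition of $P$ is assigned to exactly one restriction or complementary restriction—no double counting, no omission. Once this assignment is in place, the choices of hold probabilities are irrelevant since the Dirichlet forms only see $f(u)-f(v)$ on off-diagonal pairs, so the above identity is exact. The footnote convention that a singleton restriction has gap $1$ is harmless, because any such block contributes zero to its variance term. Everything else is routine manipulation, and the theorem follows by minimizing over $f$ (equivalently $x$).
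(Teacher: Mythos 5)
Your proof is correct, and it takes a genuinely different (and, I think, more transparent) route than the paper's. The paper works entirely in the symmetrized Euclidean picture: it proves the matrix identity $\A = \Ahat+\Atilde-I$ (Proposition~\ref{prop:matdecomp}), plugs it into Equation~\ref{eq:strongestbound} from the proof of Theorem~\ref{thm:main}, and then shows by an eigenbasis computation that $(1-\deltatpet)\|\Perp{\xtilde}\|^2$ absorbs the cross term $-2\langle\Perp{\xhat},\Par{\xhat}\Atilde\rangle$, yielding the exact identity $1-\langle x,x\A\rangle = (1-\deltahpeh)\|\Perp{\xhat}\|^2 + (1-\deltatpet)\|\Perp{\xtilde}\|^2$, after which Lemma~\ref{lem:eta} supplies the two Poincar\'e-type bounds. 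You instead work in $L^2(\pi)$ with the Dirichlet form, and the crucial step is the clean additive decomposition $\mathcal{E}_P = \sum_i \pi(\Omegahat_i)\,\mathcal{E}_{\Phat_i} + \sum_j \pi(\Omegatilde_j)\,\mathcal{E}_{\Ptilde_j}$, which is simply the statement that every off-diagonal edge of $P$ lies in exactly one $\Phat_i$ or one $\Ptilde_j$. This is exactly the probabilistic/analytic avatar of $\A=\Ahat+\Atilde-I$, but it makes the cross term disappear automatically (there never is one), so you don't need the $\delta$'s, Lemma~\ref{lem:xparA}, or the eigenbasis bookkeeping of Lemma~\ref{lem:mixed}; you just apply Poincar\'e block by block and sum. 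What the paper's route buys is the sharper \emph{equality} form (Equation~\ref{eq:primaldual} in the appendix), which makes $1-\deltahpeh$ and $1-\deltatpet$ visible as the effective gaps entering the bound; your route recovers that equality too if you stop before replacing $\gammahat_i,\gammatilde_j$ by their minima, but you stated only the inequality. What your route buys is economy: it is self-contained, does not borrow Equation~\ref{eq:strongestbound} from the proof of Theorem~\ref{thm:main}, and makes the no-double-counting point explicit. The dictionary between $\|f-\hat\Pi f\|_\pi^2$ and $\|\Perp{\xhat}\|^2$ under $D_\pi^{1/2}$ that you invoke is correct, since $D_\pi^{1/2}\mathbb{1}_{\Omegahat_i}$ is a scalar multiple of $e_i\otimes\sqrt{\pihat_i}$.
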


\noindent This theorem turns out to be very similar to a special case of the main result in~\cite{Dest03}, where $\|\Perp{\xhat}\|^2+\|\Perp{\xtilde}\|^2$ is replaced by a function of the norm of $\Pi$.  Bounding these norms is essential to making these results useful, since unfortunately, 
the Markov chain $\m$ can require exponential time to mix even if all 
of the restrictions and complementary restrictions are polynomially 
mixing\footnote{Indeed, the introduction of the projection chain 
in~\cite{madr} was a key insight to the original decomposition theorem.}.

Unfortunately, bounding these norms can be challenging.  Destainville~\cite{Dest03} bounds the norm of the projection $\Pi$ by the spectral gap of a smaller matrix $\bar{\Pi}$.
In some cases, this gap can be analyzed directly, or even computationally for particular problem instances.  
We suspect one reason this result has not been applied more is that it is not immediately clear how to analyze this gap for more complex distributions.  Moreover, it is not clear under what circumstances this result can provide an advantage.  In this paper, we introduce a new technique for bounding the norms of these projections that appears to be more flexible and easier to analyze.  It is based on a notion of orthogonality between the top eigenvectors of the 
restrictions and complementary restrictions.  The intuition is that $\epsilon$-orthogonality implies that if a distribution is far from stationarity then it will either be far from stationarity on some restriction or on some complementary restriction.
This approach is particularly useful when the chain decomposes into pieces that are nearly independent.

\begin{theorem}\label{thm:indep}
If $\Phat_1, \Phat_2, \ldots,\Phat_{\rhat}$ and 
$\Ptilde_1, \Ptilde_2, \ldots, \Ptilde_{\rtilde}$
is an $\epsilon$-orthogonal decomposition of $\m$, then
\[ \Gap(\m) \geq \min\{\gminhat, \gmintilde\}\left(1 -\epsilon\right)^2.\]
\end{theorem}
\noindent This bound can be iterated $t$ times with 
only a constant overhead, as long as $\epsilon\leq 1/t.$  

As in~\cite{Dest03}, analysis of
$\epsilon$-orthogonality requires only a comparison between stationary
distributions and not an analysis of the dynamics. 
Define $\Omegatilde_j$ to be the state space of the Markov chain with transition matrix $\Ptilde_j$.
In order to bound $\epsilon$, we define a function
$r(i,j)=\pi(\Omegahat_i\cap\Omegatilde_j)/(\pi(\Omegahat_i)\pi(\Omegatilde_j))$
that indicates the relative independence of the restriction
$\Omegahat_i$ from $\Omegatilde_j$.  When $r(i,j)=1$ for all $i$ and $j$, then
$\epsilon=0$.  However, we do not require a strong pointwise bound on
$r(i,j)$, but instead a bound on its average value (see
Section~\ref{sec:indep} for details).  It is possible to prove
$\epsilon$ is very small even if $r(i,j)$ is far from 1 for
pathological pairs $i$ and $j$, as long as it is close to 1 for all but an inverse
polynomial fraction of the state space.  Importantly, this
holds even though the elements in this ``bad'' space are visited
polynomially often.  Indeed, for 
our application to permutations, where $r(i,j)$ can be
exponentially large or small for an inverse polynomial fraction of the
space, we still bound $\epsilon$ by $1/n$
and apply Theorem~\ref{thm:indep} iteratively $n$ times.  



\subsection{Techniques}

Our proofs are elementary and use only basic facts about eigenvalues 
and eigenvectors from linear algebra.  We have chosen to assume the 
Markov chains are discrete and finite to keep the proofs as accessible 
as possible.
We utilize the following standard characterization of the second 
largest eigenvalue $\lambda$ of a symmetric matrix $A$ with top 
eigenvector $\pi$: 
\begin{equation}\label{eq:eigen} 
\lambda = \max_{x\perp\pi}\frac{\langle x,xA\rangle}{\|x\|^2}
           = \max_{x\perp\pi:\|x\|=1}\langle x,xA\rangle.  
\end{equation} 
\noindent 
For a general reversible Markov chain with transition matrix $\transmat$, 
we apply Equation~\ref{eq:eigen} to a symmetric matrix $\A=\A(\transmat)$ that has the same eigenvalues as $\transmat$.

We apply the Vector Decomposition Method from
the expander graph literature (see, e.g.~\cite{RVW, expander-book}), and
decompose the vector $x$ into $\Perp{\xhat}+\Par{\xhat}$, where $\Par{\xhat}$ is
parallel to the top eigenvector of each restriction matrix.  The intuition of
this method is that if a particular distribution is far from
stationarity, then it will either be far from stationarity on some
part of the partition or on the projection, and therefore applying $\transmat$
brings us closer to stationarity. 
The benefit of this approach is that it allows us to quantify the independence of the restriction chains with the projection chain.  Using Equation~\ref{eq:eigen}, for any $x\perp \pi$, we need to bound 
\begin{equation}\label{eq:split}
\langle x, xA\rangle = \langle \Perp{\xhat},\Perp{\xhat}A\rangle + 
\langle\Par{\xhat},\Par{\xhat}A\rangle + 2\langle \Perp{\xhat},\Par{\xhat}A\rangle.
\end{equation}
It is easy to bound $\langle 
\Perp{\xhat},\Perp{\xhat}A\rangle$ and $\langle\Par{\xhat},\Par{\xhat}A\rangle$ using ideas from other decomposition results~\cite{jstv04, MR00}.  
The key to proving Theorem~\ref{thm:main} is our bound on $\langle \Perp{\xhat},\Par{\xhat}A\rangle$ in terms of $\deltahpet$.

\subsection{Application: Biased Permutations}
In Section~\ref{sec:perm}, we illustrate the power of this technique
by applying it to the biased permutation problem.
We are given a set of input probabilities ${\bf P} = \{p_{i,j}\}$ for all $1 \leq i, j
\leq n$ with $p_{i, j} = 1-p_{j, i}$.  At each step, the
nearest-neighbor transposition chain ${\mn}$
uniformly chooses a
pair of adjacent elements, $i$ and~$j$, and puts~$i$ ahead of~$j$ with
probability $p_{i,j}$, and~$j$ ahead of~$i$ with probability $p_{j,i}$. The Markov chain $\mn$ has been widely
studied~\cite{BBHM05,bmrs,DSC93b, diasha, wilson}; see,
e.g.~\cite{MS} for a review.  We say ${\bf P}$ is \emph{positively biased} if $p_{i,j}\geq
1/2$ for all $i<j$.  Without this condition, it is fairly straightforward to construct
parameter sets for which $\mn$ has exponential mixing time 
(see e.g.,~\cite{bmrs}).  Bhakta et al.~\cite{bmrs} showed that $\mn$ can require
exponential time to mix even for distributions with positive bias. 
Fill~\cite{F03b,F03a} introduced the following monotonicity
conditions:  $p_{i,j} \leq p_{i,j+1}$ and $p_{i,j} \geq p_{i+1,j}$ for
all $1 \leq i < j \leq n.$  Fill conjectured that $\mn$ is rapidly
mixing for all monotone, positively biased distributions and that the smallest spectral gap for $\mn$ 
is given by the uniform $p_{i,j}=1/2$ distribution.  
He confirmed these conjectures for $n\leq 3$ and gave experimental evidence for $n\leq 5$.

Bhakta et al.~\cite{bmrs} identified certain classes of ${\bf P}$
for which $\mn$ is actually a product of independent Markov chains.
Subsequently, two papers analyzed biased \kPs~\cite{HW16,MS}, where there are $k$ classes
of particles and particles from class $i$ and class $j$ interact
with the same probability $p_{i,j}$.  When $k=n$, this is the same as
the original permutation problem.  They also considered a more
powerful Markov chain $\mtk$ that can swap $i$ and $j$ if all elements
between them are smaller than both $i$ and $j$.  For this, it is sufficient to analyze the
$k$-particle process $\mk$, where elements within each particle class are in
fixed positions.  In general, these are not direct products of
independent Markov chains, and both of these papers used the disjoint
decomposition theorem of~\cite{MR00}.  They considered bounded \kPs, where $p_{i,j}/ p_{j,i}\leq q$ for all $i<j$ for some constant $q<1$.   In~\cite{HW16}, Haddadan and
Winkler showed a polynomial time bound on the mixing time when $k=3$
and Miracle and Streib~\cite{MS} generalized this to all constant
$k$. They proved a
bound of $\Omega(n^{-2(k-1)})$ on the spectral gap 
of $\mk$, which after applying the comparison
technique~\cite{RT98} and relating the gap to the mixing time, leads to a
bound of $O(n^{2k+6} \ln k)$ on the mixing time of the 
permutation process. 

Here we assume there are at least $C_q\log n$ particles of each 
type (where $C_q$ is a constant depending on the minimum bias $q$).
We show that at each application of the iterated decomposition given 
in~\cite{MS}, the chains are $1/n$-orthogonal.
Thus, we can apply Theorem~\ref{thm:indep} iteratively to get a bound 
of $\Omega(n^{-2})$ on the spectral gap of the $k$-particle process 
$\mk$.  Thus,  for $k$ as large as $\Theta(n/\log n)$, we obtain nearly optimal bounds on the spectral gap of $\mk$ and the first polynomial time bound on
the mixing time of $\mn$. This is a dramatic 
improvement over the state of the art.

\section{Preliminaries}\label{sec:prelim} 
We first fix some notation and terminology.  For a positive integer 
$n$, we write $[n]$ to mean $\{1,2,\dots,n\}$.  We also write $I_n$ to 
mean the $n \times n$ identity matrix.  The ``top eigenvalue'' of a matrix
is the largest eigenvalue in absolute value, and the ``top eigenvector''
is its corresponding eigenvector.  
The symbol $\otimes$ is used
for tensor product.  We write $(v)_i$ to mean the $i^{\text{th}}$ coordinate
of a vector $v$.  For any Markov chain $\m$, the ``gap'', denoted $\Gap(\m)$,
is the difference of 1 and the second largest eigenvalue of the transition
matrix of $\m$.

Our argument begins with the same setup as in the Disjoint 
Decomposition theorem of~\cite{mr}.  We assume $\m$ is an ergodic
Markov chain over a finite state space $\Omega$ with transition matrix $P$ that is reversible with
respect to the stationary distribution $\pi$; that is, it satisfies
the following \emph{detailed balance} condition: for all $\sigma,\tau\in\Omega$, 
$\pi(\sigma)P(\sigma,\tau)=\pi(\tau)P(\tau,\sigma).$
Let $\Omega = \cup_{i=1}^{\rhat} 
\Omegahat_i$ be a partition of the state space into $\rhat$ disjoint 
pieces.  For each $i \in [\rhat]$, define $\Phat_{i} = 
\transmat(\Omegahat_i)$ as the restriction of $\transmat$ to 
$\Omegahat_i$ which rejects moves that leave $\Omegahat_i$. In 
particular, the restriction to $\Omegahat_i$ is a Markov chain, 
$\hat{\m}_{i}$, with state space $\Omegahat_i$, where the transition 
matrix $\Phat_{i}$ is defined as follows: if $\elemone \not= \elemtwo$ 
and $\elemone,\elemtwo \in \Omegahat_i$ then 
$\Phat_i(\elemone,\elemtwo)=\transmat(\elemone,\elemtwo)$; if $\elemone 
\in \Omegahat_i$ then $\Phat_i(\elemone,\elemone)=1-\sum_{\elemtwo \in 
\Omegahat_i, \elemtwo \not= \elemone} \Phat_i(\elemone,\elemtwo)$. We 
call the $\hat{\m}_{i}$ \emph{restriction} chains.  Let $\pihat_i$ be 
the normalized restriction of $\pi$ to $\Omegahat_i$; i.e. 
$\pihat_i(S)=\pi(S\cap\Omegahat_i)/\pi(\Omegahat_i)$ for any $S 
\subseteq \Omega$.  The second largest eigenvalue of $\Phat_i$ will be
denoted $\lambda_i$, and $\lambda_{\max} = \max_i \lambda_i$.

Define $\Pbar$ to be the aggregated 
transition matrix on the state space $[\rhat]$ defined by 
$\Pbar(i,j)=\pi(\Omegahat_i)^{-1}\sum_{{\elemone\in\Omegahat_i,}{\elemtwo
\in \Omegahat_j}} \pi(\elemone)\transmat(\elemone,\elemtwo)$.  Then 
$\Pbar$ is the transition matrix of a reversible Markov chain $\bar{\m}$
with respect to the measure $\pibar$ defined by $\pibar(i) := 
\pi(\Omegahat_i)$.  We call $\bar{\m}$ the \emph{projection} chain.



It is useful to decompose the matrix $\transmat$ into the part that 
performs restriction moves and the part that performs all other moves. 
Define $\Phat$ as the block diagonal $|\Omega|\times |\Omega|$ matrix 
with the $\Phat_i$ matrices along the diagonal; i.e. $\Phat$ is 
obtained from $\transmat$ by rejecting moves between different parts of 
the partition.  Define $\Ptilde$ to be the transition matrix of the 
Markov chain defined by rejecting moves from $\elemone$ to $\elemtwo$ 
if $\elemone$ and $\elemtwo$ are within the same $\Omegahat_i$.  
Then $(\Phat+\Ptilde)(\elemone,\elemtwo)=\transmat(\elemone,\elemtwo)$ 
unless $\elemone=\elemtwo$, and 
$(\Phat+\Ptilde)(\elemone,\elemone)=\transmat(\elemone,\elemone)+1,$ 
since each move of $\transmat$ gets rejected in exactly one of $\Phat$ 
or $\Ptilde$ (and of course the probability of transitioning from a 
state is 1).  Therefore, we have $\transmat = 
\Phat+\Ptilde-I_{|\Omega|}$.

Note that for any pair $\elemone,\elemtwo \in \Omega$, the transitions 
$(\elemone,\elemtwo)$ and $(\elemtwo,\elemone)$ are either both nonzero 
in $\Ptilde$ or both zero in $\Ptilde$. Thus $\Ptilde$ is itself the 
disjoint union of a set of ergodic, reversible Markov chains $\Ptilde_1, \Ptilde_2, 
\dots, \Ptilde_{\rtilde}$ on state spaces $\Omegatilde_1, 
\Omegatilde_2, \dots, \Omegatilde_{\rtilde}$.  We call these chains 
\emph{complementary restrictions}.


In order to prove our decomposition results, we wish to apply 
Equation~\ref{eq:eigen} to $\transmat$.  However, since $\transmat$ may 
not be symmetric, we appeal to the following symmetrization technique 
that appears in~\cite[p. 153]{LPW06}. 
Given $\transmat$ with stationary distribution $\pi$, 
define a matrix $\A := \A(\transmat)$ by $\A(\elemone,\elemtwo) 
:=\pi(\elemone)^{1/2}\pi(\elemtwo)^{-1/2}\transmat(\elemone,\elemtwo)$. 
$\A$ is similar to $\transmat$ (i.e. they have the same 
eigenvalues), but is symmetric, so we can infer a bound on the second 
eigenvalue of $\transmat$ by applying Equation~\ref{eq:eigen} to $\A$. It 
is easy to check that the top eigenvector of $\A$ 
is $\sqrt{\pi}$, which is the vector with entries 
$\sqrt{\pi(\elemone)}$ for any $\elemone\in\Omega$. 


We apply this same symmetrization technique to other matrices as well. 
For $i \in [\rhat]$ we let $\Ahat_i := \A(\Phat_i)$ and for $i \in 
[\rtilde]$ we let $\Atilde_i := \A(\Ptilde_i)$.  We then write $\Ahat$ 
to mean the $|\Omega| \times |\Omega|$ matrix with 
$\Ahat(\elemone,\elemtwo) = \Ahat_i(\elemone,\elemtwo)$ if 
$\elemone,\elemtwo \in \Omegahat_i$ for some $i \in [\rhat]$, and zero 
otherwise. Analogously, we write $\Atilde$ to mean the $|\Omega| \times 
|\Omega|$ matrix with $\Atilde(\elemone,\elemtwo) = 
\Atilde_i(\elemone,\elemtwo)$ if $\elemone,\elemtwo \in \Omegatilde_i$ 
for some $i \in [\rtilde]$, and zero otherwise. It is important to note 
that $\Ahat \not= \A(\Phat)$ and 
$\Atilde \not= \A(\Ptilde)$.  This allows us to write $\A = \Ahat+\Atilde-I_{|\Omega|}.$  See Proposition~\ref{prop:matdecomp} in Section~\ref{sec:prop:matdecomp}.

\section{First Decomposition Bound}\label{sec:decomp}

We wish to apply Equation~\ref{eq:eigen} to $\A$.  Recall that 
$\sqrt{\pi}$ is the top eigenvector of $\A$.  Let $x \in \R^{|\Omega|}$ 
with $x\perp \sqrt{\pi}$ and $\|x\|=1$. We will decompose $x$ into two vectors $\Par{\xhat}$ and 
$\Perp{\xhat}$ as follows (note: this is similar to the vector 
decomposition used for the Zig Zag Product in~\cite{RVW}). For any 
$i\in [\rhat]$, let $\xhat_i\in\R^{|\Omegahat_i|}$ be the vector 
defined by $\xhat_i(\elemone)=\xhat(\elemone)$ for all $\elemone \in 
\Omegahat_i$. Then $x=\sum_i e_i\otimes \xhat_i$.  We further decompose 
$\xhat_i$ into $\Par{\xhat_i}$, the part that is parallel to 
$\sqrt{\pihat_i}$, and $\Perp{\xhat_i}$, the part that is perpendicular 
to $\sqrt{\pihat_i}$; recall that $\sqrt{\pihat_i}$ is the top 
eigenvector of $\Ahat$. Finally, define $\Par{\xhat}, \Perp{\xhat} \in 
\R^{|\Omega|}$ by $\Par{\xhat}=\sum_i e_i\otimes \Par{\xhat_i}$ and 
$\Perp{\xhat}=\sum_i e_i\otimes\Perp{\xhat_i}$.  Hence $x = \sum_i 
e_i\otimes \xhat_i = \Par{\xhat}+\Perp{\xhat}$. Define $\Par{\xtilde}$ and $\Perp{\xtilde}$ analogously.  

The next lemma makes concrete the intuition that if a 
particular distribution is far from stationarity, then it will either 
be far from stationarity on some restriction---in which case $\Ahat$ 
will bring it closer to stationarity (as in part 1)---or on the projection---in which 
case $\Atilde$ will bring it closer to stationarity (as in part 2).
The key quantities needed to prove Theorem~\ref{thm:main} are $\deltahpeh, \deltahpet,$ and $\deltahpat$:
\[ \deltahpeh = \frac{\langle\Perp{\xhat},\Perp{\xhat}\Ahat\rangle}{\|\Perp{\xhat}\|^2},~ 
\deltahpet = \frac{\langle\Perp{\xhat},\Perp{\xhat}\Atilde\rangle}{\|\Perp{\xhat}\|^2},~
\deltahpah = \frac{\langle\Par{\xhat},\Par{\xhat}\Ahat\rangle}{\|\Par{\xhat}\|^2},~
\deltahpat = \frac{\langle\Par{\xhat},\Par{\xhat}\Atilde\rangle}{\|\Par{\xhat}\|^2}. \] 
\begin{Lemma}\label{lem:eta}
With the above notation, (1.)  $\deltahpeh \leq \lambda_{\max}$, and (2.) $\deltahpat \leq \lambdabar.$
\end{Lemma}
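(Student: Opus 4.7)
The plan is to prove the two parts separately. For Part 1, I would argue directly from the block-diagonal structure of $\Ahat$ together with the variational characterization in Equation~\ref{eq:eigen}. Each restriction block $\Ahat_i$ has top eigenvector $\sqrt{\pihat_i}$ with eigenvalue $1$ and second-largest eigenvalue $\lambda_i$. By construction $\Perp{\xhat_i}\perp\sqrt{\pihat_i}$, so Equation~\ref{eq:eigen} applied to $\Ahat_i$ gives $\langle\Perp{\xhat_i},\Perp{\xhat_i}\Ahat_i\rangle \leq \lambda_i\|\Perp{\xhat_i}\|^2$. Since $\Ahat$ acts block-diagonally on $\Perp{\xhat} = \sum_i e_i\otimes\Perp{\xhat_i}$, summing over $i$ and using $\lambda_i \leq \lambda_{\max}$ yields $\langle\Perp{\xhat},\Perp{\xhat}\Ahat\rangle \leq \lambda_{\max}\|\Perp{\xhat}\|^2$, which is Part~1.

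For Part 2, the target $\lambdabar$ is an eigenvalue of the projection chain, but the quadratic form in $\deltahpat$ involves $\Atilde$. My first move is to trade one for the other using the matrix identity $\A = \Ahat + \Atilde - I_{|\Omega|}$ from Proposition~\ref{prop:matdecomp}. Because each $\Par{\xhat_i}$ is a scalar multiple of $\sqrt{\pihat_i}$, which is an eigenvector of $\Ahat_i$ with eigenvalue $1$, we get $\Par{\xhat}\Ahat = \Par{\xhat}$. Substituting, the $\Ahat$ and $-I$ terms cancel, so $\langle\Par{\xhat},\Par{\xhat}\Atilde\rangle = \langle\Par{\xhat},\Par{\xhat}\A\rangle$. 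This is the conceptual heart of the argument: a statement about the complementary restrictions has been rewritten as a statement about the full chain, evaluated on the ``projection-like'' part of $x$.

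Next I would identify this as an inner product on the projection chain. Write $\Par{\xhat_i} = c_i\sqrt{\pihat_i}$, so that $\|\Par{\xhat}\|^2 = \sum_i c_i^2 = \|c\|^2$. Expanding $\langle\Par{\xhat},\Par{\xhat}\A\rangle$ using the definition $\A(\sigma,\tau) = \sqrt{\pi(\sigma)/\pi(\tau)}\,P(\sigma,\tau)$ and the factorization $\pi(\sigma) = \pibar(i)\pihat_i(\sigma)$ for $\sigma \in \Omegahat_i$, the inner sums over $\sigma \in \Omegahat_i$ and $\tau \in \Omegahat_j$ collapse exactly to $\Pbar(i,j)$ by the definition of the projection transition matrix, and the remaining factors of $\sqrt{\pibar(i)/\pibar(j)}$ assemble into the symmetrization $\Abar(i,j)$. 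Thus $\langle\Par{\xhat},\Par{\xhat}\A\rangle = \langle c, c\Abar\rangle$.

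To conclude by Equation~\ref{eq:eigen}, I need $c\perp\sqrt{\pibar}$. A short computation using $\pi(\sigma) = \pibar(i)\pihat_i(\sigma)$ rewrites $\langle c,\sqrt{\pibar}\rangle$ as $\langle x, \sqrt{\pi}\rangle$, which vanishes by hypothesis. Since $\sqrt{\pibar}$ is the top eigenvector of $\Abar$ with second-largest eigenvalue $\lambdabar$, Equation~\ref{eq:eigen} gives $\langle c, c\Abar\rangle \leq \lambdabar\|c\|^2 = \lambdabar\|\Par{\xhat}\|^2$, completing Part~2. The one piece of real bookkeeping is the collapse to $\langle c,c\Abar\rangle$; everything else is a direct application of Equation~\ref{eq:eigen} in the correct basis.
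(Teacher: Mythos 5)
Your proposal is correct and follows essentially the same route as the paper's own proof: Part~1 is the same block-diagonal application of Equation~\ref{eq:eigen}, and Part~2 uses the same chain $\Par{\xhat}\Atilde = \Par{\xhat}\A$ (the paper's Lemma~\ref{lem:xparA}), the same identification of $\langle\Par{\xhat},\Par{\xhat}\A\rangle$ with $\langle\alpha,\alpha\Abar\rangle$ via $\Par{\xhat_i}=\alpha_i\sqrt{\pihat_i}$, and the same final appeal to Equation~\ref{eq:eigen} applied to $\Abar$. One small merit of your write-up is that you explicitly verify the hypothesis $c\perp\sqrt{\pibar}$ (equivalently $\Par{\xhat}\perp\sqrt{\pi}$) needed before invoking Equation~\ref{eq:eigen} for $\Abar$, a step the paper's proof of this lemma takes for granted and only records later, in the proof of Lemma~\ref{lem:directproduct}.
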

\noindent  The proof is in Section~\ref{sec:deferredMain}.  We sketch
here the main idea for part (2).  Recall $\Par{\xhat_i}$ is parallel
to $\sqrt{\pihat_i}$ for all $i \in [\rhat]$. Define $\alpha_i \in \R$
by $\Par{\xhat_i} = \alpha_i\sqrt{\pihat_i}$, and let $\alpha \in
\R^{\rhat}$ be the vector with $(\alpha)_i = \alpha_i$.
We want to show that the effect of $\Atilde$ on $\Par{\xhat}$ 
is similar to the effect of $\Abar$ on $\alpha$.  Specifically, we
show $\|\Par{\xhat}\|^2=\|\alpha\|^2$ and $\langle
\Par{\xhat},\Par{\xhat}\Atilde\rangle =
\langle\alpha,\alpha\Abar\rangle$.
Applying Equation~\ref{eq:eigen} to $\Abar$, this shows
\vspace{-.1in}
\[\deltahpat =\frac{\langle \Par{\xhat},\Par{\xhat}\Atilde\rangle}{\|\Par{\xhat}\|^2} = \frac{\langle\alpha,\alpha\Abar\rangle}{\|\alpha\|^2} \leq \bar{\lambda}.\]
\vspace{-.1in}

If $\Par{\xhat}\A$ were orthogonal to $\Perp{\xhat}$, then Lemma~\ref{lem:eta} would be sufficient.  However, 
this is not true in general and
so it is necessary to bound the cross terms generated by applying the
matrix $\Atilde$ to $\Par{\xhat}$.

\begin{Lemma}\label{lem:mixed}
With the above notation, 
$|\langle\Perp{\xhat},\Par{\xhat}\Atilde\rangle| \leq 
\sqrt{(1-\deltahpat)(1-\deltahpet)}\|\Par{\xhat}\|\|\Perp{\xhat}\|.$
\end{Lemma}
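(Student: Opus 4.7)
The plan is to recognize the bound as a Cauchy--Schwarz inequality for the semi-inner product induced by the symmetric positive semidefinite matrix $B := I_{|\Omega|} - \Atilde$. The key preliminary fact is that $\Par{\xhat}$ and $\Perp{\xhat}$ are orthogonal: in each block $\Omegahat_i$, $\Par{\xhat_i}$ is a scalar multiple of $\sqrt{\pihat_i}$ while $\Perp{\xhat_i}$ is perpendicular to $\sqrt{\pihat_i}$, so summing over $i$ gives $\langle \Par{\xhat}, \Perp{\xhat}\rangle = 0$. Consequently
\[
\langle \Perp{\xhat}, \Par{\xhat} B\rangle \;=\; \langle \Perp{\xhat}, \Par{\xhat}\rangle \,-\, \langle \Perp{\xhat}, \Par{\xhat}\Atilde\rangle \;=\; -\langle \Perp{\xhat}, \Par{\xhat}\Atilde\rangle,
\]
so bounding $|\langle \Perp{\xhat}, \Par{\xhat}\Atilde\rangle|$ is equivalent to bounding $|\langle \Perp{\xhat}, \Par{\xhat} B\rangle|$.

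Next I would verify that $B$ is symmetric and PSD. Symmetry is immediate, since $\Atilde$ is the block-diagonal assembly of the symmetric matrices $\Atilde_j = \A(\Ptilde_j)$. For positive semidefiniteness, each $\Atilde_j$ is similar to the transition matrix $\Ptilde_j$ of a reversible Markov chain and therefore has eigenvalues in $[-1,1]$; thus $I - \Atilde_j$ has eigenvalues in $[0,2]$, and being a direct sum of such matrices, $B$ is PSD.

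With these two ingredients the standard Cauchy--Schwarz inequality for the semi-inner product $(u,v)\mapsto \langle u, vB\rangle$ yields
\[
\bigl|\langle \Perp{\xhat}, \Par{\xhat} B\rangle\bigr|^2 \;\leq\; \langle \Perp{\xhat}, \Perp{\xhat} B\rangle \cdot \langle \Par{\xhat}, \Par{\xhat} B\rangle.
\]
Using $\langle u, uB\rangle = \|u\|^2 - \langle u, u\Atilde\rangle$ together with the definitions of $\deltahpet$ and $\deltahpat$, the two right-hand factors equal $(1-\deltahpet)\|\Perp{\xhat}\|^2$ and $(1-\deltahpat)\|\Par{\xhat}\|^2$ respectively. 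Substituting, combining with the sign-flip identity above, and taking the square root gives exactly the stated bound.

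The only real subtlety---and the one place to be careful---is to remember that $\Atilde$ is not the symmetrization $\A(\Ptilde)$ of the full off-block transition matrix, but the block-diagonal matrix whose blocks are the individually symmetrized complementary restrictions $\Atilde_j$. It is this blockwise structure, combined with the fact that each $\Ptilde_j$ is stochastic and reversible, that makes $I - \Atilde$ PSD for free; any attempt to work with $\A(\Ptilde)$ directly would obscure this and derail the Cauchy--Schwarz step. Beyond this observation, the proof is essentially routine once the orthogonality of $\Par{\xhat}$ and $\Perp{\xhat}$---already built into the vector decomposition---is invoked to eliminate the $I$ contribution.
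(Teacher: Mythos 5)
Your proof is correct, and at heart it is the same argument as the paper's. The paper expands $\Perp{\xhat}$ and $\Par{\xhat}$ in the eigenbasis $\{\vtilde_i\}$ of $\Atilde$, defines the auxiliary vectors $\Par{\bigx}=\sum_i\sqrt{1-\lambdatilde_i}\,\Par{a_i}\vtilde_i$ and $\Perp{\bigx}=\sum_i\sqrt{1-\lambdatilde_i}\,\Perp{a_i}\vtilde_i$, and applies ordinary Cauchy--Schwarz to $\langle\Perp{\bigx},\Par{\bigx}\rangle$. What you have noticed is that this construction is exactly the coordinate form of applying $B^{1/2}$ where $B = I_{|\Omega|}-\Atilde$, so the whole computation is the Cauchy--Schwarz inequality for the positive semidefinite bilinear form $(u,v)\mapsto\langle u,vB\rangle$. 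The inputs are identical---the block-diagonal structure of $\Atilde$ and the fact that each $\Atilde_j$ has spectrum in $[-1,1]$ (hence $B\succeq 0$), and the orthogonality $\langle\Par{\xhat},\Perp{\xhat}\rangle=0$ that kills the $I$ contribution and flips the sign. Your abstraction is cleaner and avoids carrying the explicit eigenbasis through the computation, which is a small but genuine stylistic improvement; the paper's coordinate version makes the intermediate identities (such as $\langle\Perp{\xhat},\Par{\xhat}\Atilde\rangle=-\sum_i(1-\lambdatilde_i)\Perp{a_i}\Par{a_i}\|\vtilde_i\|^2$) visible, which is useful since those same quantities reappear in the proof of Theorem~\ref{thm:primaldual}. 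Either way the substance is the same.
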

\noindent In order to prove Lemma~\ref{lem:mixed}, we
let $\{\lambdatilde_i\}$ be the eigenvalues of $\Atilde$ with 
corresponding eigenvectors $\{\vtilde_i\}$.  As $\Atilde$ is symmetric, the real spectral 
theorem tells us that its eigenvectors form an orthonormal basis of 
$\R^{|\Omega|}$.  We consider the basis representations $\Perp{\xhat} = \sum_i\Perp{a_i}\vtilde_i$ and $\Par{\xhat} = 
\sum_i\Par{a_i}\vtilde_i$ and define vectors $\Par{\bigx}:=\sum_i \sqrt{1-\lambdatilde_i}\Par{a_i}\vtilde_i$ and
$\Perp{\bigx}:= \sum_i \sqrt{1-\lambdatilde_i}\Perp{a_i}\vtilde_i$ which satisfy $\|\Par{\bigx}\|^2= (1-\deltahpat)\|\Par{\xhat}\|^2$ and  $\|\Perp{\bigx}\|^2 = (1-\deltahpet)\|\Perp{\xhat}\|^2$. Then we show $\langle \Perp{\xhat},\Par{\xhat}\A \rangle= -\langle \Perp{\bigx},\Par{\bigx}\rangle$, and we finish by applying Cauchy-Schwartz.  The details are in Section~\ref{sec:deferredMain}.
Finally, we are ready to prove Theorem~\ref{thm:main}.  

\vspace{.1in}

\noindent\emph{Proof of Theorem~\ref{thm:main}}.

Let $x\in \R^{|\Omega|}$ with $x\perp \sqrt{\pi}$ and $\|x\|=1$.  By Equation~\ref{eq:eigen}, 
$\Gap(\m) \geq 1-\langle x,x\A\rangle$.  
Applying Lemma~\ref{lem:xparA} and the definitions of $\deltahpeh$, $\deltahpat$,
and $\deltahpet$, Equation~\ref{eq:split} becomes
\begin{align}
 \langle {x},{x}A\rangle&=  \langle \Par{\xhat},\Par{\xhat}\Atilde\rangle +2
  \langle\Perp{\xhat},\Par{\xhat}\Atilde\rangle
  +\langle\Perp{\xhat},\Perp{\xhat}(\Ahat+\Atilde-I_{|\Omega|})\rangle\nonumber\\
  &= \deltahpat \|\Par{\xhat}\|^2 +2 \langle\Perp{\xhat},\Par{\xhat}\Atilde\rangle
  +(\deltahpeh +\deltahpet -1)\|\Perp{\xhat}\|^2.\label{eq:strongestbound}
\end{align}
Applying Lemma~\ref{lem:mixed}, we have
\[ \Gap(\m) \geq 1 - (\deltahpat \|\Par{\xhat}\|^2 + 
2 \sqrt{(1-\deltahpat)(1-\deltahpet)}\|\Par{\xhat}\|\|\Perp{\xhat}\|
+(\deltahpeh +\deltahpet -1)\|\Perp{\xhat}\|^2).
\]
Rearranging terms and using the fact that
$1=\|x\|^2=\|\Perp{x}\|^2+\|\Par{x}\|^2$, we have
\begin{equation}\label{eq:main}
\Gap(\m) \geq \min_{x\perp\sqrt{\pi}:\|x\|=1} 
(1-\deltahpeh)\|\Perp{\xhat}\|^2+\left(\sqrt{1-\deltahpet}\|\Perp{\xhat}\| - \sqrt{1-\deltahpat}\|\Par{\xhat}\|\right)^2.
\end{equation}
Finally, we use the fact that Equation~\ref{eq:main} is minimized when $\deltahpeh$ and
$\deltahpat$ are maximized, together with Lemma~\ref{lem:eta} to prove the following. See Section~\ref{sec:minimized} for the remaining details.
\begin{equation}\label{eq:main2}
\Gap(\m) \geq \min_{p^2+q^2=1}\gmin q^2+\left(q\sqrt{1-\deltahpet} - p\sqrt{\gammabar}\right)^2.
\end{equation}
\vspace{-.1in}
\qed\\

The statement of Theorem~\ref{thm:main} is admittedly technical.
However, from it we can prove $\gamma\geq \gmin\gammabar/3$
(see Corollary~\ref{cor:compareproduct}) and Corollary~\ref{cor:compareT} in Section~\ref{sec:compare}, which is
identical to Theorem~1 from~\cite{jstv04} but applied to the spectral
gap instead of the Poincar\'{e} constant; it states  $\gamma\geq \min\left\{\frac{\gammabar}{3}, \frac{\gmin\gammabar}{3T+\gammabar}\right\}$.
Both results use a trivial bound on $\deltahpet$ of the minimum
eigenvalue of $\Atilde$. 

\vspace{-.1in}

\section{Complementary Decomposition}\label{sec:tight}\label{sec:orthog}
We use the technology developed in 
Section~\ref{sec:decomp} to prove Theorem~\ref{thm:primaldual} in Section~\ref{sec:compDec}.
To show how to apply Theorem~\ref{thm:primaldual}, we develop the notion of $\epsilon$-orthogonality 
that captures the key relationship between the top eigenvectors of $\Ahat$ 
and $\Atilde$ that allows us to get good bounds on $\Gap(\m)$.
By Theorem~\ref{thm:primaldual}, if $\gminhat$ and $\gmintilde$ are not 
too small, it suffices to show that $\|\Perp{\xhat}\|^2$ and 
$\|\Perp{\xtilde}\|^2$ cannot both be small.  To this end, we further 
decompose $\Perp{\xhat}$ and $\Par{\xtilde}$ based on the eigenvectors 
of $\Atilde$.  Define $S=\{i:\mu_i=1\}$ and vectors $\b=\sum_{i\in S} \Par{a_i}v_i$ and 
$\a=\sum_{i\notin S} \Par{a_i}v_i$. Similarly, let $\d=\sum_{i\in S} 
\Perp{a_i}v_i$ and $\c=\sum_{i\notin S} \Perp{a_i}v_i$. Notice 
$\Par{\xtilde} = \b+\d$ and $\Perp{\xtilde} =\a+\c$, so that the 
vectors in each row (resp. column) of the following table sum to the 
vector in its row (resp. column) label.
\begin{center}
\begin{tabular}{c|c|c|}
\multicolumn{1}{c}{}
 & \multicolumn{1}{c}{$\Par{\xtilde}$}
 & \multicolumn{1}{c}{$\Perp{\xtilde}$} \\
\cline{2-3}
$\Par{\xhat}$ &  $\b$ & $\a$\\
\cline{2-3}
$\Perp{\xhat}$ & $\d$ & $\c$\\
\cline{2-3}
\end{tabular}
\end{center}
The vectors within each row are orthogonal, as they are in the span of 
eigenvectors with distinct eigenvalues.  However, the vectors within 
each column are not necessarily orthogonal. 

We say that $\Phat_1, \Phat_2, \ldots,\Phat_{\rhat}$ and $\Ptilde_1, 
\Ptilde_2, \ldots, \Ptilde_{\rtilde}$ is an \emph{$\epsilon$-orthogonal} 
decomposition of $\m$ if $\|\b\|^2\leq \epsilon^2$.  As shorthand, we
may also say that the decomposition is $\epsilon$-orthogonal if
the same condition holds.  The idea is that $\epsilon$-orthogonality implies that if a distribution is far from stationarity then it will either be far from stationarity on some restriction or on some complementary restriction.

To prove Theorem~\ref{thm:indep} from Theorem~\ref{thm:primaldual}, 
 we must show that if $\|\b\|^2\leq \epsilon^2$, then $\|\Perp{\xhat}\|^2+ \|\Perp{\xtilde}\|^2\geq
(1-\epsilon)^2$.  As the sum of the squared norms of the vectors in the above table is 1, it is reasonable to expect that if $\|\b\|^2$ is small, then $\|\Perp{\xhat}\|^2+ \|\Perp{\xtilde}\|^2$ is large.  However, this is not as straightforward as one might expect, as the vectors within each column are not necessarily orthogonal, so we may have $\|\Perp{\xtilde}\|^2< \|\a\|^2 + \|\c\|^2$.  The full proof appears in Section~\ref{sec:compDec}.


\subsection{Bounding $\epsilon$}\label{sec:indep}
In order to apply Theorem~\ref{thm:indep}, one must determine an $\epsilon$ for which $\Ahat$ and $\Atilde$ are $\epsilon$-orthogonal.  In other words, one must
bound $\|\b\|^2$.  In this section, we provide one method for doing so.
  The proof of the following lemma appears in Section~\ref{sec:r}.

\begin{Lemma}\label{lem:directproduct}
 Define $r(i,j)=\frac{\pi(\Omegahat_i\cap\Omegatilde_j)}{\pi(\Omegahat_i)\pi(\Omegatilde_j)}.$
Then 
  $\displaystyle \|\b\|^2\leq \sum_{(i,j)}\pi(\Omegahat_i\cap\Omegatilde_j)(\sqrt{r(i,j)}-1/\sqrt{r(i,j)})^2.$
\end{Lemma}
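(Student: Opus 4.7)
The plan is to express $\b$ as an explicit orthogonal projection in $\R^{|\Omega|}$ and then bound its squared norm by direct computation and Cauchy--Schwarz. First I would identify the eigenspace of $\Atilde$ corresponding to eigenvalue $1$: since each $\Ptilde_j$ is ergodic and reversible, the symmetrized block $\Atilde_j$ has a unique Perron eigenvector $\sqrt{\pitilde_j}$ with eigenvalue $1$. The supports $\Omegatilde_j$ partition $\Omega$, so after zero-padding to $\R^{|\Omega|}$ the vectors $\{\sqrt{\pitilde_j}\}_{j=1}^{\rtilde}$ form an orthonormal basis of the eigenspace of $\Atilde$ with eigenvalue $1$. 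Hence $\b$ is simply the orthogonal projection of $\Par{\xhat}$ onto $\mathrm{span}(\{\sqrt{\pitilde_j}\}_j)$, giving $\|\b\|^2 = \sum_j \langle \Par{\xhat}, \sqrt{\pitilde_j}\rangle^2$.

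Next I would compute each inner product. Writing $\alpha_i \in \R$ for the coefficient satisfying $\Par{\xhat_i} = \alpha_i \sqrt{\pihat_i}$, the key observation is that $\sqrt{\pihat_i(\sigma)\,\pitilde_j(\sigma)} = \pi(\sigma)/\sqrt{\pi(\Omegahat_i)\,\pi(\Omegatilde_j)}$ for $\sigma \in \Omegahat_i \cap \Omegatilde_j$. Summing over $\sigma$ and using $r(i,j)\,\pi(\Omegahat_i)\pi(\Omegatilde_j) = \pi(\Omegahat_i \cap \Omegatilde_j)$ yields
\[
\langle \Par{\xhat}, \sqrt{\pitilde_j}\rangle \;=\; \sqrt{\pi(\Omegatilde_j)}\sum_i \alpha_i \sqrt{\pi(\Omegahat_i)}\, r(i,j).
\]
Now comes the crucial step: because $\sqrt{\pi} = \sum_i \sqrt{\pi(\Omegahat_i)}\sqrt{\pihat_i}$ (as embedded vectors), the standing hypothesis $x \perp \sqrt{\pi}$ translates into $\sum_i \alpha_i \sqrt{\pi(\Omegahat_i)} = 0$. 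This lets me replace $r(i,j)$ by $r(i,j) - 1$ inside the sum at no cost.

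Finally I would apply Cauchy--Schwarz in the form $\bigl(\sum_i \alpha_i b_i\bigr)^2 \leq \bigl(\sum_i \alpha_i^2\bigr)\bigl(\sum_i b_i^2\bigr)$, using $\sum_i \alpha_i^2 = \|\Par{\xhat}\|^2 \leq \|x\|^2 = 1$, and sum over $j$ to obtain
\[
\|\b\|^2 \;\leq\; \sum_{i,j} \pi(\Omegahat_i)\pi(\Omegatilde_j)\,(r(i,j)-1)^2.
\]
The elementary identity $(r-1)^2/r = (\sqrt{r} - 1/\sqrt{r})^2$ together with $r(i,j)\pi(\Omegahat_i)\pi(\Omegatilde_j) = \pi(\Omegahat_i\cap\Omegatilde_j)$ rewrites each summand as $\pi(\Omegahat_i\cap\Omegatilde_j)(\sqrt{r(i,j)} - 1/\sqrt{r(i,j)})^2$, giving the stated bound. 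The main obstacle is really the middle step: without exploiting $x\perp\sqrt{\pi}$ to shift from $r(i,j)$ to $r(i,j) - 1$, the resulting estimate would not vanish in the truly independent case $r\equiv 1$, and the lemma would be useless for producing meaningful $\epsilon$-orthogonality when the partitions $\{\Omegahat_i\}$ and $\{\Omegatilde_j\}$ are nearly independent under $\pi$. Everything else is routine bookkeeping.
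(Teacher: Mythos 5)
Your proposal is correct and follows essentially the same route as the paper's proof: you expand $\|\b\|^2 = \sum_j \langle\Par{\xhat},\sqrt{\pitilde_j}\rangle^2$ via the orthonormal eigenbasis of $\Atilde$, compute the inner products in terms of $\alpha_i$ and $r(i,j)$, exploit $\Par{\xhat}\perp\sqrt{\pi}$ (equivalently $\alpha\perp\sqrt{\pibar}$) to center by subtracting $1$, and finish with Cauchy--Schwarz and $\|\alpha\|\leq 1$. The only cosmetic difference is that you pass through the intermediate bound $\sum_{i,j}\pi(\Omegahat_i)\pi(\Omegatilde_j)(r(i,j)-1)^2$ and convert at the end, whereas the paper defines the vector $V_j$ directly in the final form; these are algebraically identical.
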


In our application in Section~\ref{sec:perm}, the Markov chain $\m$
acts on a product space, where $\Omega = \Omega_1\times\Omega_2$.
In other words, for every $i\in[\rhat]$ and $j\in[\rtilde]$, there is
a unique $\sigma\in\Omegahat_i\cap\Omegatilde_j$.  Abusing 
notation, we write $\sigma = (i,j)$.
In this sense, it is similar to the direct product of independent
Markov chains, but the transition probabilities are
not necessarily independent.  The function $r(i,j)$ allows us to
capture the degree of dependence.  
In this setting, we rewrite $r(i,j)$ as
 $r(i,j)=\frac{\pi(i,j)}{\pi(\Omegahat_i)\pi(\Omegatilde_j)}.$
Then 
  \begin{equation}\label{eq:productchains}
  \|\b\|^2\leq \sum_{(i,j)}\pi(i,j)(\sqrt{r(i,j)}-1/\sqrt{r(i,j)})^2.
  \end{equation}

\noindent Notice that $r(i,j)=1$ for all $i,j$
when $\m$ is the direct product of 2 independent Markov chains.  Thus, by iterating, we can
immediately prove the following (well-known) result.

  \begin{Corollary}\label{cor:product}
    If $\m$ is the direct product of $N$ Markov chains $\{\m_i\},$
    then  $\Gap(\m)=\min_i\Gap(\m_i).$
\end{Corollary}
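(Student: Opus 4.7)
The plan is to prove both inequalities. For the lower bound $\Gap(\m) \geq \min_i \Gap(\m_i)$, I will iterate Theorem~\ref{thm:indep} inductively on $N$. For the matching upper bound, I will use a direct eigenvector-lifting argument.

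For the base case $N=2$, write $\Omega = \Omega_1 \times \Omega_2$ and partition $\Omega = \cup_{i \in \Omega_1} \Omegahat_i$ with $\Omegahat_i = \{i\} \times \Omega_2$. Because $\m$ is the direct product, each restriction chain $\Phat_i$ coincides with $\m_2$, and each complementary restriction $\Ptilde_j$ on $\Omega_1 \times \{j\}$ coincides with $\m_1$, so $\gminhat = \Gap(\m_2)$ and $\gmintilde = \Gap(\m_1)$. Independence forces the stationary distribution to factor as $\pi(i,j) = \pi_1(i)\pi_2(j)$, with $\pi(\Omegahat_i) = \pi_1(i)$ and $\pi(\Omegatilde_j) = \pi_2(j)$, so $r(i,j) = 1$ identically. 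Equation~\ref{eq:productchains} then gives $\|\b\|^2 \leq 0$, hence $\|\b\|^2 = 0$, so the decomposition is $0$-orthogonal, and Theorem~\ref{thm:indep} yields $\Gap(\m) \geq \min\{\Gap(\m_1), \Gap(\m_2)\}$. For general $N$, I would regroup $\m$ as the two-factor product $\m_1 \times (\m_2 \times \cdots \times \m_N)$, apply the base case, and invoke the induction hypothesis on $\m_2 \times \cdots \times \m_N$.

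For the upper bound, for each $i$ I would take a second eigenvector $v_i$ of $\m_i$ with eigenvalue $1-\Gap(\m_i)$. Lifting $v_i$ to $\Omega$ by tensoring with the stationary distributions on the remaining coordinates gives a vector orthogonal to $\sqrt{\pi}$ that is an eigenvector of $\m$ with eigenvalue $1-\Gap(\m_i)$, so $\Gap(\m) \leq \Gap(\m_i)$ for every $i$.

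The main obstacle, albeit a mild one, is verifying that the restriction $\Phat_i$ truly coincides with $\m_2$ and that eigenvector lifting preserves the eigenvalue; both hinge on the precise meaning of ``direct product,'' but both are standard for the natural convention in which each transition of $\m$ changes at most one coordinate. Once these structural facts are in place, the computation $r(i,j) = 1$ is immediate and the applications of Theorem~\ref{thm:indep} are routine.
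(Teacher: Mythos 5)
Your lower-bound argument is exactly the paper's: compute $r(i,j)=1$ from the product structure of $\pi$, conclude $\|\b\|^2=0$ via Equation~\ref{eq:productchains}, apply Theorem~\ref{thm:indep} with $\epsilon=0$, and iterate by regrouping factors. The paper leaves the matching upper bound implicit (calling the result ``well-known''), whereas you supply the standard eigenvector-lifting argument for it; that is a reasonable addition, not a different route.
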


\newcommand{\area}{N}
\newcommand{\hiA}{a^*}
\section{Application to Permutations}\label{sec:perm}
Next, we will analyze the nearest neighbor Markov chain $\mn$ over biased permutations.  Theorem~\ref{thm:indep} and Lemma~\ref{lem:directproduct} are key to our improved bounds.
Let $\Omega = S_n$ be the set of all permutations $\sigma = (\sigma(1), \ldots, \sigma(n))$ and $\cP$ be a set of probabilities $p_{i,j} \in [0,1]$ for $1\leq i \neq j \leq n$ with $p_{j,i} = 1-p_{i,j}.$  Here we consider the case where the probabilities $\cP$ form a weakly monotone bounded $k$-class (introduced by~\cite{MS}, see Section~\ref{sec:MKprelim} for more details).  In a bounded $k$-class, the set $[n]$ can be partitioned into $k$ particle classes $C_1, \ldots C_k,$ elements from the same class are swapped with probability $1/2,$ and each element from class $C_i$ and each element from $C_j$ (with $i<j$) are put in increasing order with the same probability $p_{i,j}.$  Additionally, $p_{i,j}/p_{j,i} \geq 1/q$ for all $i<j$ for some constant $q <1.$  As in~\cite{MS, HW16} we first analyze an auxiliary chain $\mk$ on $k$-particle processes where elements in the same class are indistinguishable and then use comparison techniques~\cite{dsc,RT98} to analyze $\mn.$ The chain $\mk$ exchanges particles in different classes across elements in smaller particle classes.  We use the same techniques as~\cite{MS}, but we get much better results by applying the Complementary Decomposition Theorem, Theorem~\ref{thm:indep}.  The details are given in Section~\ref{sec:mn}.  Let $q_{i,j} = p_{i,j}/p_{j,i}.$  Below we write $C(i):=C_i$ for notational convenience.

\vspace{.1in}
\noindent  {\bf The particle process Markov chain $\mk$} 

\vspace{.05in}
\noindent {\tt Starting at any permutation $\sigma_0$, iterate the following:} 
\begin{itemize}
\item At time $t,$ choose a position $1\leq i \leq n$ and direction $d \in \{L, R\}$ uniformly at random.  
\item If $d = L$, find the largest $j$ with $1 \leq j<i$ and $\C(\sigma_t(j)) \geq \C(\sigma_t(i))$ (if one exists).  If $\C(\sigma_t(j)) > \C(\sigma_t(i)),$ then 
with probability $1/2$ exchange $\sigma_t(i)$ and $\sigma_t(j)$ to obtain $\sigma_{t+1}.$  
\item If $d = R$, find the smallest $j$ with $n\geq j>i$ and $\C(\sigma_t(j)) \geq \C(\sigma_t(i))$ (if one exists).  If $\C(\sigma_t(j)) > \C(\sigma_t(i)),$ then exchange $\sigma_t(i)$ and $\sigma_t(j)$ to obtain $\sigma_{t+1}$  with probability $$\frac{1}{2}~q_{\sigma_t(j),\sigma_t(i)}  \prod_{i<k<j}\left(q_{\sigma_t(j),\sigma_t(k)}q_{\sigma_t(k),\sigma_t(i)}\right).$$
 
\item Otherwise, do nothing so that $\sigma_{t+1} = \sigma_t.$
\end{itemize}

\noindent 
Letting $\psize = \max\left\{\frac{\log (6n^2)+\log(1+2/(q^{-1}-1))}{\log(1+q^{-1})-1}, \frac{\log^2(c_1)}{\log(2/(q(1+q^{-1})))}\right\} = \Theta(\log n),$ we  prove the following.

\begin{theorem}\label{pprocess}
If the probabilities $\transmat$ are weakly monotonic, form a bounded $k$-class for $k\geq 2,$ and  $|C_i|\geq 2\psize$ for $1\leq i \leq k$ then the spectral gap $\text{Gap}(\transmat)$ of the chain $\mk$ satisfies
\abovedisplayskip=3pt
\belowdisplayskip=3pt
$\text{Gap}(\transmat) =\Omega(n^{-2}).$
\end{theorem}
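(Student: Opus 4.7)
The plan is to iteratively apply the Complementary Decomposition Theorem (Theorem~\ref{thm:indep}) up to $n$ times to $\mk$, following the same style of inductive decomposition used in~\cite{MS} but replacing their application of the classical decomposition theorem (which gives away a factor of $\gammabar = \Theta(n^{-2})$ at every level) with Theorem~\ref{thm:indep}, which only gives away a multiplicative $(1-\epsilon)^2$ factor per level. Since I will show $\epsilon \leq 1/n$ at every level and the recursion has at most $n$ levels, the total multiplicative loss telescopes to $(1-1/n)^{2n} = \Omega(1)$, and the final bound follows once I also verify that every restriction and complementary restriction at the bottom of the recursion has spectral gap $\Omega(n^{-2})$.

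For a single level, I would partition $\Omega$ according to the positions (or relative ordering) of one distinguished particle class: the restrictions $\Phat_i$ consist of transitions of $\mk$ that fix this data and only reshuffle the remaining classes, while the complementary restrictions $\Ptilde_j$ consist of the transitions that move only the distinguished class. Each restriction chain $\Phat_i$ is itself a smaller bounded weakly monotone $(k{-}1)$-class particle process, so its spectral gap is $\Omega(n^{-2})$ by induction on $k$. Each complementary restriction $\Ptilde_j$ is effectively a biased nearest-neighbor walk of a single particle class through a fixed background, and standard arguments (path coupling, or direct eigenvalue analysis) give gap $\Omega(n^{-2})$, using that $q_{i,j}$ is bounded away from $1$.

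The crux of the proof is the $(1/n)$-orthogonality bound, which by Lemma~\ref{lem:directproduct} together with the product structure $\Omega = \Omega_1 \times \Omega_2$ at each level reduces to showing
\[
\|\b\|^2 \;\leq\; \sum_{(i,j)} \pi(i,j)\bigl(\sqrt{r(i,j)} - 1/\sqrt{r(i,j)}\bigr)^2 \;\leq\; \tfrac{1}{n^2},
\]
where $r(i,j) = \pi(i,j)/(\pi(\Omegahat_i)\pi(\Omegatilde_j))$ measures how far $\pi$ is from being a product measure on the two factors. Since the stationary distribution of $\mk$ has a product form over inversions weighted by the $q_{i,j}$, the quantity $\log r(i,j)$ can be written as a centered sum of cross-factor inversion indicators. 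Invoking the hypothesis $|C_i| \geq 2N^* = \Theta(\log n)$ gives enough randomness to apply a Chernoff-type concentration bound, yielding $r(i,j) = 1 \pm O(1/n)$ for all but an inverse-polynomial fraction of the $\pi$-measure. On this ``good'' set the contribution to $\|\b\|^2$ is $O(1/n^2)$; on the exceptional set $r(i,j)$ may be exponentially large or small, but the $\pi$-mass is so small that even a crude pointwise bound leaves the total contribution at $O(1/n^2)$, just as foreshadowed at the end of Section~1.1.

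The main obstacle will be carrying out this concentration estimate in the setting of the non-uniform, inversion-weighted measure $\pi$: the inversion indicators are not independent under $\pi$, so a naive Chernoff bound does not apply directly, and the pathological states with exponentially distorted $r(i,j)$ must be handled separately and shown to contribute negligibly to $\|\b\|^2$. The threshold $N^*$ is engineered precisely to make this trade-off work (the $\log(6n^2)$ and $\log^2(c_1)$ terms in its definition each arise from such a concentration/union-bound step). Once orthogonality is established at every level, the iterated application of Theorem~\ref{thm:indep} completes the argument with $\Gap(\mk) = \Omega(n^{-2})$.
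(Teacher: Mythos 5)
Your high-level plan is exactly the paper's: decompose $\mk$ inductively as in~\cite{MS}, with $\m_{\sigma_i}$ as the restriction chains (rapidly mixing by induction) and the $(i,j)$-transposition moves as the complementary restrictions (bounded generalized exclusion processes, gap $\Omega(n^{-2})$ by Theorem~\ref{boundedBias}), apply Theorem~\ref{thm:indep} at each of the $\leq k-1$ levels, and absorb the accumulated $(1-1/n)^{2(k-1)}=\Omega(1)$ loss. That part matches, down to the bookkeeping of the telescoping factor.

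Where your proposal has a genuine gap is in the mechanism for proving $1/n$-orthogonality. You propose to write $\log r(a,b)$ as a centered sum of cross-factor inversion indicators and apply a Chernoff-type concentration bound, and you correctly flag that the indicators are dependent under $\pi$, but you leave that obstacle unresolved. More importantly, even modulo the dependence issue, a concentration inequality only shows that $\log r$ is close to its $\pi$-mean --- it does not by itself show that the mean is $\approx 0$, which is the actual content of the claim $r(a,b)\approx 1$. The paper obtains this not by concentration of a sum of indicators but by a \emph{structural} cancellation: Lemma~\ref{lem:good} shows that for $a$ with fewer than $N^*$ inversions and $b$ with fewer than $N^*$ inversions involving class $i{+}1$, one has $w(a,b)=1$ \emph{exactly}, so $\pi(a,b)=w(a)w(b)\pi(a^*,b^*)$ splits as a product with no cross term. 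Combined with the $z$-bookkeeping of Lemma~\ref{lem:zBound}, this gives $r(a,b)\in[(1-\epsilon_1)^2,(1-\epsilon_1)^{-2}]$ with $\epsilon_1=1/(6n^2)$, so the good-pair contribution to $\|\a\|^2$-type sums is $O(1/n^4)$; the dominant $O(1/n^2)$ actually comes from the bad pairs. The ``concentration of measure'' that controls the bad set is not Chernoff at all: Lemma~\ref{lem:partitions} bounds the total weight of staircase walks with area $\geq N^*$ by combining Erd\H{o}s's subexponential bound $p(N)<c_1^{\sqrt N}$ on partition numbers with the geometric decay $q^N$, which is precisely where both the $\log(6n^2)$ and the $\log^2(c_1)$ terms in $\psize$ come from. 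So if you try to push your Chernoff route through, you will find you need to (i) first prove something like Lemma~\ref{lem:good} to pin the center of your putative distribution at $1$, and (ii) replace the Chernoff step by a counting argument against the number of high-inversion configurations; at that point you have reconstructed the paper's argument rather than found an alternative.

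One smaller imprecision: you describe each restriction chain as ``a smaller bounded weakly monotone $(k{-}1)$-class particle process.'' In the paper's induction the restriction $\m_{\sigma_i}$ is still a chain on $[n]$ with the classes $C_1,\dots,C_i$ frozen, and its gap bound carries the accumulated factor $\Omega\bigl(n^{-2}(1-1/n)^{2(k-2-i)}\bigr)$; the induction hypothesis must track this loss explicitly, not just ``$\Omega(n^{-2})$,'' or the final telescoping step has nothing to telescope.
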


The proof of Theorem~\ref{pprocess} uses the same inductive technique as~\cite{MS} where at each level of the induction we will fix the locations of particles in one less particle class.  
For $i\geq 0$, let $\sigma_i$ represent an arbitrary fixed location of
the particles in classes $C_1,C_2,\ldots, C_i$ (notice that $\sigma_0=\sigma$
represents no restriction).  
For example, let $\sigma_{2} = 12\_1\_2\_1\_\_,$ where the $\_$ represents an empty location that can be filled with particles in class $C_3$ or higher.  
We will consider a smaller chain $\m_{\sigma_{i}}$ whose state space $\Omega_{\sigma_i}$
is the set of all configurations where the elements in classes
$C_1,\ldots, C_i$ are in the locations given by $\sigma_i$.
  The moves of $\m_{\sigma_{i}}$
are a subset of the moves of $\mk$.  It rejects all moves of $\mk$
involving an element of $C_1,C_2,\ldots, C_i$.
We prove by induction that  $\m_{\sigma_{i}}$ has spectral gap
$\Omega(n^{-2}(1-1/n)^{2(k-2-i)})$
  for all choices of
 $\sigma_i$ (given a fixed $i$).  
 Specifically, we will assume that $\m_{\sigma_{i}}$ is rapidly mixing
 by induction, and we will use that fact to 
 prove that $\m_{\sigma_{i-1}}$ is rapidly mixing.

 \begin{figure}
   \centering
   \begin{minipage}[c]{.5\textwidth}
     \centering
   \includegraphics[scale=.18]{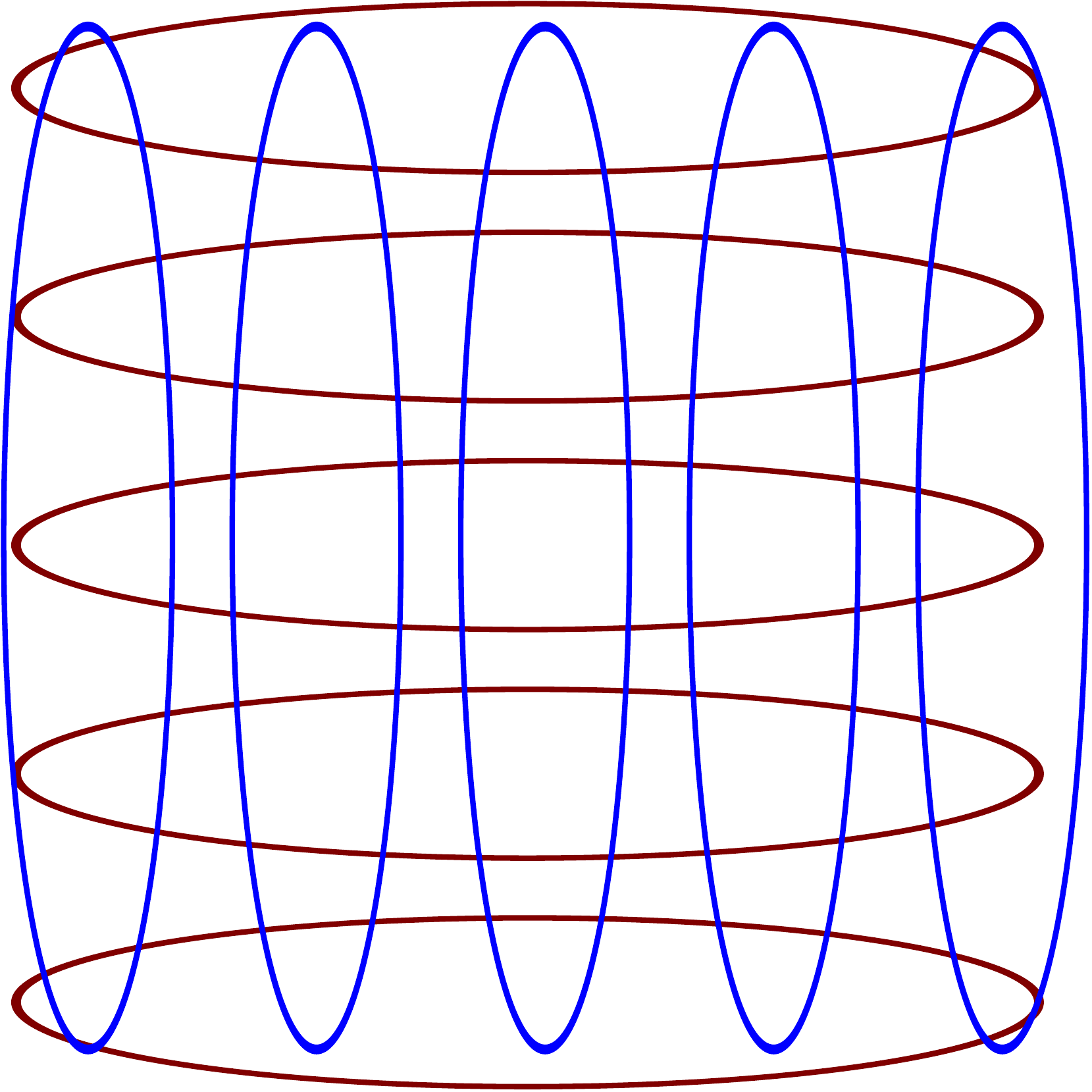}
   \put(-30,98){$\Omegahat_{a}=\Omega_{\sigma_i}$}
   \put(-28,88){$\downarrow$}
   \put(-28,58){$\sigma $}
  \put(0,58){$\leftarrow\Omegatilde_b$}
   \end{minipage}\hfill
   \begin{minipage}[c]{.5\textwidth}
     \centering
     \begin{tabular}{rl}
     $\sigma$ &$=~~
     12{\color{red}3}1{\color{blue}4}2{\color{blue}5}1{\color{red}3}{\color{blue}4}$\\
     $\sigma_2$ &$=~~ 12\_1\_2\_1\_\_$\\
     $\sigma_3$ &$=~~
     12{\color{red}3}1{\color{red}\_}2{\color{red}\_}1{\color{red}3\_}$\\
     $a$ &$= ~~~ {\color{red} 3\_\_3\_}$\\
     $b$ &$= ~~~ {\color{blue} 454}$
     \end{tabular}
     \end{minipage}
   \caption{The state space $\Omega_{\sigma_{i-1}}$ decomposed}
   \label{fig:product}
 \end{figure}

The state space $\Omega_{\sigma_{i-1}}$ of $\m_{\sigma_{i-1}}$ is a
product space.  Given any $\sigma$ consistent with
$\sigma_{i-1}$ (i.e. they agree on the locations of all elements in
classes $C_1,C_2,\ldots, C_{i-1}$), we write $\sigma=(a,b)$ as
follows.  Let $a$ be the 2-particle system obtained from $\sigma_i$ by removing particles of type less than $i$ (see Figure~\ref{fig:product}).   Let $A$ be the set of all possible choices of $a$; $A$ consists of all 2-particle systems with $|C_i|$ particles of type $i$ and $\sum_{j=i+1}^k|C_j|$ particles of type $\_$ (these are in bijection with staircase walks as in Figure~\ref{fig:staircase}).  Define $b$ to be the permutation of the elements of type bigger than $i$ that is consistent with $\sigma$; let $B$ be the set of all possible $b$.  Then, given any $(a,b)$ pair, there is
a unique $\sigma\in\Omega_{\sigma_{i-1}}$ corresponding to them.  This shows that
$\Omega_{\sigma_{i-1}}$ is a product space.  Moreover, the moves of
$\m_{\sigma_i}$ fix $a$ and perform $(j_1,j_2)$ transpositions, where
$j_1,j_2>i$; i.e. they operate exclusively on $b$. Thus, the
Markov chain $\m_{\sigma_i}$ is a restriction Markov chain of
$\m_{\sigma_{i-1}}$ with state space $\Omegahat_{a}$, which is the set
of $\sigma$ consistent with $a=\sigma_{i}$.
 On the other hand, the remaining moves of $\m_{\sigma_{i-1}}$ which
 form the complementary restrictions are $(i,j)$ transpositions, where
 $j>i$.  In other words, they fix $b$ and change $a$, so we label the state
 space of such a Markov chain $\Omegatilde_b$.  As these moves fix the
 relative order of all particles of type bigger than $i$, the
 complementary restriction chains can be seen as bounded exclusion
 processes on particles of type $i$ with  particles of type bigger
 than~$i$.  Bounded biased exclusion processes operate on staircase walks as in Figure~\ref{fig:staircase}, where every square has a different bias but they are all bounded by some $q$. We prove the following lemma in Section~\ref{sec:mk}.

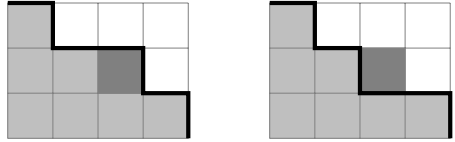
\begin{figure}[h]
  \centering
   \begin{tikzpicture}[scale=.6]
     \draw [fill=lightgray,thin,lightgray] (0,0) rectangle (1,3);
     \draw [fill=lightgray,thin,lightgray] (1,0) rectangle (2,2);
     \draw [fill=lightgray,thin,lightgray] (2,0) rectangle (4,1);
              \draw [fill=gray,thin,gray] (2,1) rectangle (3,2);
     \draw [help lines] (0,0) grid (4,3);
     \draw [ultra thick] (0,3) -- (1,3) -- (1,2) -- (3,2) -- (3,1) -- (4,1) -- (4,0);
   \end{tikzpicture}  
\hspace{2em}
    \begin{tikzpicture}[scale=.6]
     \draw [fill=lightgray,thin,lightgray] (0,0) rectangle (1,3);
     \draw [fill=lightgray,thin,lightgray] (1,0) rectangle (2,2);
     \draw [fill=lightgray,thin,lightgray] (2,0) rectangle (4,1);
         \draw [fill=gray,thin,gray] (2,1) rectangle (3,2);
     \draw [help lines] (0,0) grid (4,3);
     \draw [ultra thick] (0,3) -- (1,3) -- (1,2) -- (2,2) -- (2,1) -- (4,1) -- (4,0);
   \end{tikzpicture}
\caption{An exclusion process on staircase walks operates by adding or removing a square. } \label{fig:staircase}
\end{figure}

\begin{Lemma}\label{lem:dual}
The complementary restrictions at each level of the induction are bounded generalized biased exclusion processes with spectral gap $\Omega(n^{-2}).$
\end{Lemma}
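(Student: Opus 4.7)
The plan is to identify each complementary restriction chain as a bounded biased exclusion process on staircase walks, and then invoke the standard $\Omega(n^{-2})$ spectral-gap bound for such processes.

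First, I spell out the structural identification. Fix an induction level $i$ and a complementary restriction indexed by some $b \in B$. A state in $\Omegatilde_b$ is determined by its $a$-coordinate, namely the placement of the $|C_i|$ class-$C_i$ particles among the positions not occupied by the fixed classes $C_1, \ldots, C_{i-1}$. Such placements are in bijection with monotone staircase walks, as in Figure~\ref{fig:staircase}. A complementary-restriction move is, by construction, exactly a $\mk$-transposition that changes $a$; by the $\mk$ rules such a transposition swaps a class-$C_i$ particle with an adjacent class-$C_l$ particle ($l > i$), where adjacency is measured in the projected sequence obtained by deleting all fixed positions of classes $<i$. In the staircase picture this is precisely a corner flip, adding or removing one unit square.

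Next, I verify the bounded-bias property. By the $d=R$ rule of $\mk$, the forward transition probability for a given corner flip is a product of $q_{\cdot,\cdot}$ factors contributed by the two swapped particles together with each intermediate fixed particle of class $<i$. The corresponding reverse transition arises from the $d=L$ rule at probability $1/(4n)$, with no $q$-factors attached. Each intermediate $q$-factor lies in $(0,1]$ under the monotonicity and bounded-$k$-class hypotheses, so the effective per-corner bias is bounded above by $q_{C_l,C_i} \leq q < 1$ and bounded below by a positive constant depending only on $q$ and on $k$. This is exactly the bounded generalized biased exclusion setting treated in~\cite{bmrs, MS}.

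Finally, I apply the known spectral-gap bound. Bounded biased exclusion processes on staircase walks of length $O(n)$ have spectral gap $\Omega(n^{-2})$, as established by coupling and canonical-paths arguments in~\cite{bmrs} and~\cite{MS}. Since this bound is uniform over the allowed biases and over all choices of $b$, it yields $\gmintilde = \Omega(n^{-2})$ for every complementary restriction $\Ptilde_j$, completing the proof. The main obstacle will be the bookkeeping of the $q$-factor contributions from the intermediate fixed particles: one must confirm that these factors leave the per-corner bias bounded and that they do not cause any forward transition to vanish, both of which follow from boundedness and monotonicity of $\cP$.
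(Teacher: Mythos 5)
Your structural identification — that each complementary restriction at level $i$ is a bounded generalized biased exclusion process on staircase walks, with intermediate lower-class particles contributing bounded $q$-factors to the acceptance probabilities — matches the paper's approach. The problem is the final quantitative step, which is where the real content of the lemma lives.

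You assert that ``bounded biased exclusion processes on staircase walks of length $O(n)$ have spectral gap $\Omega(n^{-2})$'' and then directly conclude $\gmintilde = \Omega(n^{-2})$. This elides a slowdown factor. Each step of $\mk$ picks one of $n$ positions and one of two directions uniformly, so only a $\Theta(c_i/n)$ fraction of $\mk$-steps (where $c_i=|C_i|$) are candidates for a level-$i$ complementary-restriction move; the complementary restriction chain inherits this rarefaction. If you plug your coarse $\Omega(n^{-2})$ bound for an exclusion process run at full speed into that slowdown, you only get $\Omega(c_i/n^3)$, which is strictly weaker than $\Omega(n^{-2})$ whenever $c_i = o(n)$ — exactly the regime $k$ superconstant that the paper targets.

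The paper's proof closes this gap via the sharper bound of Theorem~\ref{boundedBias} (from~\cite{MS}): the unslowed bounded exclusion process with $n_1 \leq n_0$ particles has gap $\Omega\bigl(1/\bigl((n_0+n_1)(n_1 + \ln n_0)\bigr)\bigr)$. Taking $n_1 = c_i$ and $n_0+n_1 \leq n$, and crucially invoking the hypothesis $|C_i| \geq 2\psize = \Theta(\log n)$ so that $n_1 + \ln n_0 = O(c_i)$, the exclusion process gap becomes $\Omega(1/(n\,c_i))$. Multiplying by the slowdown factor $\Theta(c_i/n)$ then cancels the $c_i$ dependence and yields $\Omega(n^{-2})$. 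Your proposal neither tracks the slowdown to the end nor uses the lower bound on the class sizes, so it does not actually establish the stated $\Omega(n^{-2})$ bound.
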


The chain $\m_{\sigma_{i-1}}$ is not the direct product of
the chains on $a$ and $b$ because, e.g., $P((a,b),(a',b))$ depends on $b$.
However, we will show it is a $1/n$-orthogonal
decomposition by bounding $r(a,b)$ defined in
Lemma~\ref{lem:directproduct}.  We will define ``good" $a$'s to be
those with fewer than $N^*=C_q\log n$ inversions and ``good" $b$'s
to be those with fewer than $N^*$ inversions involving particles
of type $i+1$.  
As there are at least $2N^*$ particles of type $i+1$, if $a$ and $b$ are both good, then $(a,b)$ has no inversions between $i$ and $j$ for $j>i+1$.  For such pairs,
$r(a,b)$ is very close to 1.  For all other pairs, we show $r(a,b)\pi(\Omegahat_b\cap\Omegatilde_{a})$ is small.  By viewing $b$ as a staircase walk on particles of type $i+1$ with particles of any higher type, we will see that for either $a$ or $b$, the probability it is bad is smaller than the weighted sum of all biased exclusion processes with more than $N^*$ inversions (equivalently, area $N^*$ under the curve).   Lemma~\ref{lem:partitions} (see Section~\ref{sec:ortho}) bounds this by $1/(6n^2).$  

\begin{Lemma}\label{lem:perms}
 If the probabilities $\p$ are weakly monotonic and form a bounded \kP\  with $|C_i|\geq 2\psize$ for all $1\leq i \leq k$ then at each step of the induction we have a $1/n$-orthogonal decomposition.
\end{Lemma}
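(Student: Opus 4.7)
The plan is to apply Lemma~\ref{lem:directproduct} in its product-space form (Equation~\ref{eq:productchains}), which bounds
\[ \|\b\|^2 \le \sum_{(a,b)} \pi(a,b)\bigl(\sqrt{r(a,b)} - 1/\sqrt{r(a,b)}\bigr)^2, \]
where $\pi$ is the stationary distribution of $\m_{\sigma_{i-1}}$ on $\Omega_{\sigma_{i-1}} = A \times B$, and to show this is at most $1/n^2$. I will split the sum into \emph{good} pairs---those with fewer than $\psize$ inversions in $a$ and fewer than $\psize$ type-$(i+1)$ inversions in $b$---and \emph{bad} pairs, and argue each contributes $O(1/n^2)$.

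The first step is the structural claim: for good $(a,b)$, no type-$i$ particle lies to the right of any type-$j$ particle with $j > i+1$. Suppose for contradiction that a type-$j$ particle is at position $Q$ and a type-$i$ particle at position $P > Q$. Each of the at least $2\psize$ type-$(i+1)$ particles lies at a position greater than $Q$ (creating a type-$(i+1)$ inversion in $b$ with the witnessing type-$j$) or at a position less than $P$ (creating an inversion in $a$ with the witnessing type-$i$); since $Q < P$, no particle can avoid both. By pigeonhole, at least $\psize$ fall into one category, contradicting goodness. Using this, the Boltzmann weight $w(a,b) = \prod_{k<\ell} q_{k,\ell}^{\mathrm{inv}_{k,\ell}}$ factors on the good set as $w(a,b) = W_{\sigma_{i-1}}\,w_1(a)\,w_2(b)$: type-$i$ vs.\ type-$\ell$ ($\ell > i$) inversions collapse to $q_{i,i+1}^{\mathrm{inv}(a)}$, and the remaining inversions split cleanly between $a$ and $b$ thanks to the $k$-class assumption that $q_{k,\ell}$ depends only on classes. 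If all pairs were good this factorization would give $r(a,b) \equiv 1$, so the deviation is driven entirely by bad configurations polluting the marginals $\pi(\Omegahat_a)$ and $\pi(\Omegatilde_b)$.

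For bad pairs I would invoke Lemma~\ref{lem:partitions}: viewing $a$ as an exclusion-process configuration whose inversion count equals the area under its staircase walk, and re-viewing $b$ as an exclusion-process path between type-$(i+1)$ particles and all heavier particles, a bad configuration is one with area at least $\psize$. Since the bias is bounded by $q$ and $\psize = \Theta(\log n)$ is calibrated so that Lemma~\ref{lem:partitions} gives $\pi(\text{bad}) \le 1/(6n^2)$ in each marginal, one obtains uniform control of $\sum_{\text{bad }(a,b)} \pi(a,b)\,r(a,b)$ and of the symmetric $1/r$ sum. Combining with the $O(1/n^2)$ good contribution yields $\|\b\|^2 \le 1/n^2$, i.e.\ a $1/n$-orthogonal decomposition.

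The main obstacle is the factorization step: even once goodness eliminates inversions between types $i$ and $j > i+1$, the weight $w(a,b)$ carries contributions from inversions of types $< i$ with higher types, whose positions depend on $a$ and whose specific type-labels depend on $b$. One must show that under the $k$-class assumption these cross-terms either are independent of $(a,b)$ (once $\sigma_{i-1}$ is fixed) or absorb cleanly into $w_1(a)$ and $w_2(b)$. A secondary difficulty is propagating the tail bound through $r(a,b)$, since a single bad configuration can carry an exponentially skewed $r$-value; the averaging must therefore lean on the \emph{total} bad measure controlled by Lemma~\ref{lem:partitions}, not on any pointwise control. The slack $|C_i| \ge 2\psize$ and the explicit choice of $\psize$ in the hypothesis of Theorem~\ref{pprocess} are precisely what allow both the factorization error and the tail contribution to come in below $1/n^2$.
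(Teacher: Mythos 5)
Your plan mirrors the paper's argument almost step for step: bound $\|\b\|^2$ via the product-space version of Lemma~\ref{lem:directproduct}, split into good/bad pairs using inversion counts against the threshold $\psize$, exploit the pigeonhole observation that a good $(a,b)$ has no type-$i$/type-$j$ inversions for $j>i+1$ to get a clean factorization of the weight, invoke Lemma~\ref{lem:partitions} to control the bad tail, and observe that only averaged (not pointwise) control of $r(a,b)$ is possible. All of this is right and is what the paper does.

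However, you have accurately identified the two load-bearing steps and then left both of them as acknowledged obstacles rather than closing them, so the proposal is not yet a proof. First, the factorization $w(a,b)=1$ on good pairs (the paper's Lemma~\ref{lem:good}) is genuinely delicate: the sticking point is not the type-$i$ vs.\ type-$(j>i+1)$ inversions you dispose of via pigeonhole, but the $(i{+}1,j)$ inversions with $j<i$, whose count depends simultaneously on where $a$ places the type-$i$ particles and on where $b$ places the type-$(i{+}1)$ particles. The paper resolves this by a case split on each type-$(i{+}1)$ particle (before all type-$i$ particles versus after all of them), showing the relevant inversion counts match pairwise between $(a,b)$/$(a,b^*)$ or $(a,b)$/$(a^*,b)$; without this the factorization does not hold and $r(a,b)\neq 1$ even on good pairs. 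Second, your claim that the marginal bound $\pi(\text{bad})\le 1/(6n^2)$ alone yields ``uniform control'' of $\sum_{\text{bad}}\pi(a,b)r(a,b)$ and $\sum_{\text{bad}}\pi(a,b)/r(a,b)$ does not follow as stated: the first sum is $\sum \pi(a,b)^2/(\pi(a)\pi(b))$ and needs a pointwise comparison such as $\pi(a,b)/\pi(b)\le w(a)$ (Lemma~\ref{lem:zBound}(4)) before Lemma~\ref{lem:partitions} can be applied to the resulting $\sum_{a\text{ bad}} w(a)$; the second sum is $\sum\pi(a)\pi(b)$ and needs the upper bounds on the marginals in terms of $w(a)\zhat$ and $w(b)\ztilde$ (Lemma~\ref{lem:zBound}(2)). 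These normalization lemmas are what convert the partition-function bound of Lemma~\ref{lem:partitions} into the claimed $O(\epsilon_1)$ tail, and they are the part of the argument your sketch currently takes on faith.
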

\begin{proof}
Assume that at this step in the induction, all particles in $C_1,C_2,\ldots, C_{i-1}$ are fixed in the same position in all permutations according to some $\sigma_{i-1}.$  
The stationary distribution of $\m_{\sigma_{i-1}}$ is 
\begin{equation}\label{eq:statdist}
\pi(\sigma) =  \prod_{i < j: \sigma(i) > \sigma(j)}q_{\sigma(i), \sigma(j)}Z^{-1},
\end{equation}
where $Z$ is the normalizing constant $\sum_{\sigma \in \Omega_{\sigma_{i-1}}} \pi(\sigma).$
Let $a^*$ and $b^*$  be the highest weight elements in $A$ and $B$, respectively. 
Using these definitions, the permutation $(a^*,b^*)$ has the particles in $C_1,C_2,\ldots, C_{i-1}$ fixed according to $\sigma_{i-1}$ and all other higher particles in sorted order.   In our example, $b^* = 445,$ $(a,b^*) = 1231423145,$ $a^* = 33\_\_\_,$ $(a^*,b) = 1231324154,$ and $(a^*,b^*) = 1231324145.$ 

Next, we will decompose the product over inversions in Equation~\ref{eq:statdist} into several quantities.
Notice that $\pi(a^*,b^*)$ is the product over inversions that are in every $\sigma\in\Omega$, normalized by $Z_{\sigma_{i-1}}$.  Define $w(a) = \pi(a,b^*)/\pi(a^*,b^*).$  This is the product over inversions that are in every $\sigma\in\Omegahat_a$ that are not in every $\sigma\in\Omega$. Similarly, $w(b) = \pi(a^*,b)/\pi(a^*,b^*)$ is the product over inversions in every $\sigma\in\Omegatilde_b$, and $w(a,b) = \pi(a,b) \pi(a^*,b^*)/ (\pi(a,b^*)\pi(a^*,b))$ is the product over inversions in $\sigma=(a,b)$ beyond those that are required by being in $\Omegahat_a$ and $\Omegatilde_b$.  From these definitions, it is clear that $\pi(a,b) = w(a)w(b) w(a,b)\pi(a^*,b^*).$  We will prove in Lemma~\ref{lem:good} that if both $a$ and $b$ are good then $w(a,b) = 1;$ i.e. that the weight of $(a,b)$ is determined entirely by being in $\Omegahat_a$ and $\Omegatilde_b$.

Next, define $\ztilde =  \sum_{a} w(a),$  $\zhat =\sum_{b}w(b),$  and let $\ztilde' = \sum_{a \text{ good}} w(a)$ be the sum $\ztilde$ restricted to only good configurations $a$ and $\ztilde'' =  \sum_{a~\text{bad}} w(a)$ be the sum $\ztilde$ restricted to only bad configurations $a.$  Similarly, let $\zhat' = \sum_{b \text{ good}} w(b)$ and $\zhat'' =  \sum_{b \text{ bad}} w(b).$
Define $\epsilon_1= 1/(6n^2).$  By Lemma~\ref{lem:partitions}, $\epsilon_1$ is a bound on the total weight of staircase walks with area more than $N^*$ under the curve.  We will use this to bound the weight of the bad $a$'s contributing to $\ztilde$ and the weight of the bad $b$'s contributing to $\zhat$.  Specifically, we will prove the following lemma (see proof in Section~\ref{sec:ortho}).
\begin{Lemma}\label{lem:zBound}\ \  
  \begin{enumerate}
  \item $\ztilde'' =  \sum_{a \text{ bad}} w(a) \leq \epsilon_1$ and $\zhat'' =  \sum_{b \text{ bad}} w(b) \leq \epsilon_1\zhat.$
 \item For all $a,b$ we have $\pi(a)\leq w(a)\pi(a^*,b^*)\zhat$ and $\pi(b)\leq w(b)\pi(a^*,b^*)\ztilde.$
 \item For \emph{good} $a$ and \emph{good} $b$ we have $\pi(a) \geq w(a) \pi(a^*,b^*)\zhat'$ and $\pi(b) \geq w(b) \pi(a^*,b^*)\ztilde'.$
 \item For all $b$ we have $\pi(b) \geq w(b)\pi(a^*,b^*)w(a^*,b).$
    \end{enumerate}
\end{Lemma}
\noindent We bound $r(a,b)$  when $a$ and $b$ are both good using Lemma~\ref{lem:zBound} part (3) and (1) to show
\[  r(a,b)=\frac{\pi(a,b)}{\pi(a)\pi(b)}  \leq \frac{1}{\pi(a^*,b^*)\zhat'\ztilde'}
\leq \left(\frac{1}{(1-\epsilon_1)^2}\right) \frac{1}{\zhat\ztilde \pi(a^*,b^*)} 
\leq \frac{1}{(1-\epsilon_1)^2}.\]
\noindent Using Lemma~\ref{lem:zBound} part (2) we will show that when $a$ and $b$ are both good,
\begin{align*}
  r(a,b)&=\frac{w(a)w(b)\pi(a^*,b^*)}{\pi(a)\pi(b)}
  \geq \frac{w(a)w(b)\pi(a^*,b^*)}{
     \left(w(a)\pi(a^*,b^*)\zhat\right)\left(w(b)\pi(a^*,b^*)\ztilde\right)} = \frac{1}{
     \zhat\ztilde\pi(a^*,b^*)} \geq(1-\epsilon_1)^2.
\end{align*}
This allows us to show
  \[\sum_{a~\text{good},\atop
    b~\text{good}}\pi(a,b)\left(\sqrt{r(a,b)}-\frac{1}{\sqrt{r(a,b)}}\right)^2\leq
  5\epsilon_1^2.\]

  Next, we assume either $a$ or $b$ is bad.  We
  will show that the weight of these configurations is so small that
  it overcomes the fact that $r(a,b)$ may not be close to 1.  We use the loose bound
 \[\pi(a,b)\left(\sqrt{r(a,b)}-\frac{1}{\sqrt{r(a,b)}}\right)^2\leq
 \pi(a)\pi(b) + \frac{\pi(a,b)^2}{\pi(a)\pi(b)}.\]
 
In order to upper bound $\frac{\pi(a,b)^2}{\pi(a)\pi(b)}$, we will use Lemma~\ref{lem:zBound} to show $\pi(a,b)/\pi(a)\leq w(b)/\zhat'$ for \emph{good} $a$ and $\pi(a,b)/\pi(b)\leq w(a)$ for all $b.$  Then using Lemma~\ref{lem:zBound} (part 1), we show $\sum_{a,b} \frac{\pi(a,b)^2}{\pi(a)\pi(b)}\leq \epsilon_1$ for \emph{bad} $a$ and all $b$  and $\sum_{a,b} \frac{\pi(a,b)^2}{\pi(a)\pi(b)}\leq \epsilon_1/(1-\epsilon_1)$ for \emph{good} $a$ and \emph{bad} $b$.
 Next, we will bound $\pi(a)\pi(b)$ for \emph{bad} $a$ and all $b$ using Lemma~\ref{lem:zBound} (part 1) and Equation~\ref{prodBound} to obtain
 \[\sum_{a~\text{bad}, b} \pi(a)\pi(b) \leq  \sum_{a~\text{bad}, b} w(a)w(b)\pi(a^*,b^*)^2\ztilde \zhat
 \leq \epsilon_1 (1-\epsilon_1)^{-3}.\]
Finally, we bound $\pi(a)\pi(b)$ for $b$ \emph{bad} and all $a$ using Lemma~\ref{lem:zBound} part (1) and Equation~\ref{prodBound} to obtain
\[\sum_{a, b~\text{bad}} \pi(a)\pi(b) \leq   \sum_{a, b~\text{bad}} w(a)w(b)\pi(a^*,b^*)^2\ztilde \zhat \leq\epsilon_1 (1-\epsilon_1)^{-4}.\]
Putting this all together, we have, for $\epsilon_1\leq .18,$
  \[\sum_{a, b}\pi(a,b)\left(\sqrt{r(a,b)}-\frac{1}{\sqrt{r(a,b)}}\right)^2\leq
  5\epsilon_1^2 + \epsilon_1 + \frac{\epsilon_1}{1-\epsilon_1} + \frac{\epsilon_1}{(1-\epsilon_1)^3}+  \frac{\epsilon_1}{(1-\epsilon_1)^4}\leq
  6\epsilon_1 = 1/n^2.\]
This will show $\Atilde$ and $\Ahat$ are $1/n$-orthogonal.
  The remaining details are in Section~\ref{sec:ortho}.\end{proof}
\begin{Remark}\label{rem:eta}
It is worth pointing out that this decomposition of $\m_{\sigma_{i-1}}$ does not satisfy the regularity conditions of~\cite{jstv04} needed to obtain a better bound.  For any $\upsilon\in \Omegahat_j$, define \[\pihat_j^{j'}(\upsilon)=\pi_j(\upsilon)\frac{\sum_{\upsilon'\in\Omegahat_{j'}}P(\upsilon,\upsilon')}{\Pbar(j,j')}.\]
We need to bound $\pihat_j^{j'}(\upsilon)/\pi_j(\upsilon)$ for any $j,j',$ and $\upsilon\in\Omegahat_j$.   For example, let $\sigma_{2} = 12\_11111\_2\_1\_\_$ and $\sigma_3 = 12311111\_231\_\_$.
Notice that the two permutations $\upsilon_1=12311111423156$ and $\upsilon_2=12311111523146$ are in the same restriction $\Omegahat_j$ (i.e. they are both consistent with $\sigma_3$).  They each have a single move to $\Omegahat_{j'}$: the move of swapping the first 3 with the 4 (in the case of $\upsilon_1$) or 5 (in the case of $\upsilon_2$).  However, the probability of these moves differ by a factor of $(q_{4,3}/q_{5,3})(q_{4,1}/q_{5,1})^5,$ as there are five 1's between them.  In principle, there could be order $n$ smaller numbers between the two numbers we are swapping.  Thus, $\pihat_j^{j'}(\upsilon)/\pi_j(\upsilon)$ cannot be uniformly bounded to within $1\pm \eta$ unless $\eta$ is exponentially large.
\end{Remark}


\appendix

\section{Additional Details from the Decomposition Theorems}
In this section, we provide the full proofs that have been deferred from Section~\ref{sec:prelim}, Section~\ref{sec:decomp}, and Section~\ref{sec:tight}.

\subsection{Proof of Proposition~\ref{prop:matdecomp}}\label{sec:prop:matdecomp}

Next we will prove Proposition~\ref{prop:matdecomp}:
\begin{Proposition}\label{prop:matdecomp}
The matrix $\A$ satisfies 
$\A = \Ahat+\Atilde-I_{|\Omega|}.$
\end{Proposition}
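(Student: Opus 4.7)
My plan is to prove Proposition~\ref{prop:matdecomp} by reducing it to the matrix identity $\transmat = \Phat + \Ptilde - I_{|\Omega|}$ that was already established in Section~\ref{sec:prelim}, after verifying that the symmetrizations $\Ahat$ and $\Atilde$ carry the ``global'' factor $\sqrt{\pi(\sigma)/\pi(\tau)}$ rather than a local one.

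The key observation is that the normalizations in $\pihat_i$ and $\pitilde_j$ cancel. Concretely, for any $\sigma,\tau \in \Omegahat_i$,
\[
\sqrt{\frac{\pihat_i(\sigma)}{\pihat_i(\tau)}} \;=\; \sqrt{\frac{\pi(\sigma)/\pi(\Omegahat_i)}{\pi(\tau)/\pi(\Omegahat_i)}} \;=\; \sqrt{\frac{\pi(\sigma)}{\pi(\tau)}},
\]
and the analogous identity holds for $\pitilde_j$ on $\Omegatilde_j$. First I would use this to show that for all $\sigma,\tau \in \Omega$,
\[
\Ahat(\sigma,\tau) \;=\; \sqrt{\pi(\sigma)/\pi(\tau)}\,\Phat(\sigma,\tau), \qquad \Atilde(\sigma,\tau) \;=\; \sqrt{\pi(\sigma)/\pi(\tau)}\,\Ptilde(\sigma,\tau),
\]
where one needs to note that both sides are zero whenever $\sigma,\tau$ lie in different parts of the corresponding partition (for $\Phat$ this is by construction; for $\Ptilde$ this follows from its description as the disjoint union of the complementary restrictions).

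Next I would verify the claim entry-by-entry. For $\sigma \neq \tau$, exactly one of $\Phat(\sigma,\tau)$ or $\Ptilde(\sigma,\tau)$ can be nonzero (moves between distinct states are rejected in exactly one of the two), so
\[
\Ahat(\sigma,\tau) + \Atilde(\sigma,\tau) \;=\; \sqrt{\pi(\sigma)/\pi(\tau)}\bigl(\Phat(\sigma,\tau) + \Ptilde(\sigma,\tau)\bigr) \;=\; \sqrt{\pi(\sigma)/\pi(\tau)}\,\transmat(\sigma,\tau) \;=\; \A(\sigma,\tau),
\]
and the $I_{|\Omega|}$ term contributes zero off the diagonal. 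For $\sigma = \tau$, the symmetrization prefactor equals $1$, so $\Ahat(\sigma,\sigma) = \Phat(\sigma,\sigma)$, $\Atilde(\sigma,\sigma) = \Ptilde(\sigma,\sigma)$, and $\A(\sigma,\sigma) = \transmat(\sigma,\sigma)$, whence the identity reduces to the diagonal entry of $\transmat = \Phat + \Ptilde - I_{|\Omega|}$.

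There is no real obstacle; the only thing that requires care is the warning ``$\Ahat \neq \A(\Phat)$'' in the body of the paper, so I would explicitly point out that $\Ahat$ is built from the per-block symmetrizations $\A(\Phat_i)$ using the local stationary distributions $\pihat_i$, and that the cancellation of $\pi(\Omegahat_i)$ in the ratio is precisely what lets the local symmetrizations be rewritten with the global weight $\pi$, which is what makes the identity work.
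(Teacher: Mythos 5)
Your proposal is correct, and it is essentially the paper's own argument: both proofs hinge on the cancellation of the $\pi(\Omegahat_i)$ (resp.\ $\pi(\Omegatilde_j)$) normalization factors in the ratio $\sqrt{\pihat_i(\sigma)/\pihat_i(\tau)} = \sqrt{\pi(\sigma)/\pi(\tau)}$ together with the observation that symmetrization leaves the diagonal unchanged, so that the identity reduces to $\transmat = \Phat + \Ptilde - I_{|\Omega|}$. The only difference is cosmetic: you package the cancellation as explicit intermediate identities $\Ahat(\sigma,\tau) = \sqrt{\pi(\sigma)/\pi(\tau)}\,\Phat(\sigma,\tau)$ and $\Atilde(\sigma,\tau) = \sqrt{\pi(\sigma)/\pi(\tau)}\,\Ptilde(\sigma,\tau)$, whereas the paper inlines the same computation directly into the off-diagonal case split.
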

\begin{proof}
Since the symmetrization operation is performed component-wise and 
doesn't change the diagonals of a matrix, it suffices to show that for 
any $\elemone \neq \elemtwo$, 
$\A(\elemone,\elemtwo)=\Ahat(\elemone,\elemtwo)+\Atilde(\elemone,\elemtwo)$.  
If $\elemone$ and $\elemtwo$ are both in $\Omegahat_i$ for some $i \in 
[\rhat]$, then $\Atilde(\elemone,\elemtwo)=0$ and 
\[\A(\elemone,\elemtwo) = 
\frac{\sqrt{\pi(\elemone)}}{\sqrt{\pi(\elemtwo)}} 
\transmat(\elemone,\elemtwo) = 
\frac{\sqrt{\pihat_i(\elemone)}}{\sqrt{\pihat_i(\elemtwo)}} 
\Phat_i(\elemone,\elemtwo) = \Ahat(\elemone,\elemtwo).\] Otherwise, 
$\elemone,\elemtwo \in\Omegatilde_i$ for some $i \in [\rtilde]$, and 
thus $\Ahat(\elemone,\elemtwo)=0$ and, analogously, 
$\A(\elemone,\elemtwo)=\Atilde(\elemone,\elemtwo)$. 
\end{proof}

\subsection{Proof of Lemmas~\ref{lem:eta} and~\ref{lem:mixed}}\label{sec:deferredMain}
In this section we will prove Lemmas~\ref{lem:eta} and~\ref{lem:mixed}.  First we prove $\Par{\xhat}\A=\Par{\xhat}\Atilde,$  which will be useful in the following proof.
\begin{Lemma}\label{lem:xparA}  
The following holds: $\Par{\xhat}\A=\Par{\xhat}\Atilde.$
\end{Lemma}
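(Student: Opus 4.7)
The plan is to use the matrix decomposition from Proposition~\ref{prop:matdecomp}, namely $\A = \Ahat + \Atilde - I_{|\Omega|}$, so that
\[ \Par{\xhat}\A = \Par{\xhat}\Ahat + \Par{\xhat}\Atilde - \Par{\xhat}. \]
Therefore the claimed identity $\Par{\xhat}\A = \Par{\xhat}\Atilde$ reduces to showing that $\Par{\xhat}\Ahat = \Par{\xhat}$, i.e.\ that $\Par{\xhat}$ is fixed by the block--diagonal matrix $\Ahat$.

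To do this, I would unpack the construction of $\Par{\xhat}$. By definition, $\Par{\xhat} = \sum_{i=1}^{\rhat} e_i \otimes \Par{\xhat_i}$ where $\Par{\xhat_i}$ is the projection of $\xhat_i$ onto the line spanned by $\sqrt{\pihat_i}$. Writing $\Par{\xhat_i} = \alpha_i \sqrt{\pihat_i}$ for scalars $\alpha_i$, and using the block--diagonal structure of $\Ahat$ (whose $i$-th diagonal block is $\Ahat_i$), I would compute
\[ \Par{\xhat}\Ahat \;=\; \sum_{i=1}^{\rhat} e_i \otimes (\Par{\xhat_i}\,\Ahat_i) \;=\; \sum_{i=1}^{\rhat} \alpha_i\, e_i \otimes \bigl(\sqrt{\pihat_i}\,\Ahat_i\bigr). \]
Since $\Ahat_i$ is the symmetrization of the reversible restriction chain $\Phat_i$ with stationary distribution $\pihat_i$, its top eigenvector is $\sqrt{\pihat_i}$ with eigenvalue $1$; that is, $\sqrt{\pihat_i}\,\Ahat_i = \sqrt{\pihat_i}$. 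Substituting this in gives $\Par{\xhat}\Ahat = \sum_i \alpha_i\, e_i \otimes \sqrt{\pihat_i} = \Par{\xhat}$, completing the argument.

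The whole proof is essentially mechanical once one identifies the correct block structure; there is no real obstacle. The one step that warrants a brief justification is that $\sqrt{\pihat_i}$ is indeed the top eigenvector of $\Ahat_i$, which is a standard consequence of the similarity $\Ahat_i = D_i \Phat_i D_i^{-1}$ with $D_i = \diag(\sqrt{\pihat_i})$ together with reversibility of $\Phat_i$; this fact was already recorded in Section~\ref{sec:prelim} when discussing the symmetrization $\A = \A(\transmat)$.
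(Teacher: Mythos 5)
Your proposal is correct and matches the paper's proof essentially line for line: both reduce via $\A=\Ahat+\Atilde-I_{|\Omega|}$ to showing $\Par{\xhat}\Ahat=\Par{\xhat}$, and both establish this by using the block structure $\Par{\xhat}\Ahat=\sum_i e_i\otimes(\Par{\xhat_i}\Ahat_i)$ together with the fact that $\sqrt{\pihat_i}$ is the eigenvalue-$1$ eigenvector of $\Ahat_i$. Your extra remark spelling out the similarity $\Ahat_i=D_i\Phat_i D_i^{-1}$ is a harmless explicit justification of a fact the paper takes as known from Section~\ref{sec:prelim}.
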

\begin{proof}
As $\A = \Ahat+\Atilde-I_{|\Omega|}$, we have  
$\Par{\xhat}\A = 
\Par{\xhat}\Ahat + \Par{\xhat}\Atilde - \Par{\xhat}I_{|\Omega|}.$
The statement follows if $\Par{\xhat}\Ahat=\Par{\xhat}$, as the first 
and third terms cancel.  To see this, note that 
$\Par{\xhat_i}\Ahat_i = \Par{\xhat_i}$ for all $i \in [\rhat]$, as 
$\Ahat_i$ has the same eigenvalues as $\Phat_i$ and $\Par{\xhat}$ is 
in the space spanned by eigenvectors of $\Phat_i$ with eigenvalue 1.
Thus
\vspace{-.1in}

\[\Par{\xhat}\Ahat = \left(\sum_i e_i \otimes \Par{\xhat_i}\right)\Ahat
                   = \sum_i \left(e_i \otimes \Par{\xhat_i}\right)\Ahat
                   = \sum_i e_i \otimes \Par{\xhat_i}\Ahat_i
                   = \sum_i e_i \otimes \Par{\xhat_i}
                   = \Par{\xhat}.\]
\vspace{-.1in}
\end{proof}

Next, we prove Lemma~\ref{lem:eta}(1), which states that $\deltahpeh \leq \lambda_{\max}.$  \\

\noindent\emph{Proof of Lemma~\ref{lem:eta}}(1).

We wish to bound 
$\frac{\langle\Perp{\xhat},\Perp{\xhat}\Ahat\rangle}{\|\Perp{\xhat}\|^2}$. 
By construction, $\Perp{\xhat_i}$ is perpendicular to $\sqrt{\pihat_i}$ 
for all $i \in [\rhat]$, which is the top eigenvector of $\Ahat_i$. By 
Equation~\ref{eq:eigen}, we know 
$\langle\Perp{\xhat_i},\Perp{\xhat_i}\Ahat_i\rangle \leq \lambda_i 
\|\Perp{\xhat_i}\|^2$. Thus, 
\[ \deltahpeh \|\Perp{\xhat}\|^2 
	= \langle\Perp{\xhat},\Perp{\xhat}\Ahat\rangle 
	= \sum_i\langle\Perp{\xhat_i},\Perp{\xhat_i}\Ahat_i\rangle 
	\leq \sum_i\lambda_i\|\Perp{\xhat_i}\|^2 
	~\leq \lambda_{\max}\|\Perp{\xhat}\|^2.\]
\qed

Next, we prove Lemma~\ref{lem:eta}(2), which states that $\deltahpat \leq \lambdabar.$  \\

\noindent\emph{Proof of Lemma~\ref{lem:eta}}(2).
Recall $\Par{\xhat_i}$ is parallel to $\sqrt{\pihat_i}$ for all $i \in 
[\rhat]$. Define $\alpha_i \in \R$ by $\Par{\xhat_i} = 
\alpha_i\sqrt{\pihat_i}$, and let $\alpha \in \R^{\rhat}$ be the vector 
with $(\alpha)_i = \alpha_i$. 
Using the fact that $\pihat_i$ is a probability distribution, notice
\[\|\Par{\xhat}\|^2 = \sum_i \sum_{\elemone\in\Omegahat_i}(\alpha_i\sqrt{\pihat_i(\elemone)})^2 
	= \sum_i \alpha_i^2 \sum_{\elemone\in\Omegahat_i}\pihat_i(\elemone)
	= \sum_i \alpha_i^2 
	= \|\alpha\|^2.\]
Let $\Abar=\A(\Pbar)$ and $i,j \in [\rhat]$.  Then
\[ \Abar(i,j) = \frac{\sqrt{\pi(\Omegahat_i)}}{\sqrt{\pi(\Omegahat_j)}}\Pbar(i,j)
	= \frac{\sqrt{\pi(\Omegahat_i)}}{\sqrt{\pi(\Omegahat_j)}}\frac{1}{\pi(\Omegahat_i)}\sum_{\elemone\in\Omegahat_i\atop \elemtwo\in\Omegahat_j}\pi(\elemone)\transmat(\elemone,\elemtwo)
	= \frac{1}{\sqrt{\pi(\Omegahat_j)\pi(\Omegahat_i)}}\sum_{\elemone\in\Omegahat_i\atop \elemtwo\in\Omegahat_j}\pi(\elemone)\transmat(\elemone,\elemtwo) .\]
and for any $\elemtwo \in\Omega_j$, we have
\[ (\Par{\xhat}\A)_{\elemtwo} = \sum_i\sum_{\elemone\in\Omegahat_i}\alpha_i\sqrt{\pihat_i(\elemone)}\A(\elemone,\elemtwo)
	= \sum_{i}\sum_{\elemone\in\Omegahat_i}\alpha_i\sqrt{\pihat_i(\elemone)}\frac{\sqrt{\pi(\elemone)}}{\sqrt{\pi(\elemtwo)}}\transmat(\elemone,\elemtwo)
	= \sum_{i}\frac{\alpha_i}{\sqrt{\pi(\Omegahat_i)}\sqrt{\pi(\elemtwo)}}\sum_{\elemone\in\Omegahat_i}\pi(\elemone)\transmat(\elemone,\elemtwo). \]
Therefore,
\begin{align}
\langle \Par{\xhat},\Par{\xhat}\A\rangle &=
 \sum_j\sum_{\elemtwo\in\Omegahat_j} \alpha_j\sqrt{\pihat_j(\elemtwo)}\sum_{i}\frac{\alpha_i}{\sqrt{\pi(\Omegahat_i)}\sqrt{\pi(\elemtwo)}}\sum_{\elemone\in\Omega_i}\pi(\elemone)\transmat(\elemone,\elemtwo)\nonumber\\
  &= \sum_{i,j}\frac{\alpha_i\alpha_j}{\sqrt{\pi(\Omegahat_j)\pi(\Omegahat_i)}}\sum_{\elemtwo\in\Omegahat_j\atop
   \elemone\in\Omega_i}\pi(\elemone)\transmat(\elemone,\elemtwo)\nonumber\\
  &= \sum_{i,j}\alpha_i\alpha_j\Abar(i,j)\nonumber\\
  &= \langle\alpha,\alpha\Abar\rangle ~\leq ~
 \bar{\lambda}\|\alpha\|^2 ~=~ \bar{\lambda}\|\Par{\xhat}\|^2.\nonumber
\end{align}
Recalling that $\deltahpat = \langle \Par{\xhat},\Par{\xhat}\A\rangle / \|\Par{\xhat}\|^2$
and that we showed $\Par{\xhat}\A=\Par{\xhat}\Atilde$ in Lemma~\ref{lem:xparA}, we are done.
\qed\\

Recall that for any $v \in \R^{|\Omega|}$, we can write 
$v=\sum_{i}a_i\vtilde_i$ for some constants $a_1,a_2,\ldots, 
a_{|\Omega|} \in \R$, and we have $\|v\|^2 = \sum_i 
a_i^2\|\vtilde_i\|^2,$ and $v\Atilde = \sum_i 
a_i\lambdatilde_i\vtilde_i$. We can now prove Lemma~\ref{lem:mixed}. \\

\noindent\emph{Proof of Lemma~\ref{lem:mixed}}.
Let $\Perp{\xhat} = \sum_i\Perp{a_i}\vtilde_i$ and $\Par{\xhat} = 
\sum_i\Par{a_i}\vtilde_i$ be the basis representations of 
$\Perp{\xhat}$ and $\Par{\xhat}$.  Since $\Perp{\xhat}$ is 
perpendicular to $\Par{\xhat}$, we have 
$0=\langle\Perp{\xhat},\Par{\xhat}\rangle = 
\sum_{i}\Perp{a_i}\Par{a_i}\|\vtilde_i\|^2$. Notice
\[ \langle\Perp{\xhat},\Par{\xhat}\Atilde\rangle = \sum_{i}\lambdatilde_i\Perp{a_i}\Par{a_i}\|\vtilde_i\|^2
	= \sum_{i}\Perp{a_i}\Par{a_i}\|\vtilde_i\|^2 - \sum_{i}(1-\lambdatilde_i)\Perp{a_i}\Par{a_i}\|\vtilde_i\|^2
	= 0 - \sum_{i}(1-\lambdatilde_i)\Perp{a_i}\Par{a_i}\|\vtilde_i\|^2. \]
Define vectors $\Par{\bigx}:=\sum_i \sqrt{1-\lambdatilde_i}\Par{a_i}\vtilde_i$ and
$\Perp{\bigx}:= \sum_i \sqrt{1-\lambdatilde_i}\Perp{a_i}\vtilde_i$.  Then because the
$\vtilde_i$ are mutually orthogonal, we have
\[ \langle \Perp{\bigx},\Par{\bigx}\rangle = \sum_i (1-\lambdatilde_i)\Perp{a_i}\Par{a_i}\|\vtilde_i\|^2 
	= - \langle \Perp{\xhat},\Par{\xhat}\A \rangle.\]
By Cauchy-Schwartz, $|\langle \Perp{\bigx},\Par{\bigx}\rangle|\leq
\|\Perp{\bigx}\|\|\Par{\bigx}\|$.  Moreover,
\[ \|\Par{\bigx}\|^2 = \sum_i (1-\lambdatilde_i)(\Par{a_i})^2\|\vtilde_i\|^2 
	= \|\Par{\xhat}\|^2 - \sum_i \lambdatilde_i(\Par{a_i})^2\|\vtilde_i\|^2
	= \|\Par{\xhat}\|^2 - \langle \Par{\xhat}, \Par{\xhat}\Atilde \rangle
	= (1-\deltahpat)\|\Par{\xhat}\|^2. \]
Similarly, $\|\Perp{\bigx}\|^2 = (1-\deltahpet)\|\Perp{\xhat}\|^2$.  Taken
together, this proves the lemma.
\qed\\

\subsection{Deferred proofs for the first decomposition theorem}\label{sec:minimized}

Recall in the proof of Theorem~\ref{thm:main}, we showed
\[
\Gap(\m) \geq \min_{x\perp\sqrt{\pi}:\|x\|=1} 
(1-\deltahpeh)\|\Perp{\xhat}\|^2+\left(\sqrt{1-\deltahpet}\|\Perp{\xhat}\| - \sqrt{1-\deltahpat}\|\Par{\xhat}\|\right)^2.
\]
To complete the proof, we wish to show that the minimum occurs when
$\deltahpet$ is minimized, and
\[
\Gap(\m) \geq \min_{p^2+q^2=1}\gmin q^2+\left(q\sqrt{1-\deltahpet} - p\sqrt{\gammabar}\right)^2.\]

 \begin{proof}
We assume $(1-\deltahpet)\neq 0$, as otherwise the statement is obvious.  Similarly, we assume $\gmin$ and $\gammabar$ are nonzero.
By setting $q=\|\Perp{\xhat}\|$ and $p=\|\Par{\xhat}\|$, we immediately get
\[\Gap(\m) \geq \min_{p^2+q^2=1} 
(1-\deltahpeh)q^2+\left(\sqrt{1-\deltahpet}q - \sqrt{1-\deltahpat}p\right)^2.\]
It is easy to see that the expression on the right is minimized when $(1-\deltahpeh)$ is minimized.  Next we use a bit of calculus to show that the expression is minimized when $(1-\deltahpet)$ is maximized and when $(1-\deltahpat)$ is minimized.

Define $f(\gamma,r,s) = \min_{p^2+q^2=1} \gamma q^2+\left(sq -rp\right)^2$.
 It is easy to show that the minimum over $p$ and $q$ occurs at the
 values $p^*$ and $q^*$ satisfying $p^*/q^*=(-b+\sqrt{b^2+4})/2$,
 where $b=(r^2-s^2-\gamma)/(rs)$.   
The function $f$ is increasing with $r$ when the partial derivative $f_r = -2(sq-rp)p$ is positive; i.e. when $p/q\geq s/r.$  Similarly, $f$ is decreasing with $s$ when the partial derivative $f_s = 2(sq-rp)q$ is negative, again when $p/q\geq s/r.$  Thus it suffices to show that $p^*/q^*\geq s/r$.  Notice $b\leq r/s - s/r$.  This implies $\sqrt{b^2+4}\geq 2s/r+b$, and so $p^*/q^*\geq s/r$.
 \end{proof}

\subsection{Comparison with previous work}\label{sec:compare}
In this section, we compare Theorem~\ref{thm:main} with results
from~\cite{madr,MR00,jstv04}. 

First, we notice $\deltahpet\geq \mu_{\min}$, where $\mu_{\min}$ is
the smallest eigenvalue of $\Atilde$.  To see this, recall that 
$\xhat = \sum_i \Perp{a_i}v_i$, $\xhat \Atilde= \sum_i 
\Perp{a_i}\mu_iv_i$, and therefore $\langle \xhat, \xhat\Atilde\rangle 
=\sum_i \mu_i (\Perp{a_i})^2\|v_i\|^2\geq \mu_{\min}\|\xhat\|^2.$
Thus, if $\m$ is lazy then $\mu_{\min}\geq 0$ and so $\deltahpet\geq
0$.

\begin{Corollary}\label{cor:compareproduct}
  Assume $\m$ is lazy.  Then $\gamma\geq \gmin\gammabar/3$.
\end{Corollary}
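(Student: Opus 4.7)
The plan is to apply Theorem~\ref{thm:main} and recognize the quantity being minimized as a positive semidefinite quadratic form in $(p,q)$, so that the minimum over the unit circle is computed as the smallest eigenvalue of a $2\times 2$ matrix. Writing out $\gmin q^2+\gammabar(q\rho-p)^2$, I would collect the coefficients and observe it equals $(p\ q)\,M\,(p\ q)^T$ where
\[ M = \begin{pmatrix} \gammabar & -\gammabar\rho \\ -\gammabar\rho & \gmin + \gammabar\rho^2\end{pmatrix}. \]
Since $\gmin,\gammabar\geq 0$ the form is non-negative, so $M$ is PSD and the constrained minimum is its smaller eigenvalue $\lambda_{\min}$.

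Next I would compute the two invariants of $M$: the cross terms cancel to give $\det M = \gmin\gammabar$, and the trace is $\operatorname{tr} M = \gmin+\gammabar+\gammabar\rho^2 = \gmin+\gammabar+(1-\deltahpet)$ by the definition of $\rho$ from Theorem~\ref{thm:main}. Since $M$ is PSD its eigenvalues are non-negative with $\lambda_{\min}\lambda_{\max} = \gmin\gammabar$ and $\lambda_{\min}+\lambda_{\max}=\operatorname{tr} M$, so
\[ \lambda_{\min} = \frac{\gmin\gammabar}{\lambda_{\max}} \geq \frac{\gmin\gammabar}{\operatorname{tr} M}. \]

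The only real obstacle is showing $\operatorname{tr} M \leq 3$, which requires each of the three summands to be at most $1$. For $1-\deltahpet\leq 1$, the paragraph immediately preceding the corollary already gives $\deltahpet\geq \mu_{\min}\geq 0$ when $\m$ is lazy. For $\gmin\leq 1$ and $\gammabar\leq 1$, I would verify that laziness is inherited by the restriction and projection chains: each $\Phat_i$ only adds self-loop mass (it rejects transitions leaving $\Omegahat_i$), and $\Pbar(i,i) = \pi(\Omegahat_i)^{-1}\sum_{\sigma\in\Omegahat_i}\pi(\sigma)P(\sigma,\sigma)\cdot \mathbf{1}_{\{\text{restricted to }\Omegahat_i\}}+\cdots \geq 1/2$ since every $\sigma$ has self-loop probability at least $1/2$. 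A lazy reversible chain has all eigenvalues in $[0,1]$, so $\gammahat_i,\gammabar\in[0,1]$ as required.

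Combining, $\operatorname{tr} M \leq 1+1+1 = 3$, hence $\lambda_{\min}\geq \gmin\gammabar/3$, and Theorem~\ref{thm:main} yields $\Gap(\m)\geq \gmin\gammabar/3$. The calculation is short; the one point to be careful about is that laziness really does propagate through both decomposition operations, since without that all three summands of $\operatorname{tr} M$ might be larger than $1$.
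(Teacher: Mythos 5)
Your proof is correct but takes a genuinely different route from the paper. The paper sets $\deltahpet = 0$ (justified because the bound of Theorem~\ref{thm:main} is monotone in $\deltahpet$) and then runs a two-case elementary analysis: if $q^2 \geq \gammabar/3$ the first term already suffices, and otherwise the chain of inequalities
$(\sqrt{1-q^2} - q/\sqrt{\gammabar})^2 \geq (\sqrt{1-q^2/\gammabar} - q/\sqrt{\gammabar})^2 \geq 1/3 - q^2/\gammabar$
(using $\gammabar \leq 1$) combined with $\gammabar/3 - q^2(1-\gmin) > \gmin\gammabar/3$ (using $q^2 < \gammabar/3$ and $\gmin \leq 1$) finishes. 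You instead read the minimand as a Rayleigh quotient on the unit circle, identify the PSD matrix $M$, and use $\lambda_{\min}(M) = \det M / \lambda_{\max}(M) \geq \det M / \operatorname{tr} M$. The closed-form $\det M = \gmin\gammabar$ and $\operatorname{tr} M = \gmin + \gammabar + (1-\deltahpet)$ reduce the whole corollary to bounding three summands by 1, which makes the provenance of the constant $3$ transparent and explains at a glance the paper's subsequent remark about getting the constant $1/2$ under stronger hypotheses on $\gmin$ and $\gammabar$. Note that both proofs need $\gmin \leq 1$ and $\gammabar \leq 1$; you explicitly verify that laziness propagates to the restriction chains (they only add self-loop mass) and to the projection chain ($\Pbar(i,i) \geq 1/2$), a point the paper uses silently. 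Your argument also correctly relies on the uniform lower bound $\deltahpet \geq \mu_{\min} \geq 0$ established in the paragraph preceding the corollary, so the trace bound holds for every minimizing $x$. Overall this is a cleaner and slightly more informative derivation at the cost of invoking the variational characterization of the smallest eigenvalue.
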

In fact, one can show that the constant is $1/2$ if $\gmin,\gammabar\leq 1/2$ or
$1-\deltahpet\leq 1/2$.
\begin{proof}
Since the bound in Theorem~\ref{thm:main} is minimized when 
$\deltahpet$ is minimized, we may assume $\deltahpet=0$.
Thus, $\rho=1/\sqrt{\gammabar}$.  We will show that for all $p,q$
satisfying $p^2+q^2=1$, we have
\[\gmin q^2+\gammabar\left(p - \frac{q}{\sqrt{\gammabar}}\right)^2\geq
\frac{ \gmin\gammabar}{3}.\]
Clearly if $q^2\geq \gammabar/3$, then we are done.  So we may assume
$q^2/\gammabar<1/3$.  Notice that since $\gammabar\leq 1$,
\[\left(p - \frac{q}{\sqrt{\gammabar}}\right)^2
~=~ \left(\sqrt{1-q^2} - \frac{q}{\sqrt{\gammabar}}\right)^2
~\geq~ \left(\sqrt{1-\frac{q^2}{\gammabar}} -
\frac{q}{\sqrt{\gammabar}}\right)^2.\]
As $q^2/\gammabar<1/3$, we have
$\left(\sqrt{1-\frac{q^2}{\gammabar}} -
\frac{q}{\sqrt{\gammabar}}\right)^2\geq \frac{1}{3} -
\frac{q^2}{\gammabar}.$  Therefore,
\begin{align*}
  \gmin q^2+\gammabar\left(p -  \frac{q}{\sqrt{\gammabar}}\right)^2
  &\geq  \gmin q^2+\frac{\gammabar}{3} -q^2\\
  &= \frac{\gammabar}{3} - q^2(1-\gmin)\\
  &> \frac{\gammabar}{3}(1-(1-\gmin)) \\
  &= \gmin\gammabar/3.
\end{align*}
\end{proof}


Next, we will show that a variant of Theorem~1 of~\cite{jstv04} follows from
Theorem~\ref{thm:main} from this paper, which is the content of
Corollary~\ref{cor:compareT}.  Let $T = 
\max_i\max_{\sigma\in\Omegahat_i}\sum_{\tau\in\Omega\setminus\Omegahat_i}\transmat(\sigma,\tau)$.  This is the parameter $\gamma$ from~\cite{jstv04}. We will now show that $\deltahpet\geq 1-2T$.
Notice the probability of a move in $\Ahat$ is at least 
$1-T$, so every element has a self-loop probability of at least $1-T$ 
in $\Atilde$. 
Thus, $\Atilde$ can be 
written as $\Atilde = T\Atilde' + I(1-T)$ for some transition matrix 
$\Atilde'$ with minimum eigenvalue $-1$.  This implies that the 
minimum eigenvalue of $\Atilde$ satisfies $\mu_{\min}\geq 1-2T$.  On 
the other hand, $\deltahpet\geq \mu_{\min}$.  

\begin{Corollary}\label{cor:compareT}
 $\gamma\geq \min\left\{\frac{\gammabar}{3}, \frac{\gmin\gammabar}{3T+\gammabar}\right\}$.
\end{Corollary}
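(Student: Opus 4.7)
The plan is to apply Theorem~\ref{thm:main} together with the bound $\deltahpet \geq 1-2T$ established in the paragraph preceding the corollary, and then reduce everything to a one-parameter AM--GM calculation. Since the bound on the right-hand side of Theorem~\ref{thm:main} is monotonically decreasing in $\rho$ (equivalently, minimized as $\deltahpet$ is minimized), I may substitute the worst-case value $\rho^2 = 2T/\gammabar$. Replacing $q$ by $|q|$ and $p$ by $|p|$ only decreases $(q\rho - p)^2$, so it suffices to prove
\[
\min_{\substack{p^2+q^2=1\\ p,q\geq 0}} \gmin q^2 + \gammabar(p - q\rho)^2 \;\geq\; \min\!\left\{\tfrac{\gammabar}{3},\ \tfrac{\gmin\gammabar}{3T+\gammabar}\right\}.
\]

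The central step is the following one-parameter inequality. For any $\beta\in(0,1)$, apply the weighted AM--GM inequality $2ab \leq (1-\beta)a^2 + b^2/(1-\beta)$ to $a=\sqrt{1-q^2}$ and $b = q\rho$, giving
\[
(p - q\rho)^2 \;\geq\; \beta(1-q^2) \;-\; \frac{q^2\rho^2\beta}{1-\beta}.
\]
Multiplying by $\gammabar$, using $\gammabar\rho^2 = 2T$, and adding the term $\gmin q^2$ yields
\[
\gmin q^2 + \gammabar(p - q\rho)^2 \;\geq\; \gammabar\beta \;+\; q^2\!\left[\gmin - \beta\!\left(\gammabar + \tfrac{2T}{1-\beta}\right)\right].
\]
So whenever $\beta$ is chosen small enough that the bracket is nonnegative, we obtain the uniform lower bound $\gamma\geq \gammabar\beta$, independent of $q$.

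The final step is to split on the size of $\gmin$ and optimize $\beta$. If $\gmin \geq (\gammabar + 3T)/3$, take $\beta = 1/3$; then $\beta(\gammabar + 2T/(1-\beta)) = (\gammabar+3T)/3 \leq \gmin$, and we conclude $\gamma\geq \gammabar/3$. Otherwise $\gmin < (\gammabar+3T)/3$, and I take $\beta = \gmin/(3T+\gammabar)$; a short direct verification (clearing denominators reduces it to $\gmin\leq (3T+\gammabar)/3$) shows the bracketed coefficient is nonnegative, yielding $\gamma\geq \gammabar\beta = \gmin\gammabar/(3T+\gammabar)$. A brief check confirms that in the first regime one has $\gmin\gammabar/(3T+\gammabar) \geq \gammabar/3$ and in the second the reverse, so each case achieves exactly the claimed minimum. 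The main subtlety is the tuning of $\beta$ in the second regime so as to match the target $\gmin\gammabar/(3T+\gammabar)$ precisely; everything else is routine algebra.
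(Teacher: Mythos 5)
Your proof is correct, and it takes a genuinely different route from the paper's.

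The paper's proof first reduces to showing, for all $p^2+q^2=1$, that
$\gmin q^2 + \gammabar(p-\rho q)^2 \geq \min\{\gammabar/3,\ \gmin/(1+1.5\rho^2)\}$,
and then sets up two rational expressions $f_1, f_2$ in $p,q,\rho$ such that the first branch holds iff $\gmin/\gammabar \leq f_1$ and the second holds iff $\gmin/\gammabar \geq f_2$. It then verifies $f_1 \geq f_2$ unconditionally, which (after clearing denominators) comes down to the perfect square $(2p - 3\rho q)^2 \geq 0$. Your argument instead introduces a free weight $\beta \in (0,1)$, applies the weighted AM--GM bound to linearize the cross term, and observes that the resulting expression $\gammabar\beta + q^2\bigl[\gmin - \beta(\gammabar + 2T/(1-\beta))\bigr]$ is at least $\gammabar\beta$ once the bracket is nonnegative; you then pick $\beta = 1/3$ or $\beta = \gmin/(3T+\gammabar)$ depending on which branch of the min is smaller. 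I checked the algebra in both regimes: the bracket condition reduces precisely to $\gmin \lessgtr (3T+\gammabar)/3$, matching the case split, and $\beta < 1$ is guaranteed by the case hypothesis (with $\gmin = 0$ handled trivially). One small stylistic point: the appeal to monotonicity in $\rho$ at the outset is unnecessary, since after the AM--GM step the $\rho^2$ term appears with a negative sign, so one can simply use $\gammabar\rho^2 \leq 2T$ in place of $\gammabar\rho^2 = 2T$ without invoking Section~\ref{sec:minimized}. Overall, your parameterized approach makes the trade-off between the two branches of the min explicit via the choice of $\beta$, whereas the paper's proof is a single algebraic identity; both give the same constants.
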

\begin{proof}
We wish to show $\gamma\geq \min\left\{\frac{\gammabar}{3}, \frac{\gmin\gammabar}{3T+\gammabar}\right\}$.
  As $\rho^2\leq 2T/\gammabar$, it suffices to show that for all $p^2+q^2=1$,
 \[\gmin q^2+\gammabar(p-\rho q)^2\geq \min\left\{\frac{\gammabar}{3}, \frac{\gmin}{1+1.5\rho^2}\right\}.\]
 If $q^2\geq \frac{1}{1+1.5\rho^2}$ then we are done, so we may assume
 $\frac{1}{1+1.5\rho^2}-q^2\geq 0$.
 Define
  \[f_1=\frac{(p-\rho q)^2}{1/(1+1.5\rho^2) - q^2} \quad \text{and}
  \quad f_2=\frac{1/3 - (p-\rho q)^2}{q^2}.\]
 Notice that $\gmin q^2+\gammabar(p-\rho q)^2\geq
 \frac{\gmin}{1+1.5\rho^2}$ if and only if $\gmin/\gammabar\leq f_1$.
 On the other hand, $\gmin q^2+\gammabar(p-\rho q)^2\geq \gammabar/3$ if
 and only if $\gmin/\gammabar\geq f_2$.  Thus, it suffices to show
 $f_1\geq f_2$ for all parameter choices.
First, notice that since $\frac{1}{1+1.5\rho^2}-q^2\geq 0$, we have  $f_1\geq f_2$ whenever
 \[1\geq \left(\frac{1}{3(p-\rho
   q)^2}-1\right)\left(\frac{1}{(1+1.5\rho^2)q^2}-1\right).\]
This is satisfied whenever $3(p-\rho q)^2\geq 1-(1+1.5\rho^2)q^2$.
Expanding and bringing all to the left hand side shows this is true because $(2p-3\rho q)^2\geq 0$.
\end{proof}

We do not currently have a comparison between our
Theorem~\ref{thm:main} and Corollary~2 of~\cite{jstv04} that requires
a pointwise bound of $\pi_i^j$.  However, see Remark~\ref{rem:eta}
which shows that their result would not be sufficient for our
application to permutations.

\subsection{Proof of the Complementary Decomposition Theorem}\label{sec:compDec}
First, we prove Theorem~\ref{thm:primaldual}.
Recall that $\lambdatilde_1\geq\lambdatilde_2\geq\ldots\geq 
\lambdatilde_{|\Omega|}$ are the eigenvalues of $\Atilde$ with 
corresponding eigenvectors $\vtilde_1,\vtilde_2,\ldots, 
\vtilde_{|\Omega|}$, and that for any $v \in \R^{|\Omega|}$, we write 
$v=\sum_{i}a_i\vtilde_i$ for some constants $a_1,a_2,\ldots, 
a_{|\Omega|} \in \R$.  Also, $\|v\|^2 = \sum_i a_i^2\|\vtilde_i\|^2,$ 
and $v\Atilde = \sum_i a_i\lambdatilde_i\vtilde_i$. Define the set 
$S=\{i:\mu_i=1\}$.
Let 
$\deltatpet = 
\frac{\langle\Perp{\xtilde},\Perp{\xtilde}\Atilde\rangle}{\|\Perp{\xtilde}\|^2}$.
 Now we can make an explicit statement about the gap 
of $\m$; notice the equality in Equation~\ref{eq:primaldual}.  
\begin{customthm}{\ref{thm:primaldual}}
\begin{equation}\label{eq:primaldual}
\Gap(\m)= \min_{x\perp \sqrt{\pi}, \|x\|=1} (1-\deltahpeh)\|\Perp{\xhat}\|^2+
(1-\deltatpet)\|\Perp{\xtilde}\|^2.
\end{equation}
In particular,
\begin{equation}\label{eq:primaldual2}
\Gap(\m)\geq \min_{x\perp \sqrt{\pi}, \|x\|=1}  
\gminhat\|\Perp{\xhat}\|^2+\gmintilde\|\Perp{\xtilde}\|^2.
\end{equation}
\end{customthm}

\begin{proof}
Notice $\deltatpet\|\Perp{\xtilde}\|^2 = \sum_{i\in S}\lambdatilde_i(\Perp{a_i}+\Par{a_i})^2$.
Thus,
\[(1-\deltatpet)\|\Perp{\xtilde}\|^2 = \sum_{i}(1-\lambdatilde_i)(\Perp{a_i}+\Par{a_i})^2
   = (1-\deltahpet)\|\Perp{\xhat}\|^2 + (1-\deltahpat)\|\Par{\xhat}\|^2 - 
   2\langle \Perp{\xhat},\Par{\xhat}\Atilde\rangle.\]
On the other hand, from Equation~\ref{eq:strongestbound}, we have
\[1-\langle x,x\A\rangle = (1-\deltahpeh)\|\Perp{\xhat}\|^2+(1-\deltahpet)\|\Perp{\xhat}\|^2+
(1-\deltahpat)\|\Par{\xhat}\|^2 - 2\langle \Perp{\xhat},\Par{\xhat}\Atilde\rangle.\]
Thus, for all $x\perp \sqrt{\pi}$ with norm 1, we have
\[1-\langle x,x\A\rangle=(1-\deltahpeh)\|\Perp{\xhat}\|^2+(1-\deltatpet)\|\Perp{\xtilde}\|^2.\]
Applying Equation~\ref{eq:eigen}, we get Equation~\ref{eq:primaldual}.
To get Equation~\ref{eq:primaldual2}, we apply Lemma~\ref{lem:eta}, which yields
$1-\deltahpeh \geq 1-\lambda_{\max} = \gminhat$.  An analogous statement
to Lemma~\ref{lem:eta} holds for $\deltatpet$, and shows 
$1-\deltatpet \geq \gmintilde$. 
\end{proof}

We can now prove the Complementary Decomposition theorem.\\

\noindent\emph{Proof of Theorem~\ref{thm:indep}}.
By Equation~\ref{eq:primaldual2} from Theorem~\ref{thm:primaldual}, 
it suffices to show that $\|\Perp{\xhat}\|^2+ \|\Perp{\xtilde}\|^2\geq
(1-\epsilon)^2$. If $\|\Perp{\xhat}\|^2\geq (1-\epsilon)^2$ we are done,
so we may assume not.
As the vectors within each row of the table are orthogonal, we have
$(1-\epsilon)^2>\|\Perp{\xhat}\|^2 =\|\c\|^2+ \|\d\|^2$.
Furthermore, since
\[0=\langle\Perp{\xhat}, \Par{\xhat}\rangle =\sum_i
\Perp{a_i}\Par{a_i}\|v_i\|^2\]
we have
\[\langle\c,\a\rangle = \sum_{i\notin S}\Perp{a_i}\Par{a_i}\|v_i\|^2 = 
-\sum_{i \in S}\Perp{a_i}\Par{a_i}\|v_i\|^2 = -\langle\d,\b\rangle.\]
Thus,
\begin{align}
    \|\Perp{\xtilde}\|^2 &=\|\c+\a\|^2\nonumber\\
    &=\langle\c+\a,\c+\a\rangle\nonumber\\
    &=\|\c\|^2+\|\a\|^2+2\langle\c,\a\rangle\label{eq:innerabcd}\\
    &=\|\c\|^2+\|\a\|^2-2\langle\d,\b\rangle\nonumber\\
    &\geq\|\c\|^2+\|\a\|^2-2\|\d\|\|\b\|.\nonumber
\end{align}
We used Cauchy-Schwartz for the final inequality.  Putting everything 
together,
\begin{align*}
\|\Perp{\xhat}\|^2+ \|\Perp{\xtilde}\|^2 &\geq\|\c\|^2+ \|\d\|^2 +\|\c\|^2+\|\a\|^2-2\|\d\|\|\b\|\\
  &=\|\c\|^2+ \|\d\|^2 + (1-\|\d\|^2-\|\b\|^2)-2\|\d\|\|\b\|\\
  &= 1 + \|\c\|^2-\|\b\|^2-2\|\d\|\|\b\|\\
  &\geq 1 + \|\c\|^2-\epsilon^2-2\epsilon\sqrt{(1-\epsilon)^2-\|\c\|^2}\\
  &\geq 1 -\epsilon^2-2\epsilon(1-\epsilon)\\
  &= (1 -\epsilon)^2.
\end{align*}
Notice that we have used the fact that $1=\|x\|^2 = \|\b\|^2+\|\a\|^2+\|\d\|^2+\|\c\|^2$.  
Also, we used the bound $\|\c\|^2 \geq 0$.  If $\|\c\|$ is much smaller than $\|\b\|$ then the bound \[\|\Perp{\xhat}\|^2+ \|\Perp{\xtilde}\|^2\geq 1-\epsilon^2 - 2\|\c\|\]
is tighter.  This is obtained by applying Cauchy-Schwartz to $\langle\c,\a\rangle$ in Equation~\ref{eq:innerabcd}.  However, to apply this requires analyzing $\|\c\|$, which appears challenging. \qed\\

\subsection{Bounding $\epsilon$ revisited: the proof of Lemma~\ref{lem:directproduct}}\label{sec:r}
In this section, we show how to bound $\epsilon.$  Recall \[r(i,j)=\frac{\pi(\Omegahat_i\cap\Omegatilde_j)}{\pi(\Omegahat_i)\pi(\Omegatilde_j)}.\]
We now prove Lemma~\ref{lem:directproduct}, which states that
  \[\|\b\|^2\leq \sum_{(i,j)}\pi(\Omegahat_i\cap\Omegatilde_j)(\sqrt{r(i,j)}-1/\sqrt{r(i,j)})^2.\]
\noindent\emph{Proof of Lemma~\ref{lem:directproduct}}.
Recall $\b$ is the projection of $\Par{\xhat}$ onto the top
eigenvectors of $\Atilde$.  The top eigenvectors of $\Atilde$ are precisely
the set of all $\sqrt{\pitilde_j}$ for $j\in[\rtilde]$.
Therefore,
\[\b = \sum_{j}\frac{\langle
\Par{\xhat},\sqrt{\pitilde_j}\rangle}{\|\sqrt{\pitilde_j}\|^2}\sqrt{\pitilde_j}.\]
As the eigenvectors of $\Atilde$ are an orthonormal basis, we have
\[\|\b\|^2 = \sum_{j}\langle \Par{\xhat},\sqrt{\pitilde_j}\rangle^2.\]
For any $j\neq j'\in[\rtilde]$ and any $\sigma\in\Omegatilde_{j'}$, $\sqrt{\pitilde_j(\sigma)}=0$  and for $i\in[\rhat]$, $\pitilde_j(\Omegahat_i\cap\Omegatilde_j) = \pi(\Omegahat_i\cap\Omegatilde_j)/\pi(\Omegatilde_j)$.  Therefore,
\begin{equation}\label{eq:innerprod}
 \langle \Par{\xhat},\sqrt{\pitilde_j}\rangle 
 ~= \sum_i\sum_{\sigma\in\Omegahat_i}\Par{\xhat}(\sigma)\sqrt{\pitilde_j(\sigma)}
 ~= \sum_i\alpha_i\sum_{\sigma\in\Omegahat_i\cap\Omegatilde_j}\sqrt{\pihat_i(\sigma)\pitilde_j(\sigma)}
 ~= \sum_{i\in[\rhat]}\alpha_i\frac{\pi(\Omegahat_i\cap\Omegatilde_j)}{\sqrt{\pi(\Omegahat_i)\pi(\Omegatilde_j)}}.
 \end{equation}



Since $x \perp \sqrt{\pi}$ and $\Perp{\xhat} \perp \sqrt{\pi}$
by definition, it follows that $\Par{\xhat} \perp \sqrt{\pi}$ as well.
This implies that $\alpha \perp \sqrt{\pibar}$, as 
\begin{equation}\label{eq:alphaperp}
0 = \langle \Par{\xhat},\sqrt{\pi} \rangle
  = \sum_i \alpha_i \sum_{\elemone \in \Omegahat_i} \sqrt{\pihat_i(\elemone)\pi(\elemone)}
  = \sum_i \alpha_i \sum_{\elemone \in \Omegahat_i} \frac{\sqrt{\pi(\elemone)}}{\sqrt{\pi(\Omegahat_i)}} \sqrt{\pi(\elemone)} 
  = \sum_i \alpha_i \sqrt{\pi(\Omegahat_i)},
\end{equation}
and this final term is equal to $\sum_i \alpha_i \sqrt{\pibar_i}=\langle \alpha, \sqrt{\pibar} \rangle.$
Multiplying Equation~\ref{eq:alphaperp} by $\pi(\Omegatilde_j)$ and subtracting it from Equation~\ref{eq:innerprod}, we have
\[
 \langle \Par{\xhat},\sqrt{\pitilde_j}\rangle 
 =\sum_{i\in[\rhat]}\alpha_i\left(\frac{\pi(\Omegahat_i\cap\Omegatilde_j)}{\sqrt{\pi(\Omegahat_i)\pi(\Omegatilde_j)}}-
    \sqrt{\pi(\Omegahat_i)\pi(\Omegatilde_j)}\right)
 =\langle\alpha,V_j\rangle,
 \]
where 
\[V_j(i):=\left(\frac{\pi(\Omegahat_i\cap\Omegatilde_j)}{\sqrt{\pi(\Omegahat_i)\pi(\Omegatilde_j)}}-
\sqrt{\pi(\Omegahat_i)\pi(\Omegatilde_j)}\right)=\sqrt{\pi(\Omegahat_i\cap\Omegatilde_j)}(\sqrt{r(i,j)}-1/\sqrt{r(i,j)}).\]  
By the Cauchy-Schwartz inequality and
the fact that $\|\alpha\|=\|\Par{\xhat}\| \leq \|x\|=1$, we have
$ \langle\alpha,V_j\rangle\leq \|\alpha\|\|V_j\| = \|V_j\|.$
Therefore we get,
\[ \|\b\|^2 = \sum_{j}\langle \Par{\xhat},\sqrt{\pitilde_j}\rangle^2
	\leq \sum_j\|V_j\|^2 
	= \sum_{i,j}\pi(\Omegahat_i\cap\Omegatilde_j)(\sqrt{r(i,j)}-1/\sqrt{r(i,j)})^2.\]
\qed\\

\section{Additional Details for the Biased Permutations Application}\label{app:perm}
In this section, we include the remaining details of the proof of Theorem~\ref{pprocess}, which gives a bound on the spectral gap of the particle process chain $\mk,$ and shows how we can use this to bound the mixing time of the nearest-neighbor chain $\mn.$  The section is laid out as follows.  We begin in Section~\ref{sec:MKprelim} by giving some Markov chain preliminaries, describing the stationary distribution of $\mk,$ and formally giving the definition of the weakly monotonic property.  Next, in Section~\ref{sec:mk} we give the complete proofs of Theorem~\ref{pprocess} and Lemma~\ref{lem:dual} which give a bound on the spectral gap of $\mk$.  In Section~\ref{sec:ortho} we complete the proof of Lemma~\ref{lem:perms} which says that our decomposition is $1/n$-orthogonal. Finally, in Section~\ref{sec:mn} we formally define the nearest neighbor chain $\mn$ and show how we can use the bound on the spectral gap of $\mk$ to bound the mixing time of $\mn.$   

\subsection{Markov chain and $\mk$ preliminaries}\label{sec:MKprelim}
We begin with some preliminaries on Markov chains and mixing times.  The time a Markov chain takes to converge to its stationary distribution, or \emph{mixing time}, is measured in terms of the distance between the distribution at time~$t$ and the stationary distribution. The \emph{total variation distance} at time~$t$ is
$\|P^t,\pi\| _{tv} = \max_{x\in\Omega}\frac{1}{2}\sum_{y\in\Omega} |P^t(x,y)-\pi(y)|,$
where $P^t(x,y)$ is the $t$-step transition probability.  For all $\epsilon>0$, the \emph{mixing time} $\tau(\epsilon)$ of $\m$ is defined as
$$\tau(\epsilon)=\min \{t: \|P^{t'},\pi \|_{tv}\leq \epsilon, \forall t' \geq t\}.$$ 

In order to use the lower bound on the spectral gap to obtain an upper bound on the mixing time we will use the following well-known result. 
Again, let $\Gap(\transmat)=1-|\lambda_1|$ denote the spectral gap, where $\lambda_0, \lambda_1, \ldots, \lambda_{|\Omega|-1}$ are the eigenvalues of the transition matrix $\transmat$ and $1=\lambda_0>|\lambda_1|\geq |\lambda_i|$ for all $i\geq 2$. The following result  relates the spectral gap with the mixing time (see, e.g., \cite{sinclair},\cite{RM03}):

\begin{theorem}\cite{RM03}\label{gap}
Let $\pi_*=\min_{x\in\Omega} \pi (x)$. For all $\epsilon>0$ we have
\begin{align*}
(a)\qquad \tau(\epsilon)&\leq \frac{1}{1-|\lambda_1|} \log\left(\frac{1}{\pi_*\epsilon}\right).\\
(b)\qquad \tau(\epsilon)&\geq \frac{|\lambda_1|}{2(1-|\lambda_1|)} \log\left(\frac{1}{2\epsilon}\right).
\end{align*}
\end{theorem}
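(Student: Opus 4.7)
The plan is to prove both bounds via the spectral decomposition of the symmetrized transition matrix $A = A(P)$ from Section~\ref{sec:prelim}, which is real symmetric and has the same eigenvalues as $P$. By the real spectral theorem, $A$ admits an orthonormal eigenbasis $\{v_i\}_{i=0}^{|\Omega|-1}$ with eigenvalues $1 = \lambda_0, \lambda_1, \ldots, \lambda_{|\Omega|-1}$; setting $f_i(x) = v_i(x)/\sqrt{\pi(x)}$ yields right eigenvectors of $P$ that are orthonormal in $\ell^2(\pi)$, with $f_0 \equiv 1$. Two identities will drive everything: the spectral expansion $P^t(x,y)/\pi(y) = \sum_i \lambda_i^t f_i(x) f_i(y)$, and the Parseval identity $\sum_i f_i(x)^2 = 1/\pi(x)$ (obtained by evaluating the reproducing kernel at $y=x$).

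For part (a), I would run the standard $\chi^2$-to-total-variation argument. Cauchy--Schwarz gives $4\|P^t(x,\cdot) - \pi\|_{tv}^2 \leq \sum_y (P^t(x,y) - \pi(y))^2/\pi(y)$, and expanding the right-hand side via the spectral decomposition yields $\sum_{i \geq 1} \lambda_i^{2t} f_i(x)^2 \leq |\lambda_1|^{2t}/\pi(x) \leq |\lambda_1|^{2t}/\pi_*$. Hence $\|P^t(x,\cdot) - \pi\|_{tv} \leq |\lambda_1|^t/(2\sqrt{\pi_*})$. Requiring this to be at most $\epsilon$ gives $t \log(1/|\lambda_1|) \geq \log(1/(2\epsilon\sqrt{\pi_*}))$, and the elementary inequality $\log(1/|\lambda_1|) \geq 1 - |\lambda_1|$, combined with the trivial observation $\log(1/(2\epsilon\sqrt{\pi_*})) \leq \log(1/(\pi_* \epsilon))$ (since $\log 2 + \tfrac{1}{2}\log(1/\pi_*) \geq 0$), delivers the stated bound.

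For part (b), I plan to test against the extremal eigenfunction. Let $g$ be a right eigenvector of $P$ whose eigenvalue realizes $|\lambda_1| = \max_{i \geq 1}|\lambda_i|$, normalized so that $\|g\|_\infty = 1$; since $g$ is orthogonal to $f_0 \equiv 1$ in $\ell^2(\pi)$, we have $\pi(g) = 0$. Pick $x^* \in \Omega$ with $|g(x^*)| = 1$. Because $P^t g = \lambda_1^t g$, we obtain $|\mathbb{E}_{x^*}[g(X_t)] - \pi(g)| = |\lambda_1|^t$, and the duality $|\mathbb{E}_x[f(X_t)] - \pi(f)| \leq 2\|P^t(x,\cdot) - \pi\|_{tv}\cdot \|f\|_\infty$ gives $\|P^t(x^*,\cdot) - \pi\|_{tv} \geq |\lambda_1|^t/2$. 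The chain therefore cannot be $\epsilon$-close to $\pi$ in total variation until $|\lambda_1|^t \leq 2\epsilon$, i.e., $t \geq \log(1/(2\epsilon))/\log(1/|\lambda_1|)$. Applying $\log(1/|\lambda_1|) \leq (1-|\lambda_1|)/|\lambda_1|$ produces the stated lower bound (with a factor of $2$ to spare).

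The main obstacles are bookkeeping rather than conceptual. In part (a), the natural bound $|\lambda_1|^t/(2\sqrt{\pi_*})$ has a $\sqrt{\pi_*}$ in the denominator and must be cleanly replaced by $\pi_*$ via the elementary comparison above; this costs no more than the $\log 2$ slack already absorbed. In part (b), the $1/2$ factor in the denominator of the statement appears slightly weaker than what the argument produces directly, which is harmless. The one genuine subtlety is the sign of $\lambda_1$, handled uniformly by choosing $g$ to be the eigenvector of the eigenvalue of largest modulus and working with $|\lambda_1|^t$ throughout.
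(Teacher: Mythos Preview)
Your argument is correct and is the standard proof of this spectral-gap/mixing-time relationship; the bookkeeping in both parts checks out (in particular, your comparison $\log\bigl(1/(2\epsilon\sqrt{\pi_*})\bigr)\le\log\bigl(1/(\pi_*\epsilon)\bigr)$ is valid since $\pi_*\le 1$, and part (b) actually yields the bound without the extra factor of $2$).

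However, note that the paper does not prove this theorem at all: it is quoted verbatim as a classical result with the citation \cite{RM03} (and \cite{sinclair}) and used as a black box. So there is no ``paper's own proof'' to compare against; you have supplied a proof where the authors simply invoked the literature.
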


The chain $\mk$ connects the state space $\Omega$ and has the stationary distribution (see e.g.,~\cite{bmrs})
$$\pi(\sigma) = \prod_{i < j: \sigma(i) > \sigma(j)}\frac{p_{\sigma(i),\sigma(j)}}{p_{\sigma(j),\sigma(i)}} Z^{-1} = \prod_{i < j: \sigma(i) > \sigma(j)}q_{\sigma(i), \sigma(j)} Z^{-1} $$ where $Z$ is the normalizing constant $\sum_{\sigma \in \Omega} \pi(\sigma)$ and $q_{\sigma(i), \sigma(j)}=\frac{p_{\sigma(i),\sigma(j)}}{p_{\sigma(j),\sigma(i)}}.$  
In our analysis of $\mk$ (and of $\mn$) we require the weakly monotonic condition given in~\cite{bmrs,MS}: 
 
\begin{Definition}[\cite{bmrs}]\label{mono}The set $\p$ is weakly monotonic if properties 1 and either 2 or 3 are satisfied.
\begin{enumerate}
\item $p_{i,j} \geq 1/2$ for all $2 \leq i < j \leq n,$ and
\item $p_{i,j+1} \geq p_{i,j}$ for all $1 \leq i < j \leq n-1$ or
\item $p_{i-1,j} \geq p_{i,j}$ for all $2 \leq i < j \leq n.$
\end{enumerate}
\end{Definition}
\noindent As in~\cite{MS}, we will assume that property (2) holds.  If
instead property (3) holds, then as described in~\cite{MS} we would
modify $\mk$ (and $\mtk$ defined below) to allow swaps between
elements in different particle classes across elements whose particle
classes are larger (instead of smaller) and modify the induction so
that at each step $\sigma_{i}$ restricts the location of particles
larger than $i$ (instead of smaller).

\subsection{Proof of Theorem~\ref{pprocess} and Lemma~\ref{lem:dual}}\label{sec:mk}
Next we include a complete proof of Theorem~\ref{pprocess} which bounds the mixing time of the particle process chain $\mk.$  As described in Section~\ref{sec:perm} we will do this using induction and our new complementary decomposition theorem.  Recall that at each step our restriction chain will be $\m_{\sigma_i}$ which is rapidly mixing by induction.  We begin by proving Lemma~\ref{lem:dual} which states that each complementary restriction chain is a bounded generalized biased exclusion process with spectral gap~$\Omega(n^{-2}).$\\  

\noindent\emph{Proof of Lemma~\ref{lem:dual}.}
Moves of complementary restrictions $\Ptilde_1, \Ptilde_2, \ldots, \Ptilde_{\rtilde}$  involve exchanging an element
from $\C_i$ with an element from $\C_j$ where $j >i$.  There may be
additional elements between the elements being exchanged but if there
are, they are in a smaller particle class $\C_s$ with $s < i.$  If we
view all elements in $\C_i$ as one type and all elements in $\C_{i+1}, \C_{i+2},
\ldots, \C_{k} $ as another, then each complementary restriction chain  can be viewed as a generalized exclusion process.  We use the following result.

  \begin{theorem}\cite{MS}\label{boundedBias} Let $\me$ be a bounded generalized exclusion process on $n_1$ 1's and $n_0$ 0's.  Suppose without loss of generality that $n_1 \leq n_0,$  then the spectral gap of $\me$ is $$\Omega(1/\left(\left(n_0+n_1\right)\left(n_1+\ln n_0)\right)\right).$$
\end{theorem}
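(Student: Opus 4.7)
The plan is to prove Theorem~\ref{boundedBias} by induction on the number of $1$-particles $n_1$, using the Complementary Decomposition Theorem~\ref{thm:indep} developed earlier in this paper. The base cases are easily handled: for $n_1 = 0$ the chain is trivial (gap $1$), and for $n_1 = 1$ the chain reduces to a nearest-neighbor random walk on a path of length $n_0+1$ whose step probabilities are bounded away from $1/2$ by a constant depending on $q$. A standard birth-and-death analysis (or weighted Hardy inequality) yields a spectral gap of $\Omega(1)$, comfortably exceeding the target $\Omega(1/(n_0(1+\ln n_0)))$ bound.

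For $n_1 \geq 2$, I would partition $\Omega$ into sets $\Omegahat_1, \ldots, \Omegahat_{n_0+1}$, where $\Omegahat_p$ consists of configurations whose leftmost $1$-particle is at position $p$. Each restriction chain on $\Omegahat_p$ freezes positions $1,\ldots,p$ and acts as a bounded generalized biased exclusion process on $n_1-1$ ones and $n_0-p+1$ zeros; the inductive hypothesis then provides the required gap. The complementary restrictions move the leftmost $1$-particle across an adjacent $0$, producing biased nearest-neighbor random walks of the same form as the base case and hence with $\Omega(1)$ gap. One then applies Theorem~\ref{thm:indep} after verifying an $O(1/n)$-orthogonality bound via Lemma~\ref{lem:directproduct}, following the template used for Lemma~\ref{lem:perms}; the verification should be simpler here because only two particle types are involved, so the analogue of the ``good/bad'' split will be considerably less delicate.

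The hardest part will be extracting the $\ln n_0$ contribution in the target bound: a naive induction over $n_1$ loses only a multiplicative $n_1$, so the logarithmic factor must enter from a different source. The most natural candidates are either a refined base-case bound which accounts for the $\ln n_0$-length of the path (for instance, if individual step biases are allowed to be close to $1/2$ within the bounded range, so that the walk has gap only $\Omega(1/\ln n_0)$ rather than $\Omega(1)$), or an additional logarithmic overhead accumulated when summing the orthogonality contributions in Lemma~\ref{lem:directproduct} across the $n_0+1$ pieces. A cleaner alternative that sidesteps the decomposition complexity would be a weighted canonical-paths argument or a Randall--Tetali-style comparison with the uniform exclusion process: there, the $\ln n_0$ term arises transparently from a harmonic-type sum in the congestion bound, at the cost of more intricate path-accounting.
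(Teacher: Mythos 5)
You should note at the outset that this theorem is not proved in the present paper: it is cited from~\cite{MS}, as the \cite{MS} tag inside the theorem environment indicates, and is used here purely as a black box (in Lemma~\ref{lem:dual} and as the base case of the induction for Theorem~\ref{pprocess}). There is therefore no ``paper's own proof'' to compare against; any proof you write is necessarily a new one, and must stand on its own.

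Evaluated on its own terms, the sketch has genuine gaps. The decomposition by the leftmost $1$-particle's position is \emph{not} a product space in the sense of Section~\ref{sec:indep}: the remainder of the configuration lives on intervals of varying lengths depending on that position, so the bijective $(a,b)$ indexing that drives Lemma~\ref{lem:perms} is unavailable, and the ``good/bad'' bookkeeping you propose to imitate does not transfer. One can still attempt the general form of Lemma~\ref{lem:directproduct}, but you have not shown the $\epsilon\le 1/n_1$ orthogonality that is required for the $n_1$-fold iteration of Theorem~\ref{thm:indep} to lose only a constant; this is the crux, not a formality. The assertion that the complementary restrictions have $\Omega(1)$ gap is also off by the selection rate: inside $\me$ a given complementary move is attempted only with probability on the order of $1/(n_0+n_1)$, so that factor must be carried. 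Finally, you correctly flag that the $\ln n_0$ term has no visible source in your induction; your observation that the base case may overshoot is not a substitute for deriving the bound, since without the orthogonality and complementary-gap computations you cannot assert what the iteration actually produces. The canonical-paths or comparison alternative you raise at the end is where $\ln$ terms typically arise (from harmonic sums in congestion bounds) and is likely closer to the actual argument in~\cite{MS}; that is the route I would pursue if the goal is to reproduce this specific bound.
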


\noindent Since the probabilities $\transmat$ are weakly monotonic, the exclusion process involving the $i$ particles and the particles larger than $i$ is bounded and we can
apply Theorem~\ref{boundedBias}.  Assume $|C_i|=c_i$ for all $i$.  There are $c_i$ particles of type
$i$, $\sum_{j=i+1}^k c_j < n$ particles of the other type, and the
moves are selected with probability $c_i/(4n).$  Applying
Theorem~\ref{boundedBias} shows that the
spectral gap of each complementary restriction chain is $\Omega(\frac{c_{i}}{n}\frac{1}{n(c_{i} + \ln n)})=\Omega(n^{-2}),$ since $c_{i}\geq \ln n$.  

\qed

\noindent Next, we prove Theorem~\ref{pprocess} which says that if the probabilities $\transmat$ are weakly monotonic, form a bounded $k$-class for $k\geq 2,$ and  $|C_i|\geq 2\psize$ for $1\leq i \leq k$ then the spectral gap 
of the chain $\mk$ satisfies $\text{Gap}(\transmat) =\Omega(n^{-2}).$\\

\noindent\emph{Proof of Theorem~\ref{pprocess}.}
  At each step of
 the induction, we apply complementary decomposition (Theorem~\ref{thm:indep}).
 The restrictions of each decomposition will be rapidly mixing by
 induction and the complementary restriction chains are bounded generalized exclusion
 processes, by Lemma~\ref{lem:dual}.  The base case is $i=k-2$ and the final decomposition is $i=0$.

We begin with our base case, $i=k-2$.  
Let $\sigma_{k-2}$ be any fixed location of the particles in classes
$\C_1, \ldots, \C_{k-2}$.  
The Markov chain $\m_{\sigma_{k-2}}$ rejects all moves of $\mk$ unless they
exchange a particle in class $\C_{k-1}$ with a particle in class
$\C_{k}$.  Thus, its moves only involve two types of particles, with
all other particles fixed, so we can view 
$\m_{\sigma_{k-2}}$ as a generalized exclusion process.  
Thus,
$\m_{\sigma_{k-2}}$ is a bounded generalized exclusion process slowed down by a factor of $c_{k-1}/(4n)$ (as this is the probability that these transitions are chosen).  By Theorem~\ref{boundedBias} for any such $\sigma_{k-2}$, $\m_{\sigma_{k-2}}$ has spectral gap $\Omega(\frac{c_{k-1}}{n}\frac{1}{n(c_{k-1} + \ln n)})=\Omega(n^{-2}),$ since $c_{k-1}\geq \ln n$.  

We assume by induction the mixing time bound holds for all
$\m_{\sigma_i}$ for some $i\leq k-2$, and we
will use this result to prove that our mixing 
time bound holds for all $\m_{\sigma_{i-1}}$.
Let $\sigma_{i-1}$ represent any fixed choice of locations for all
elements in classes $C_1,C_2,\ldots, C_{i-1}$.  In order to bound
$\text{Gap}(\transmat_{i-1})$ we will apply Theorem~\ref{thm:indep}.  Given any
$\sigma_i$ that is consistent with $\sigma_{i-1}$ (i.e. they agree on
the locations of all elements in classes $C_1,C_2,\ldots, C_{i-1}$),
the Markov chain $\m_{\sigma_i}$ will be a restriction Markov chain of
$\m_{\sigma_{i-1}}$.  By induction, we have $\text{Gap}(\m_{\sigma_i})=
\Omega(n^{-2}\left(1-\frac{1}{n}\right)^{2(k-2-i)}).$ 
Lemma~\ref{lem:dual} shows that the complementary restrictions have minimum gap $\Omega(n^{-2}).$ 
By Lemma~\ref{lem:perms}, we know that the decomposition is $1/n$-orthogonal.
Combining these with Theorem~\ref{thm:indep} implies 
$\text{Gap}(\m_{\sigma_{i-1}})= \Omega\left(n^{-2}\left(1-\frac{1}{n}\right)^{2(k-2-i)+2}\right).$ Substituting $i = 0$ gives the desired theorem $$\text{Gap}(\mk) = \Omega\left(n^{-2}\left(1-\frac{1}{n}\right)^{2k-2}\right) = \Omega\left(n^{-2}\left(1-\frac{1}{n}\right)^{n}\right) = \Omega\left(n^{-2}\right).$$ 

\qed

\subsection{The decomposition is $1/n$-orthogonal.}\label{sec:ortho}  Here we include a complete proof of Lemma~\ref{lem:perms} which states that we have a $1/n$-orthogonal decomposition at each step of the induction.  Many of the details are described in Section~\ref{sec:perm} but are also included here for ease of reading.  

Recall that at each step of the induction $a$ is a 2-particle system obtained from $\sigma_i$ by removing all particles of type less than $i$ with $|C_i|$ particles of type $i$ and $\sum^{k}_{j=i+1}|C_j|$ particles of type `$\_$'.  Also, $b$ is the permutation of the elements of type bigger than $i$ and the location of the particles smaller than $i-1$ is always fixed (see Figure~\ref{fig:product}).   
Recall that $a$ is ``good" if it has fewer than $N^*=C_q\log n$ inversions and  $b$ is ``good" if it has fewer than $N^*$ inversions involving particles
of type $i+1$.  As there are at least $2N^*$ particles of type $i+1$, if $a$ and $b$ are both good, then $(a,b)$ has no inversions between $i$ and $j$ for $j>i+1$.  We want to show that almost all weight contributing to $\pi$ comes from pairs $(a,b)$ where both $a$ and $b$ are good.  We will do this by using a bound on 2-particle systems with more than $N^*$ inversions (i.e. staircase walks with area larger than $N^*$ under the curve.
The number of staircase walks with area $\area$ under the
  curve is precisely the partition number $p(\area).$  Hence, we will use the following theorem of Erd\H{o}s.

\begin{theorem}\cite{Erdos42}\label{thm:HR}
  The number $p(\area)$ of partitions of an integer $\area$ satisfies $p(\area) < e^{\pi\sqrt{2\area/3}}.$ 
\end{theorem}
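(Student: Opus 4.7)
The plan is to prove Theorem~\ref{thm:HR} by the standard generating-function / saddle-point estimate, which gives the Hardy--Ramanujan-type upper bound on $p(N)$. I would proceed in four steps.

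First, I would invoke the Euler generating function identity
\[ F(x) := \sum_{N=0}^{\infty} p(N)\, x^N \;=\; \prod_{k=1}^{\infty} \frac{1}{1-x^k}, \qquad 0 < x < 1. \]
Since all coefficients of $F$ are nonnegative, I get the elementary one-term bound $p(N)\, x^N \leq F(x)$, hence
\[ p(N) \;\leq\; \frac{F(x)}{x^N} \;=\; \exp\!\left(-N\log x \;-\; \sum_{k=1}^{\infty}\log(1-x^k)\right). \]

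Second, I would rewrite the infinite sum by expanding each logarithm: $-\log(1-x^k) = \sum_{m \geq 1} x^{km}/m$. Swapping the order of summation gives
\[ -\sum_{k=1}^{\infty}\log(1-x^k) \;=\; \sum_{m=1}^{\infty} \frac{1}{m}\cdot \frac{x^m}{1-x^m}. \]
Substituting $x = e^{-t}$ with $t > 0$, so that $x^m/(1-x^m) = 1/(e^{mt}-1)$, and using the strict inequality $e^{u}-1 > u$ for $u>0$ termwise, I would bound this sum by
\[ \sum_{m=1}^{\infty}\frac{1}{m(e^{mt}-1)} \;<\; \sum_{m=1}^{\infty}\frac{1}{m^2 t} \;=\; \frac{\pi^2}{6t}. \]

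Third, combining these yields, for every $t > 0$,
\[ p(N) \;<\; \exp\!\left(Nt \;+\; \frac{\pi^2}{6t}\right). \]
The right-hand side is minimized in $t$ by elementary calculus: setting the derivative $N - \pi^2/(6t^2)$ to zero gives the optimal $t^* = \pi/\sqrt{6N}$, at which
\[ Nt^* + \frac{\pi^2}{6t^*} \;=\; \pi\sqrt{N/6} + \pi\sqrt{N/6} \;=\; \pi\sqrt{2N/3}. \]
Plugging $t = t^*$ into the bound gives exactly $p(N) < e^{\pi\sqrt{2N/3}}$, as desired.

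There is no serious obstacle here: all of the manipulations are standard, and strictness of the final inequality is inherited from the termwise strict bound $e^{mt} - 1 > mt$. The only ``choice'' in the argument is the single-variable optimization in $t$, which is transparent. I would note in passing that sharper bounds (Hardy--Ramanujan, Rademacher) are available, but the crude estimate above is enough for the application to $1/(6n^2)$-size bounds on the weight of staircase walks with area exceeding $N^*$ used in Section~\ref{sec:ortho}.
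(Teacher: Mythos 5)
Your proof is correct, and the paper itself does not prove this statement --- it simply cites the bound to Erd\H{o}s (1942), so there is no in-paper argument to compare against. The generating-function argument you give (bound $p(N)x^N \leq \prod_k (1-x^k)^{-1}$, expand and swap to get $\sum_m \frac{1}{m}\frac{x^m}{1-x^m}$, substitute $x=e^{-t}$ and use $e^u-1>u$ to obtain $\pi^2/(6t)$, then optimize at $t^*=\pi/\sqrt{6N}$) is the standard elementary derivation of this upper bound, and each step is carried out correctly; the strict inequality does fail at the trivial case $N=0$, but holds for all $N\geq 1$, which is all that the application to Lemma~\ref{lem:partitions} requires.
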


\noindent Next, we will use this bound on the partition number to bound a sum of exclusion processes where each inversion contributes a factor of $q$ to the  weight of the exclusion process.   Define $c_1=e^{\pi\sqrt{2/3}}\approx 13.$  Theorem~\ref{thm:HR}
implies $p(\area) < c_1^{\sqrt{\area}}.$  
Suppose $q<1$ is a constant 
and let $\psize = \max\{\frac{\log (6n^2)+\log(1+2/(q^{-1}-1))}{\log(1+q^{-1})-1}, \frac{\log^2(c_1)}{\log(2/(q(1+q^{-1})))}\} = \Theta(\log n).$  Let $N(\sigma)$ be the number of inversions in a particular exclusion process $\sigma$ then we prove the following.

\begin{Lemma}\label{lem:partitions}
  For any biased exclusion process with minimum bias constant $q < 1$, we have
\[\sum_{\sigma:\area(\sigma)\geq \psize}q^{\area(\sigma)} \leq 1/(6n^2).\]
\end{Lemma}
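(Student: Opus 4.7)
The plan is to apply Erd\H{o}s's bound on the partition number to count configurations by area, and then dominate the resulting series by a geometric series whose ratio has been engineered into the definition of $\psize$.

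First, I would show that the number of 2-particle exclusion configurations with area exactly $N$ is at most the partition number $p(N)$. A staircase walk with area $N$ under the curve corresponds to an integer partition of $N$ (via column heights), so by Theorem~\ref{thm:HR} there are fewer than $c_1^{\sqrt{N}}$ such configurations. Since each of the $N$ inversions contributes at most a factor of $q$ to the weight $q^{\area(\sigma)}$, we immediately obtain
\[
\sum_{\sigma:\area(\sigma)\ge \psize} q^{\area(\sigma)} \;\le\; \sum_{N\ge \psize} c_1^{\sqrt{N}}\, q^N.
\]

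Next, I would set $r := 2q/(1+q)$, noting that $q<r<1$ and $\log(1/r)=\log((1+q^{-1})/2)=\log(1+q^{-1})-1$. The second term of $\psize$ is calibrated so that $\sqrt{N}\,\log c_1$ is dominated by $N\,\log(r/q)=N\,\log(2/(1+q))$ for all $N\ge\psize$, which gives $c_1^{\sqrt{N}}q^N\le r^N$ term by term. Summing,
\[
\sum_{N\ge\psize} c_1^{\sqrt{N}}\,q^N \;\le\; \sum_{N\ge\psize} r^N \;=\; \frac{r^{\psize}}{1-r} \;=\; \frac{1+q}{1-q}\,r^{\psize},
\]
having used $1/(1-r)=(1+q)/(1-q)=1+2/(q^{-1}-1)$. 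I would then observe that the first term in $\psize$ is exactly what is needed to force the right-hand side below $1/(6n^2)$: rearranging $\tfrac{1+q}{1-q}\,r^{\psize}\le \tfrac{1}{6n^2}$ and taking logarithms produces precisely the numerator $\log(6n^2)+\log(1+2/(q^{-1}-1))$ and the denominator $\log(1+q^{-1})-1$ appearing in the first term of $\psize$.

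The main obstacle is matching the constants cleanly: $\psize$ has two terms, one ensuring the geometric-domination step $c_1^{\sqrt{N}}q^N\le r^N$ and one ensuring the geometric tail is at most $1/(6n^2)$. Verifying both is a direct algebraic check using the identities $q(1+q^{-1})=1+q$, $(1+q^{-1})/2=1/r$, and $(1+q)/(1-q)=1+2/(q^{-1}-1)$; the expression for $\psize$ has been written with exactly these quantities as its numerators and denominators, so once the substitutions are made the inequalities fall out directly.
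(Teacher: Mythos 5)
Your proposal follows exactly the paper's argument: bound the number of staircase walks of area $N$ by the partition number $p(N)<c_1^{\sqrt N}$ (Erd\H{o}s), use the second term of $\psize$ to dominate $c_1^{\sqrt N}q^N$ by $r^N$ with $r=2/(1+q^{-1})=2q/(1+q)$, sum the geometric tail to $r^{\psize}(1+q)/(1-q)$, and use the first term of $\psize$ to bound this by $1/(6n^2)$. The algebraic identities you invoke, $\log(1/r)=\log(1+q^{-1})-1$ (base-2) and $(1+q)/(1-q)=1+2/(q^{-1}-1)$, are precisely those implicit in the paper's formula for $\psize$, so this is the same proof.
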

\begin{proof}
  The number of staircase walks with area $\area$ under the
  curve is precisely the partition number $p(\area).$  For all $\area\geq
  \frac{\log^2(c_1)}{\log(2/(q(1+q^{-1})))}$, we have
  $c_1^{\sqrt{\area}}q^\area\leq (2/(1+q^{-1}))^\area.$
  Therefore
  
  \begin{align*}
  \sum_{\sigma:\area(\sigma)\geq \psize}q^{-\area(\sigma)}&\leq \sum_{\area\geq
    \psize}p(\area)q^{\area}\\
  &\leq \sum_{\area\geq \psize}c_1^{\sqrt{\area}}q^{\area}\\
  &\leq \sum_{\area\geq \psize}(2/(1+q^{-1}))^{\area}\\
  &=\left(\frac{2}{1+q^{-1}}\right)^{\psize}\frac{1}{1-\frac{2}{1+q^{-1}}}\\
  &=\left(\frac{2}{1+q^{-1}}\right)^{\psize}\frac{1+q^{-1}}{q^{-1}-1}\\
  &\leq 1/(6n^2),
  \end{align*}
  by the choice of $\psize$.  
 \end{proof}

\noindent We are now ready to use Lemma~\ref{lem:partitions} and Lemma~\ref{lem:directproduct} to prove that  our decomposition is $1/n$-orthogonal.\\

\noindent\emph{Proof of Lemma~\ref{lem:perms}.}
Assume that at this step in the induction, all particles in $C_1, C_2, \ldots, C_{i-1}$ are fixed in the same position in all permutations according to some $\sigma_{i-1}.$  The stationary distribution of the chain $\m_{\sigma_{i-1}}$ is $$\pi_{\sigma_{i-1}}(\sigma) =  \prod_{i < j: \sigma(i) > \sigma(j)}q_{\sigma(i), \sigma(j)}Z_{\sigma_{i-1}}^{-1}.$$  
where $Z_{\sigma_{i-1}}$ is the normalizing constant $\sum_{\sigma \in \Omega_{\sigma_{i-1}}} \pi(\sigma),$ the set $\Omega_{\sigma_{i-1}}$ contains the permutations consistent with $\sigma_{i-1},$ and $q_{\sigma(i), \sigma(j)}=\frac{p_{\sigma(j),\sigma(i)}}{p_{\sigma(j),\sigma(i)}}.$  For ease of notation, throughout the remainder of this section we will let $\pi = \pi_{\sigma_{i-1}}$ and $Z =Z_{\sigma_{i-1}}$.

Consider a permutation $\sigma$.  We will essentially break it into two parts $a$ and $b.$  Recall that all particles in classes $i-1$ and smaller will always be in the same position in all permutations at this level of the induction and fixed according to the particular $\sigma_{i-1}.$  Each $a$ corresponds to a 2-particle system on the $i$ particles with higher particles while each $b$ corresponds to a permutation on the remaining $i+1$ and higher particles.  Let $A$ be the set of all possible choices of $a$; $A$ consists of all 2-particle systems with $|C_i|$ particles of type $i$ and $\sum_{j=i+1}^k|C_j|$ particles of type $\_$ (these are in bijection with staircase walks as in Figure~\ref{fig:staircase}).  Define $b$ to be the permutation of the elements of type bigger than $i$ that is consistent with $\sigma$; let $B$ be the set of all possible $b$.  For example, let $\sigma_{2} = 12\_1\_2\_1\_\_,$ where the $\_$ represents an empty location that can be filled with particles in class $C_3$ or higher.  If $a$ is configuration $3\_3\_\_$ (here $\_$ can be filled with particles in class $C_4$ or higher) and $b$ is $454$ then $(a,b) = 1231423154.$  The set of permutations which all have the $i$ particles in exactly the same location as given by $a$ is called $\Omegahat_a$ is a part of the partition corresponding to a restriction chain.  Similarly, the set of permutations all of which have the $i+1$ and higher particles in the same relative order given by $b$ is called $\Omegatilde_b$, which is the state space of a complementary restriction chain.  Let $a^*$ and $b^*$  be the highest weight elements in $A$ and $B$, respectively.  Using these definitions, the permutation $(a^*,b^*)$ has the particles in classes $i-1$ and smaller fixed according to $\sigma_{i-1}$ and all other higher particles in sorted order.   In our example, $b^* = 445,$ $(a,b^*) = 1231423145,$ $a^* = 33\_\_\_,$ $(a^*,b) = 1231324154,$ and $(a^*,b^*) = 1231324145.$

Next, we will decompose the product over inversions in Equation~\ref{eq:statdist} into several quantities.
Notice that $\pi(a^*,b^*)$ is the product over inversions that are in every $\sigma\in\Omega$, normalized by $Z_{\sigma_{i-1}}$.  Define $w(a) = \pi(a,b^*)/\pi(a^*,b^*).$  This is the product over inversions that are in every $\sigma\in\Omegahat_a$ that are not in every $\sigma\in\Omega$. Similarly, $w(b) = \pi(a^*,b)/\pi(a^*,b^*)$ is the product over inversions in every $\sigma\in\Omegatilde_b$, and $w(a,b) = \pi(a,b) \pi(a^*,b^*)/ (\pi(a,b^*)\pi(a^*,b))$ is the product over inversions in $\sigma=(a,b)$ beyond those that are required by being in $\Omegahat_a$ and $\Omegatilde_b$.  From these definitions, it is clear that $\pi(a,b) = w(a)w(b) w(a,b)\pi(a^*,b^*).$  We will prove in the following lemma that if both $a$ and $b$ are good then $w(a,b) = 1;$ i.e. that the weight of $(a,b)$ is determined entirely by being in $\Omegahat_a$ and $\Omegatilde_b$.
\begin{Lemma}\label{lem:good}
If $a$ and $b$ are both good then $w(a,b) = 1.$
\end{Lemma}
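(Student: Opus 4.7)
The plan is to first prove a structural claim about the inversions of $\sigma=(a,b)$ under the goodness hypotheses, and then verify the identity $\pi(a,b)\pi(a^*,b^*)=\pi(a,b^*)\pi(a^*,b)$ by classifying inversions. Throughout I use $P$ for the set of positions in $\sigma$ not occupied by a particle of class $\le i-1$, $P_i(a)\subseteq P$ for the positions assigned class $i$ by $a$, and $P_{>i}(a)=P\setminus P_i(a)$.

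The structural claim to establish first is: under $a$ good, $b$ good, and $|C_{i+1}|\ge 2N^*$, the permutation $\sigma=(a,b)$ contains no inversion between a type-$i$ element and a type-$j$ element for any $j>i+1$. Suppose for contradiction such an inversion joins some $p_1\in P_{>i}(a)$ of type $j>i+1$ with some $p_2\in P_i(a)$ of type $i$, $p_1<p_2$. Let $r$ be the rank of $p_1$ in the natural left-to-right ordering of $P_{>i}(a)$; then the first $r$ elements of $P_{>i}(a)$ are all $\le p_1<p_2$, and each paired with $p_2\in P_i(a)$ gives an inversion of the 2-particle system $a$, so $r\le N(a)\le N^*-1$. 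On the other hand $b[r]=j>i+1$, so at most $r-1\le N^*-2$ of the first $r$ entries of $b$ hold type-$(i+1)$ particles; hence at least $|C_{i+1}|-(r-1)\ge N^*+2$ type-$(i+1)$ particles lie at positions strictly greater than $r$ in $b$ (using $|C_{i+1}|\ge 2N^*$), each pairing with $b[r]$ to yield an inversion in $b$ involving type $i+1$---producing more than $N^*$ such inversions, contradicting $b$ good.

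Next I would classify the inversions of $\sigma=(a,b)$ into four types and verify that the four-way second difference $\log\pi(a,b)-\log\pi(a,b^*)-\log\pi(a^*,b)+\log\pi(a^*,b^*)$ vanishes type by type. Inversions between two fixed (type-$\le i-1$) positions contribute identically in all four configurations. Inversions internal to $P_{>i}$ depend only on the sequence $b$, since the natural left-to-right orderings of $P_{>i}(a)$ and of $P_{>i}(a^*)$ index $b$ in the same way; their contributions cancel. Inversions joining a $P_i$ position to a $P_{>i}$ position have, by the structural claim, the $P_{>i}$ endpoint of class exactly $i+1$, contributing constant weight $q_{i+1,i}$ per inversion, with total $q_{i+1,i}^{N(a)}$ depending only on $a$. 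For the remaining inversions---a fixed type-$\le i-1$ position $p_2$ paired with a variable $p_1\in P$, $p_1<p_2$---the contribution at each $p_2$ depends only on the multiset of classes assigned to $P\cap[1,p_2)$.

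The hard part is proving these last multisets agree across the four configurations. Setting $t_a(p_2)=|P_{>i}(a)\cap[1,p_2)|$ (and $t_{a^*}(p_2)$ similarly), the claim reduces to showing $\{b[r]:r\in(\min(t_a,t_{a^*}),\max(t_a,t_{a^*})]\}=\{b^*[r]:r\in(\min(t_a,t_{a^*}),\max(t_a,t_{a^*})]\}$ as multisets. The key input is a pair of bounds from $a$ good: the first $|C_i|-N^*+1$ elements of $P$ lie entirely in $P_i(a)$ (else some early $P_{>i}(a)$ position would force too many inversions in $a$), and every element of $P_i(a)$ has rank at most $|C_i|+N^*-1$ in $P$ (by a symmetric argument). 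Together these yield $\max(t_a(p_2),t_{a^*}(p_2))\le N^*-1$ whenever $t_a(p_2)\ne t_{a^*}(p_2)$. Combined with the fact that $b$ good and $|C_{i+1}|\ge 2N^*$ force $b[r]=b^*[r]=i+1$ for all $r\le|C_{i+1}|-N^*+1\ge N^*+1$, the relevant index range falls entirely in this uniform-$(i+1)$ prefix of both $b$ and $b^*$, so the two multisets coincide and the second difference vanishes, giving $w(a,b)=1$.
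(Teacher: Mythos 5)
Your proposal follows the same overall strategy as the paper's proof: verify that the cross-ratio $\pi(a,b)\pi(a^*,b^*)/(\pi(a,b^*)\pi(a^*,b))$ equals 1 by classifying inversions and checking cancellation, with the key structural input being that good $a,b$ together with $|C_{i+1}|\ge 2N^*$ rule out any inversion between a class-$i$ particle and a class-$j$ particle for $j>i+1$. You prove this structural fact carefully (the paper simply asserts it), and your contradiction argument for it is correct. The two proofs then diverge in how they organize the cancellation. The paper groups inversions by the pair of particle classes involved and, for the hardest case $(i+1,\ell)$ with $\ell<i$, tracks each class-$(i+1)$ particle and observes that its \emph{absolute position} coincides in two of the four permutations, making those contributions cancel pairwise. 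You instead group by the fixed low-class position $p_2$, reduce to a multiset identity $\{b[r]:r\in(\min(t_a,t_{a^*}),\max(t_a,t_{a^*})]\}=\{b^*[r]:\cdots\}$, and argue this window lies in a uniform class-$(i+1)$ prefix of both $b$ and $b^*$. Both slicings are legitimate; yours is arguably more systematic, while the paper's is more particle-by-particle.

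However, there is a gap in the step ``Together these yield $\max(t_a,t_{a^*})\le N^*-1$ whenever $t_a\ne t_{a^*}$.'' The two rank bounds you state (the first $|C_i|-N^*+1$ ranks of $P$ are class $i$, and every class-$i$ rank is $\le |C_i|+N^*-1$) by themselves only give $t_a\le m-(|C_i|-N^*+1)$ and force $m\le |C_i|+N^*-2$ in the $t_a\ne t_{a^*}$ case, which yields only $\max(t_a,t_{a^*})\le 2N^*-3$. Since $|C_{i+1}|-N^*+1$ can be as small as $N^*+1$, the bound $2N^*-3$ does not place the window inside the uniform prefix once $N^*>4$. The desired bound $\max(t_a,t_{a^*})\le N^*-1$ \emph{is} true, but it requires using goodness of $a$ more directly: when $t_a>t_{a^*}$, either (if $m\le|C_i|$) the $t_a$ class-$(>i)$ positions of $P$ with rank $\le m$ each precede at least one class-$i$ position of rank $>m$, or (if $m>|C_i|$) the $|C_i|-s_a\ge 1$ class-$i$ positions of rank $>m$ each follow all $t_a$ class-$(>i)$ positions of rank $\le m$; either way $N(a)\ge t_a$, so $t_a\le N^*-1$ and $t_{a^*}\le t_a$. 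With that fix your argument closes.
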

\begin{proof}  From the definitions, it is sufficient to show that $\pi(a,b) = \pi(a,b^*)\pi(b,a^*)/\pi(a^*,b^*).$ First consider the inversions $(j,l)$ for $j<i,l<i.$  These are present in every term ($\pi(a,b),$ $\pi(a,b^*),$ $\pi(a^*,b),$ and $\pi(a^*,b^*)$) and thus cancel. The inversions $(i+1,i)$ are exactly the same in $\pi(a,b)$ and $\pi(a,b^*),$ and there are none in $\pi(a^*,b),$ and $\pi(a^*,b^*);$ thus these also cancel.  Similarly inversions $(j,l)$ for  $j\geq i+1,l \geq i$ are exactly the same in $\pi(a,b)$ and $\pi(a^*,b),$ and there are none in $\pi(a,b^*),$ and $\pi(a^*,b^*);$ thus these cancel. Next consider the inversions $(j,l)$ for  $j>i+1,l < i$, since there are no $(j,i)$ inversions  these are the same in $\pi(a,b)$ and $\pi(a^*,b)$ and the same in $\pi(a^*,b^*)$ and $\pi(a, b^*).$ Similarly, the $(i,j)$ for $j < i$ inversions are the same in $\pi(a^*,b)$ and $\pi(a^*,b^*)$ and the same in $\pi(a,b)$ and $\pi(a,b^*).$  Finally the $(i+1,j)$ for $j<i$ inversions are the most complicated.   Assume the $i+1$ particles are numbered and in order in all 4 permutations.  Consider any particular $i+1$ particle and look at its position in $\pi(a,b)$.  If it is to the left of any $i$ particles (i.e. it is involved in an $(i+1,i)$ inversion) then it must come before all particles higher than $i+1$ and it will be in the same position in $\pi(a,b)$ and $\pi(a,b^*)$ and the same positions (which may be different positions) in $\pi(a^*,b)$ and $\pi(a^*,b^*)$  thus any $(i+1,j)$ inversions will cancel.  If not, then it comes after all $i$ particles and then it will be in the same position in $\pi(a,b)$ and $\pi(a^*,b)$ and again the same positions in $\pi(a,b^*)$ and $\pi(a^*,b^*).$  Since there are no $(j,i)$ inversions for $j > i+1$ we can conclude that $w(a,b) = 1$ when $a$ and $b$ are both good.  
\end{proof}

Next, define $\ztilde =  \sum_{a} w(a),$  $\zhat =\sum_{b}w(b),$  and let $\ztilde' = \sum_{a \text{ good}} w(a)$ be the sum $\ztilde$ restricted to only good configurations $a$ and $\ztilde'' =  \sum_{a~\text{bad}} w(a)$ be the sum $\ztilde$ restricted to only bad configurations $a.$  Similarly, let $\zhat' = \sum_{b \text{ good}} w(b)$ and $\zhat'' =  \sum_{b \text{ bad}} w(b).$
Define $\epsilon_1= 1/(6n^2).$  By Lemma~\ref{lem:partitions}, this is a bound on the total weight of staircase walks with area more than $N^*$ under the curve.  We will use this to bound the weight of the bad $a$'s contributing to $\ztilde$ and the weight of the bad $b$'s contributing to $\zhat$.  Specifically, we will prove Lemma~\ref{lem:zBound}, which states the following.
  \begin{enumerate}
  \item $\ztilde'' =  \sum_{a \text{ bad}} w(a) \leq \epsilon_1$ and $\zhat'' =  \sum_{b \text{ bad}} w(b) \leq \epsilon_1\zhat.$
 \item For all $a,b$ we have $\pi(a)\leq w(a)\pi(a^*,b^*)\zhat$ and $\pi(b)\leq w(b)\pi(a^*,b^*)\ztilde.$
 \item For \emph{good} $a$ and \emph{good} $b$ we have $\pi(a) \geq w(a) \pi(a^*,b^*)\zhat'$ and $\pi(b) \geq w(b) \pi(a^*,b^*)\ztilde'.$
 \item For all $b$ we have $\pi(b) \geq w(b)\pi(a^*,b^*)w(a^*,b).$
    \end{enumerate}
\vspace{1em}
\noindent\emph{Proof of Lemma~\ref{lem:zBound}.}  
We begin with the proof of (1).  Recall that $w(a) = \pi(a,b^*)/\pi(a^*,b^*).$  Both permutations $(a,b^*)$ and $(a^*,b^*)$ have the same inversions $(j,l)$ for $j,l < i$ so when we consider the ratio $w(a)$, this contains the $(i,j)$ for $j > i$ inversions and the $(j,l)$ for $j \geq i, l < i$ inversions. Since there are no $(i,j)$ for $j > i$ inversions in $(a^*,b^*)$ and the weight of the $(j,l)$ for $j \geq i, l < i$ inversions is less in $(a,b^*)$ than it is in $(a^*,b^*)$ we have 
$$\ztilde'' =  \sum_{a \text{ bad}} w(a) \leq \sum_{a \text{ bad}}\prod_{j<l: a(j) > i \atop a(l) = i}q_{a(j), i} \leq \sum_{a \text{ bad}}\prod_{j<l: a(j) > i \atop a(l) = i}q_{i+1, i} \leq \epsilon_1.$$
The last two steps follow from $\p$ being weakly monotonic and Lemma~\ref{lem:partitions} respectively.   

Next, we consider $\zhat''$ and recall that $w(b) = \pi(a^*,b)/\pi(a^*,b^*).$  The two permutations $(a^*,b)$ and $(a^*,b^*)$ have the same inversions $(j,l)$ for $j,l \leq i.$  When considering the ratio $w(b)$ there are several types of inversions remaining.  There are the $(j,l)$ for $j,l > i+1$ inversions which are due to the order of the particles $i+2$ and higher which we will represent as $\tau_{i+2}.$  Additionally there are the inversions $(j,i+1)$ for $j > i+1$ of which there are at least $\psize$ if $b$ is \emph{bad} and the $(j,l)$ for $j \geq i+1, l \leq i$ inversions which are maximized in $\tau_{i+2}^*$ which we will define as the highest weight configurations consistent with $\sigma_{i-1}$ and $\tau_{i+2}$.  In other words $\tau_{i+2}^*$ is the configuration that has the particles smaller than $i$ ordered according to $\sigma_{i-1}$, the particles $i$ as far forward as possible, then the particles $i+1$ again as far forward as possible and finally the particles greater than $i+1$ in the remaining positions ordered according to the permutation $\tau_{i+2}.$  Given these definitions, we have the following.
\begin{eqnarray*}
\zhat'' &=&  \sum_{b \text{ bad}} w(b)=  \sum_{\tau_{i+2}} w (\tau_{i+2}^*) \sum_{EX(i+1,j: j>i+1, \text{ bad})}w(b)/w(\tau_{i+2}^*) \\
&\leq& \sum_{\tau_{i+2} }w (\tau_{i+2}^*) \sum_{EX(i+1,j: j>i+1, \text{ bad})}\prod_{j<l: a(j) > i +1\atop a(l) = i+1}q_{a(j), i+1} \\
&\leq& \sum_{\tau_{i+2}} w (\tau_{i+2}^*) \sum_{EX(i+1,j: j>i+1, \text{ bad})}\prod_{j<l: a(j) > i +1\atop a(l) = i+1}q_{i+2, i+1} \\
&\leq& \sum_{\tau_{i+2}} w (\tau_{i+2}^*)\epsilon_1 \leq \epsilon_1 \zhat,
\end{eqnarray*}
where $EX(i+1,j: j>i+1, \text{ bad})$ is an exclusion process (2-particle system) with more than $N^*$ inversions.  One particle is the $i+1$ particles and the other is the particles greater than $i+1$ with more than $N^*$ inversions.  In other words we are dividing the permutation $b$ based on the location of the $i+1$ particles.

Next, we will prove (2) which gives an upper bound on $\pi(a)$ and $\pi(b)$ for all $a,b.$ 
Both of these bounds are straightforward from the definitions. 
For all $a$ we have the following.
 $$\pi(a) = \sum_{b'} \pi(a,b') = \sum_{b'} w(a)w(b')w(a,b')\pi(a^*,b^*) \leq w(a)\pi(a^*,b^*)\sum_{b'}w(b')= w(a)\pi(a^*,b^*)\zhat.$$
Similarly, for all $b$ we have the following.
 $$\pi(b) = \sum_{a'} \pi(b,a') = \sum_{a'} w(a')w(b)w(a',b)\pi(a^*,b^*) \leq w(b)\pi(a^*,b^*)\sum_{a'}w(a')= w(b)\pi(a^*,b^*)\ztilde.$$

\noindent Next, we will prove (3) which gives a lower bound on $\pi(a)$ and $\pi(b)$ for good $a$ and good $b.$  Using the property that if $a$ and $b$ are both good $w(a,b) = 1,$ we have the following bound for \emph{good} $a.$ 
 \begin{eqnarray*}
\pi(a) &=&  \sum_{b'}\pi(a,b') = \sum_{b'} w(a)w(b') w(a,b')\\
&\geq& w(a) \pi(a^*,b^*) \sum_{b' \text{ good}} w(b')w(a,b')\\
&=& w(a) \pi(a^*,b^*) \sum_{b' \text{ good}} w(b') = w(a) \pi(a^*,b^*)\zhat'.\\
\end{eqnarray*}
\noindent For \emph{good} $b$ we have a similar lower bound. 
\begin{eqnarray*}
\pi(b) &=&  \sum_{a'}\pi(a',b) = \sum_{a'} w(a')w(b) w(a',b)\pi(a^*,b^*)\\
&\geq& w(b) \pi(a^*,b^*) \sum_{a' \text{ good}} w(a')w(a',b)\\
&=& w(b) \pi(a^*,b^*) \sum_{a' \text{ good}} w(a') = w(b) \pi(a^*,b^*)\ztilde'.\\
\end{eqnarray*}

Finally, we will prove (4) which gives a weaker lower bound on $b$ that holds for all $b.$
\begin{eqnarray*}
\pi(b) &=& \sum_{a'}\pi(a',b) = \sum_{a'} w(a')w(b) w(a',b)\pi(a^*,b^*)\\
&=& w(b)\pi(a^*,b^*) \sum_a w(a)w(a,b)\\
&\geq& w(b)\pi(a^*,b^*) w(a^*)w(a^*,b) = w(b)\pi(a^*,b^*)w(a^*,b)
\end{eqnarray*}
\qed

Next, we will assume both $a$ and $b$ are good and bound the contribution to $\|\b\|^2.$ Recall that in this case $w(a,b) = 1.$ Using this property  and Lemma~\ref{lem:zBound} part (3) and then (1), we have
\begin{align*}
  r(a,b)&=\frac{\pi(a,b)}{\pi(a)\pi(b)}=\frac{w(a)w(b)\pi(a^*,b^*)}{\pi(a)\pi(b)}\\
  &\leq \frac{w(a)w(b)\pi(a^*,b^*)}{(w(a)\pi(a^*,b^*)\zhat')(w(b) \pi(a^*b^*)\ztilde')}\\
  &= \frac{1}{\pi(a^*,b^*)\zhat'\ztilde'}\leq \left(\frac{1}{(1-\epsilon_1)^2}\right) \frac{1}{\zhat\ztilde \pi(a^*,b^*)} \leq \frac{1}{(1-\epsilon_1)^2}.
\end{align*}
The last step uses the following lower bound on $\zhat\ztilde\pi(a^*,b^*).$
 $$\zhat\ztilde\pi(a^*,b^*) =\sum_{(a,b)} w(a)w(b) \pi(a^*,b^*) > \sum_{(a,b)} w(a)w(b) w(a,b)\pi(a^*,b^*)  = \sum_{(a,b)} \pi(a,b) = 1.$$ 

Using Lemma~\ref{lem:zBound} part (2) we have the following lower bound when $a$ and $b$ are both good,
\begin{align*}
  r(a,b)&=\frac{\pi(a,b)}{\pi(a)\pi(b)}=\frac{w(a)w(b)\pi(a^*,b^*)}{\pi(a)\pi(b)}\\
  &\geq \frac{w(a)w(b)\pi(a^*,b^*)}{
     \left(w(a)\pi(a^*,b^*)\zhat\right)\left(w(b)\pi(a^*,b^*)\ztilde\right)}\\
  &= \frac{1}{
     \zhat\ztilde\pi(a^*,b^*)} \geq(1-\epsilon_1)^2.
\end{align*}
In the last step we upper bound $\zhat\ztilde\pi(a^*,b^*)$ as follows using Lemma~\ref{lem:zBound} part (1).
\begin{equation}\label{prodBound}\zhat\ztilde\pi(a^*,b^*) \leq\frac{\zhat'\ztilde'\pi(a^*,b^*)}{ (1-\epsilon_1)^{2}}= \sum_{a~\text{good},\atop b~\text{good}} \frac{w(a)w(b) \pi(a^*,b^*)}{(1-\epsilon_1)^{2}}< \sum_{(a,b)}\frac{ \pi(a,b)}{(1-\epsilon_1)^{2}} = \frac{1}{(1-\epsilon_1)^{2}}.
\end{equation}

In order to apply Lemma~\ref{lem:directproduct}, we need to bound the quantity $\sum_{(a,b)}\pi(a,b)\left(\sqrt{r(a,b)}-\frac{1}{\sqrt{r(a,b)}}\right)^2.$  There are two cases depending on whether $r(a,b)\leq 1$, but either
way we have
\[\left(\sqrt{r(a,b)}-\frac{1}{\sqrt{r(a,b)}}\right)^2\leq
  \left(\frac{1}{1-\epsilon_1} - (1-\epsilon_1)\right)^2\leq
  5\epsilon_1^2,\]
  as long as $\epsilon_1\leq .191$ (this is true, since
  $\epsilon_1\leq 1/(6n^2)$ and $n\geq 2$.  Therefore
  \[\sum_{a~\text{good},\atop
    b~\text{good}}\pi(a,b)\left(\sqrt{r(a,b)}-\frac{1}{\sqrt{r(a,b)}}\right)^2\leq
  5\epsilon_1^2.\]

  Next, we consider the case that $a$ or $b$ is bad.  In this case, we
  will show that the weight of these configurations is so small that
  it overcomes the fact that $r(a,b)$ may not be close to 1.  If $r(a,b)\leq 1$ then
  $\pi(a,b)\left(\sqrt{r(a,b)}-\frac{1}{\sqrt{r(a,b)}}\right)^2\leq
  \pi(a)\pi(b)$.  Otherwise,
  $\pi(a,b)\left(\sqrt{r(a,b)}-\frac{1}{\sqrt{r(a,b)}}\right)^2\leq
  \frac{\pi(a,b)^2}{\pi(a)\pi(b)}.$  Either way,
 \[\pi(a,b)\left(\sqrt{r(a,b)}-\frac{1}{\sqrt{r(a,b)}}\right)^2\leq
 \pi(a)\pi(b) + \frac{\pi(a,b)^2}{\pi(a)\pi(b)}.\]
 
In order to upper bound $\frac{\pi(a,b)^2}{\pi(a)\pi(b)}$, we will use Lemma~\ref{lem:zBound} to first bound the ratio $\pi(a,b)/\pi(a)$ for \emph{good} $a$ and the ratio $\pi(a,b)/\pi(b)$ for all $b.$  Using Lemma~\ref{lem:zBound} part (3) we have for $a$ \emph{good}

$$\frac{\pi(a,b)}{\pi(a)} \leq \frac{w(a)w(b)w(a,b)\pi(a^*,b^*)}{w(a)\pi(a^*,b^*)\zhat'} \leq w(b)/\zhat'.$$
Using Lemma~\ref{lem:zBound} part (4) we have for all $b$
$$\frac{\pi(a,b)}{\pi(b)} \leq \frac{w(a)w(b)w(a,b)\pi(a^*,b^*)}{w(b)w(a^*,b)\pi(a^*,b^*)} \leq w(a).$$
 
  Next we will bound $\sum_{a,b} \frac{\pi(a,b)^2}{\pi(a)\pi(b)}$ for \emph{bad} $a$ and all $b$ using Lemma~\ref{lem:zBound} (part 1).

 \begin{eqnarray*}
 \sum_{a \text{ bad},b} \frac{\pi(a,b)^2}{\pi(a)\pi(b)} &=& \sum_{a \text{ bad}} \frac{1}{\pi(a)} \sum_b \frac{\pi(a,b)^2}{\pi(b)} \\
 &\leq& \sum_{a \text{ bad}} \frac{1}{\pi(a)}  \sum_b w(a)\pi(a,b)\\
 &= &  \sum_{a \text{ bad}} \frac{w(a)}{\pi(a)} \sum_b  \pi(a,b)\\
 &=& \sum_{a \text{ bad}} w(a) \leq \epsilon_1.
 \end{eqnarray*}
Next we will bound $\sum_{a,b} \frac{\pi(a,b)^2}{\pi(a)\pi(b)}$ for \emph{good} $a$ and \emph{bad} $b$ using Lemma~\ref{lem:zBound} (part 1).
  \begin{eqnarray*}
 \sum_{a \text{ good},b\text{ bad}} \frac{\pi(a,b)^2}{\pi(a)\pi(b)} &=& \sum_{b \text{ bad}}  \frac{1}{\pi(b)} \sum_{a \text{ good}} \frac{\pi(a,b)^2}{\pi(a)}\\
 &\leq&  \sum_{b \text{ bad}}  \frac{1}{\pi(b)} \sum_{a \text{ good}}w(b)\pi(a,b)/\zhat'\\
 &=&\sum_{b \text{ bad}}  \frac{ w(b)}{\pi(b)\zhat'} \sum_{a \text{ good}}\pi(a,b)\\
 &\leq& \sum_{b \text{ bad}} w(b) / \zhat' = \zhat''/\zhat'\\
 &\leq& \frac{\epsilon_1 \zhat}{(1-\epsilon_1)\zhat}= \frac{\epsilon_1}{1- \epsilon_1}.
 \end{eqnarray*}
 
\noindent Next, we will bound $\pi(a)\pi(b)$ for \emph{bad} $a$ and all $b$ using Lemma~\ref{lem:zBound} (part 1) and Equation~\ref{prodBound}
 \begin{eqnarray*}
 \sum_{a~\text{bad}, b} \pi(a)\pi(b) &\leq&  \sum_{a~\text{bad}, b} w(a)w(b)\pi(a^*,b^*)^2\ztilde \zhat \\
 &=& \ztilde \zhat\pi(a^*,b^*)^2  \sum_{a~\text{bad}} w(a) \sum_b w(b)\\
 &=&   \ztilde \zhat^2\pi(a^*,b^*)^2\ztilde''\\
 &\leq& \epsilon_1 (\ztilde \zhat\pi(a^*,b^*))(\zhat\pi(a^*,b^*))\\
 &\leq&\epsilon_1 (1-\epsilon_1)^{-3}.
 \end{eqnarray*}
 The last step uses the following bound on $\zhat\pi(a^*,b^*).$
 
 $$\zhat\pi(a^*,b^*) \leq \frac{\zhat'\ztilde'\pi(a^*,b^*)}{(1-\epsilon_1)}=\frac{1}{(1-\epsilon_1)}\sum_{a~\text{good},\atop b~\text{good}} w(a)w(b) \pi(a^*,b^*)  < \frac{1}{(1-\epsilon_1)}\sum_{(a,b)} \pi(a,b) =\frac{1}{(1-\epsilon_1)}.$$

 Finally, we will bound $\pi(a)\pi(b)$ for $b$ \emph{bad} and all $a$ using Lemma~\ref{lem:zBound} part (1) and Equation~\ref{prodBound}.\\
 \begin{eqnarray*}
 \sum_{a, b~\text{bad}} \pi(a)\pi(b) &\leq&  \sum_{a, b~\text{bad}} w(a)w(b)\pi(a^*,b^*)^2\ztilde \zhat \\
 &=& \ztilde \zhat\pi(a^*,b^*)^2  \sum_{b~\text{bad}} w(b) \sum_a w(a)\\
 &=&   \ztilde^2 \zhat\pi(a^*,b^*)^2\zhat''\\
 &\leq& \epsilon_1 (\ztilde \zhat\pi(a^*,b^*))^2\\
 &\leq&\epsilon_1 (1-\epsilon_1)^{-4}.
 \end{eqnarray*}
Putting this all together, we have
  \[\sum_{a, b}\pi(a,b)\left(\sqrt{r(a,b)}-\frac{1}{\sqrt{r(a,b)}}\right)^2\leq
  5\epsilon_1^2 + \epsilon_1 + \frac{\epsilon_1}{1-\epsilon_1} + \frac{\epsilon_1}{(1-\epsilon_1)^3}+  \frac{\epsilon_1}{(1-\epsilon_1)^4}\leq
  6\epsilon_1 = 1/n^2,\]
  as long as $\epsilon_1\leq .18$, which is true since $n\geq 2.$
Thus, we have shown $\Atilde$ and $\Ahat$ are $1/n$-orthogonal.


\subsection{The Markov chain $\mn$}\label{sec:mn}

Here we give the details to show how we can use Theorem~\ref{pprocess}, which gives a bound on the spectral gap of the particle process Markov chain $\mk$, to obtain a bound on the mixing time of the nearest-neighbor Markov chain $\mn.$

\vspace{.1in}
\noindent  {\bf The Nearest Neighbor Markov chain $\mn$ } 

\vspace{.05in}
\noindent {\tt Starting at any permutation $\sigma_0$, iterate the following:} 
\begin{itemize}
\item At time $t,$ choose a position $1< i\leq n$ uniformly at random.  
\item With probability $p_{\sigma_t(i), \sigma_t(i-1)}/2$, exchange the elements $\sigma_t(i)$ and $\sigma_t(i-1)$ to obtain $\sigma_{t+1}$.
\item Otherwise, do nothing so that $\sigma_{t+1} = \sigma_{t}.$
\end{itemize}

\noindent The chain $\mn$ connects the state space $\Omega$ and has the same stationary distribution as $\mk$ (see e.g.,~\cite{bmrs}).
We will use the bound on the
spectral gap of $\mk$ given by Theorem~\ref{pprocess} to prove the
following theorem. 

 \begin{theorem}\label{mnMixing} If the probabilities $\p$ are weakly monotonic and form a bounded \kP\, for $k\geq 2$, with $|C_i|\geq 2\psize$, then the mixing time $\tau_n$ of $\mn$ satisfies $\tau_n = O(n^{9}\ln(1/\epsilon)).$
\end{theorem}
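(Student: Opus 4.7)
The plan is to bound the mixing time of $\mn$ by first bounding its spectral gap via the comparison method of Diaconis--Saloff-Coste~\cite{dsc} (in the version of Randall--Tetali~\cite{RT98}), comparing $\mn$ to the particle process $\mk$, for which we already have $\Gap(\mk) = \Omega(n^{-2})$ by Theorem~\ref{pprocess}. The two chains are reversible with respect to the same stationary distribution $\pi$, so the comparison theorem is directly applicable: every transition of $\mk$ must be expressed as a canonical path of transitions of $\mn$, and the resulting congestion parameter gives a multiplicative loss in the gap.

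First I would define the canonical paths. A non-trivial transition of $\mk$ from $\sigma$ to $\sigma'$ exchanges $\sigma(i)$ and $\sigma(j)$ (with $i<j$) when every element strictly between positions $i$ and $j$ belongs to a strictly smaller particle class. To realize this in $\mn$, I would bubble $\sigma(i)$ rightward through the intermediate positions via $j-i$ nearest-neighbor transpositions and then bubble $\sigma(j)$ (now one step to the left of its original location) leftward via $j-i-1$ nearest-neighbor transpositions; the total length of the path is at most $2n$. Each intermediate step is a valid, positive-probability transition of $\mn$ because the probabilities are positively biased (elements of the larger class move past elements of a smaller class with probability $\geq 1/2$), and the ratios of $\pi$-weights along the path are bounded by a constant depending only on $q$.

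Next I would bound the congestion of these paths. For each directed edge $(\eta,\eta')$ of $\mn$, the number of $\mk$-transitions whose canonical path uses $(\eta,\eta')$ is $O(n)$, because the $\mk$-transition is determined by the pair of endpoints being moved and a bounded amount of extra information. Combining the $O(n)$ path length, the $O(n)$ edge congestion, the bounded ratio of transition probabilities $P_{\mk}(\sigma,\sigma')/P_{\mn}(\eta,\eta')$ (losing another factor of $O(n)$ because $P_{\mn}$ has a uniform $1/n$ factor for choosing a position), and the bounded ratio of stationary weights along the path, the comparison inequality yields
\[
\Gap(\mn) \;\geq\; \frac{\Gap(\mk)}{A} \;=\; \Omega\!\left(n^{-7}\right),
\]
where $A = O(n^5)$ is the resulting congestion constant.

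Finally, I would convert this spectral gap bound into the mixing-time bound via Theorem~\ref{gap}(a). Since the probabilities form a bounded $k$-class, every $q_{i,j}$ is in $[q,1/q]$ for a constant $q<1$, and so the minimum stationary weight satisfies $\pi_* \geq (q^{\binom{n}{2}} n!)^{-1}$, giving $\log(1/\pi_*) = O(n^2)$. Therefore
\[
\tau_n(\epsilon) \;\leq\; \frac{1}{\Gap(\mn)} \log\!\left(\frac{1}{\pi_* \epsilon}\right) \;=\; O\!\left(n^{7}\right)\cdot O\!\left(n^{2} + \log(1/\epsilon)\right) \;=\; O\!\left(n^{9} \ln(1/\epsilon)\right),
\]
as claimed. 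The main obstacle is being careful in the bookkeeping of the comparison constants (edge congestion, path length, ratios of transition probabilities, and ratios of stationary weights along paths), so that the exponent of $n$ lost in the comparison combines with the $n^{-2}$ from Theorem~\ref{pprocess} and the $O(n^2)$ from $\log(1/\pi_*)$ to give exactly $n^{9}$ rather than a larger polynomial; this is the same kind of accounting done in~\cite{MS}, so the argument parallels theirs with the improved input gap for $\mk$.
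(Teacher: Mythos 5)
Your plan has a genuine gap: you cannot directly compare $\mn$ to $\mk$ with the Diaconis--Saloff-Coste method, because they do not live on the same state space. The chain $\mn$ acts on all of $S_n$, whereas $\mk$ is defined on the quotient space of $k$-particle processes in which elements of the same class are indistinguishable. Equivalently, if you lift $\mk$ to $S_n$, it never swaps two elements from the same class (its transitions require $\C(\sigma_t(j)) > \C(\sigma_t(i))$ strictly), so the lifted chain is not irreducible on $S_n$ and its ``spectral gap'' there is zero; the comparison inequality $\Gap(\mn) \ge A^{-1}\Gap(\mk)$ would therefore be vacuous. This is exactly the obstruction the paper sidesteps by introducing the auxiliary chain $\mtk$, which acts on $S_n$, adds moves (direction $N$) that transpose adjacent elements of the same class, and is shown to be a direct product of $k+1$ independent chains: the within-class unbiased nearest-neighbor chains and the particle process $\mk$. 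Corollary~\ref{cor:product} then gives $\Gap(\mtk) = \Omega(n^{-3})$ (the lost factor of $n$ versus $\Omega(n^{-2})$ for $\mk$ coming from the slowed-down within-class chains), and the comparison from $\mn$ to $\mtk$ uses Lemma~\ref{lem_comp} of~\cite{MS}, which gives $A = O(n^4)$, hence $\Gap(\mn) = \Omega(n^{-7})$ and the claimed $O(n^9 \ln(1/\epsilon))$ mixing time.

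A secondary issue is the bookkeeping for $A$. You assert $A = O(n^5)$ from a heuristic count (length $O(n)$, congestion $O(n)$, probability ratio $O(n)$) without doing the weight-ratio analysis along the paths; the paper instead quotes the established $A = O(n^4)$ bound from~\cite{MS} for the comparison of $\mn$ with $\mtk$. Your numbers happen to land on $\Gap(\mn) = \Omega(n^{-7})$ anyway, but only because the errors (missing $\mtk$, and an over-by-one exponent on $A$) cancel; the argument as stated is not sound. The fix is to insert the intermediate chain $\mtk$, use the product structure plus Corollary~\ref{cor:product} to get $\Gap(\mtk)$, and then cite or re-derive the $A = O(n^4)$ comparison bound. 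After that, your last step (Theorem~\ref{gap}(a) with $\log(1/\pi_*) = O(n^2)$) is exactly what the paper does.
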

\begin{proof}
Given our improved bound on the mixing time of $\mk$ from
Theorem~\ref{pprocess}, the remainder of this proof uses exactly the
same approach as~\cite{MS}, except that we use Corollary~\ref{cor:product}
to eliminate one factor of $n$.  We include a summary here for
completeness.  The complete details can be found in~\cite{MS}.
Instead of analyzing $\mn$ directly we will analyze an auxiliary chain
$\mtk$ that allows a larger set of transpositions (including those
allowed by $\mk$) and then use comparison techniques~\cite{dsc,RT98}
to obtain a bound for $\mn$.   

\vspace{.1in}
\noindent  {\bf The Transposition Markov chain $\mtk$} 

\vspace{.05in}
\noindent {\tt Starting at any permutation $\sigma_0$, iterate the following:} 
\begin{itemize}
\item At time $t,$ choose $1\leq i \leq n$ and $d \in \{L, R,N\}$ uniformly at random.  
\item If $d = L$, find the largest $j$ with $1 \leq j<i$ and $\C(\sigma_t(j)) \geq \C(\sigma_t(i))$ (if one exists).  If $\C(\sigma_t(j)) > \C(\sigma_t(i)),$ then 
with probability $1/2$,
exchange $\sigma_t(i)$ and $\sigma_t(j)$ to obtain $\sigma_{t+1}.$  
\item If $d = R$, find the smallest $j$ with $n\geq j>i$ and $\C(\sigma_t(j)) \geq \C(\sigma_t(i))$ (if one exists).  If $\C(\sigma_t(j)) > \C(\sigma_t(i)),$ then with probability 
$$\frac{1}{2}~ q_{\sigma_t(j),\sigma_t(i)}  \prod_{i<k<j}\left(q_{\sigma_t(j),\sigma_t(k)}q_{\sigma_t(k),\sigma_t(i)}\right),$$
exchange $\sigma_t(i)$ and $\sigma_t(j)$ to obtain $\sigma_{t+1}.$  
\item If $d = N,$  find the largest $j$ with $1\leq j<i$ and $\C(\sigma_t(j)) = \C(\sigma_t(i)).$  If such an element exists, then with probability $1/2$, exchange the elements $\sigma_t(i)$ and $\sigma_t(j)$ to obtain $\sigma_{t+1}.$ 

\item Otherwise, do nothing so that $\sigma_{t+1} = \sigma_t.$
\end{itemize}

The Markov chain $\mtk$ has the same
stationary distribution as $\mn$ and is a product of
$k+1$ independent Markov chains~\cite{MS}.  The first~$k$ chains involve moves
between particles in the same class and the $i$th is an unbiased
nearest-neighbor Markov chain over permutations on $c_i$ particles.
This chain has spectral gap $\Theta(c_i^{-3})$ (this is an 
unpublished result of Diaconis. See, e.g.~\cite{wilson}):
\begin{Lemma}[Diaconis]
The spectral gap of the unbiased nearest neighbor Markov chain
over permutations on $[n]$ is $(1-\cos(\pi/n))/(n-1)=\Theta(n^{-3})$.
\end{Lemma}
\noindent However, the chain $\mtk$ updates one of the first $k$ chains
only if direction $N$ is selected,  which happens with probability $c_i/(6n)$
for each particle class $1\leq i \leq k.$  Thus, the spectral gap of
the slowed-down version of this chain is $\Theta(1/(nc_i^2))$.
The final chain $\mk$ which we analyzed in Section~\ref{sec:perm}
allows only moves between different particle classes; it is updated when
direction $L$ or $R$ is selected (i.e. with probability $2/3$), so by
Theorem~\ref{pprocess}, it has spectral gap  $\Omega(n^{-2})$.
Therefore, by Corollary~\ref{cor:product}, the spectral gap of $\mtk$
is $\Omega(n^{-3})$.

The final step is to relate the mixing time of $\mn$ to that of $\mtk$
using the following comparison theorem~\cite{dsc}.  Let $P'$ and $P$ be two
reversible Markov chains on the same state space $\Omega$ with the
same stationary distribution $\pi$ and let $E(P) = \{(x,y): P(x,y) >
0\}$ and $E(P') = \{(x,y): P'(x,y) > 0\}$ denote the sets of edges of
the two graphs, viewed as directed graphs.  For each $x,y$ with
$P'(x,y)>0$, define a path $\gamma_{xy}$ using a sequence of states
$x=x_0,x_1,\cdots,x_k = y$ with $P(x_i,x_{i+1})>0$, and let
$|\gamma_{xy}|$ denote the length of the path.  Let $\Gamma(z,w) =
\{(x,y) \in E(P'): (z,w) \in \gamma_{xy}\}$ be the set of paths that
use the transition $(z,w)$ of $P$.  Finally, define  
\abovedisplayskip=3pt
\belowdisplayskip=3pt
$$A = \max_{(z,w) \in E(P)} \left \{\frac{1}{\pi(z)P(z,w)}\sum_{\Gamma(z,w)}|\gamma_{xy}|\pi(x)P'(x,y) \right \}.$$

\begin{theorem}[\cite{dsc}]\label{thm_comp}
With the above notation,  $\Gap(P)\geq \frac{1}{A} \Gap(P')$.
  \end{theorem}

\noindent Section~5 of~\cite{MS} proves the following lemma.

\begin{Lemma}[\cite{MS}]\label{lem_comp} If the probabilities~$\p$ are
  weakly monotonic and form a bounded \kP\, for $k\geq 2$, then $A=O(n^4)$.
\end{Lemma}

\noindent Combining Lemma~\ref{lem_comp} and Theorem~\ref{thm_comp}
with our bound on the gap of $\mtk$, we get $\Gap(\mn)=\Omega(n^{-7})$.
Finally, to get the mixing time of $\mn$, we let $q_* =
\max_{i<j} p_{i,j}/p_{j,i}$ then $\pi_* =  \min_{x\in \Omega}\pi(x)
\geq (q_*^{\binom{n}{2}}n!)^{-1}$ (see \cite{bmrs} and \cite{MS} for
more details), so $\log(1/\epsilon \pi_*) = O(n^2\ln \epsilon^{-1})$
since $q$ is bounded from above by a positive constant.  Applying
Theorem~\ref{gap}(a) we have that the
mixing time of $\mn$ is $O(n^9\ln\epsilon^{-1})$.  This proves
Theorem~\ref{mnMixing}.
\end{proof}

\section*{Acknowledgement}
We wish to thank Dana Randall for several useful discussions about the decomposition method.
\bibliographystyle{plain}
\bibliography{Decomposition}
\end{document}